\newtheorem{theorem}{Theorem}
\newtheorem{corollary}[theorem]{Corollary}
\newtheorem{lemma}[theorem]{Lemma}
\newtheorem{proposition}[theorem]{Proposition}
\newenvironment{proof}[1][Proof]{\noindent\textbf{#1.} }{\hfill $\square$}
\begin{document}

\title{Forecast-Hedging and Calibration\thanks{%
Previous versions: April 2016; November 2019 (Center for Rationality
DP-731); June 2020. We thank Benjy Weiss for useful discussions, John Levy,
Efe Ok, Sylvain Sorin, and Bernhard von Stengel for references related to
Theorem \ref{th:hairy FP}, and the editor and referees for very helpful
suggestions.}}
\author{Dean P. Foster\thanks{%
Department of Statistics, Wharton, University of Pennsylvania, Philadelphia,
and Amazon, New York. \emph{e-mail}: \texttt{dean@foster.net} \emph{web page}%
: \texttt{http://deanfoster.net}} \and Sergiu Hart\thanks{%
Department of Economics, Department of Mathematics, and the Federmann Center
for the Study of Rationality, The Hebrew University of Jerusalem. \emph{%
e-mail}: \texttt{hart@huji.ac.il} \emph{web page}: \texttt{%
http://www.ma.huji.ac.il/hart}}}
\date{September 21, 2021}
\maketitle

\begin{abstract}
Calibration means that forecasts and average realized frequencies are close.
We develop the concept of forecast hedging, which consists of choosing the
forecasts so as to guarantee that the expected track record can only
improve. This yields all the calibration results by the same simple basic
argument, while differentiating between them by the forecast-hedging tools
used: deterministic and fixed point based versus stochastic and minimax
based. Additional contributions are an improved definition of continuous
calibration, ensuing game dynamics that yield Nash equilibria in the long
run, and a new calibrated forecasting procedure for binary events that is
simpler than all known such procedures.\newpage 
\end{abstract}

\tableofcontents

\def\@biblabel#1{#1\hfill}
\def\thebibliography#1{\section*{References}
\addcontentsline{toc}{section}{References}
\list
{}{
\labelwidth 0pt
\leftmargin 1.8em
\itemindent -1.8em
\usecounter{enumi}}
\def\newblock{\hskip .11em plus .33em minus .07em}
\sloppy\clubpenalty4000\widowpenalty4000
\sfcode`\.=1000\relax\def\baselinestretch{1}\large \normalsize}
\let\endthebibliography=\endlist%
\newcommand{\shfrac}[2]{\ensuremath{{}^{#1} \hspace{-0.04in}/_{\hspace{-0.03in}#2}}}%
\newcommand\T{\rule{0pt}{2.6ex}}
\newcommand\B{\rule[-1.2ex]{0pt}{0pt}}%

\section{Introduction\label{s:intro}}

Weather forecasters nowadays no longer say that \textquotedblleft it will
rain tomorrow" or \textquotedblleft it will not rain tomorrow"; rather, they
state that \textquotedblleft the chance that it will rain tomorrow is $x.$"
As long as $x$ lies strictly between $0$ and $1,$ they cannot be proven
wrong tomorrow, whether it rains or not. However, they can be proven wrong
over time. This is the case when a forecast, say $x=70\%,$ is repeated many
times, and the proportion of rainy days among those days when the forecast
was $70\%$ is far from $70\%.$

A forecaster is said to be (classically) \emph{calibrated }if, in the long
run, the actual proportions of rainy days are close to the forecasts
(formally, the average difference between frequencies and forecasts---the
calibration score---is small). A surprising result of Foster and Vohra
(1998) shows that one may always generate forecasts that are \emph{%
guaranteed to be calibrated}, no matter what the weather will actually be.%
\footnote{%
There are many proofs of the classic calibration result, some relatively
simple: besides Foster and Vohra (1998), see Hart (1995) (presented in
Section 4 of Foster and Vohra 1998), Foster (1999), Foster and Vohra (1999),
Fudenberg and Levine (1999), Hart and Mas-Colell (2000, 2013), and the
survey of Olszewski (2015).} These forecasts must necessarily be\emph{\
stochastic}; i.e., in each period the forecast $x$ is chosen by a
randomization\footnote{%
That may depend on the history of weather and forecasts.} (e.g., with
probability $1/3$ the forecaster announces that the chance of rain tomorrow
is $x=70\%,$ and with probability $2/3$ the forecaster announces that the
chance is $x=50\%),$ since deterministic forecasts cannot be calibrated
against \emph{all} possible future rain sequences\footnote{\label%
{ftn:rain-iff-<1/2}Consider the sequence where each day there is rain if and
only if the forecast of rain is less than $50\%.$} (cf. Dawid 1982 and Oakes
1985). The analysis is thus from a \textquotedblleft worst-case" point of
view, which is the same as if one were facing an adversarial
\textquotedblleft rain-maker."\footnote{%
Which connects to the related literature on the \textquotedblleft
manipulability of tests"; see Dekel and Feinberg (2006), Olszewski and
Sandroni (2008), and the survey of Olszewski (2015).}

Now the calibration score is discontinuous with respect to the forecasts, as
it considers days when the forecast was, say, $69.9\%,$ separately from the
days when the forecast was $70\%.$ Smoothing out the calibration score by
combining, in a continuous manner, the days when the forecast was \emph{%
close to}\textbf{\ }$x$ before comparing the frequency of rain to $x$ yields
a \emph{continuous calibration }score, which we introduce in Section \ref%
{sus:binning-cc}. The advantage of continuous calibration is that it may be
guaranteed by \emph{deterministic }forecasts (i.e., after every history
there is a single $x$ that is forecasted---in contrast to a probabilistic
distribution over $x$ in the classic calibration setup of the previous
paragraph). Similar concepts that appear in the literature, weak calibration
(Kakade and Foster 2004, Foster and Kakade 2006) and smooth calibration
(Foster and Hart 2018), are encompassed by continuous calibration (see
Appendix \ref{sus-a:smooth-calib}). While the existing proofs of
deterministic smooth and weak calibration are complicated, in the present
paper we provide a simple proof of deterministic continuous
calibration---and so of smooth and weak calibration as well. We thus propose
continuous calibration as the more appropriate concept: more natural, and
easier to analyze and guarantee.

In the present paper we identify specific conditions, which we refer to as 
\emph{forecast-hedging} conditions, that guarantee that the calibration
score will essentially not increase, whatever tomorrow's weather will be.%
\footnote{%
The use of the term \textquotedblleft hedging" here is akin to its use in
finance, where one deals with portfolios that are hedged against risks (by
using, say, appropriate options and derivatives).} Roughly speaking, they
amount to making sure that today's calibration errors will tend to go in the
opposite direction of past calibration errors (thus overshooting, where the
forecast is higher than the frequency of rain, is followed by undershooting,
and the other way around). This is illustrated in Section \ref%
{sus:illustration} below by a stylized simple version of forecast-hedging in
the basic binary rain/no rain setup. Interestingly, it turns out to yield a
new calibrated procedure in this one-dimensional case that is as simple as
can be (and is simpler than the one in Foster 1999); see Section \ref%
{s:1-dim} for the formal analysis.

We show, first, that the main calibration results in the literature
(classic, smooth, weak, almost deterministic, and continuous, introduced
here) all follow from the same simple argument based on forecast-hedging.
Second, we provide the appropriate forecast-hedging tools. In the classic
calibration setup, they correspond to optimal strategies in finite
two-person zero-sum games, whose existence follows from von Neumann's (1928) 
\emph{minimax} theorem, and which are mixed (i.e., stochastic) in general.
In the continuous calibration setup, they correspond to fixed points of
continuous functions, whose existence follows from Brouwer's (1912) \emph{%
fixed point} theorem, and which are deterministic. We refer to the resulting
procedures as procedures of \emph{type MM} and \emph{type FP}, respectively.
This forecast-hedging approach integrates the existing calibration results
by deriving them all from the same proof scheme, while clearly
differentiating\emph{\ }between the MM-procedures and the FP-procedures,
both in terms of the tool they use---minimax vs. fixed point---and in terms
of being stochastic vs. deterministic. Thus classic calibration is obtained
by MM-procedures, whereas continuous calibration, as well as almost
deterministic calibration, by FP-procedures. A further benefit of our
approach is the simple and straightforward proof that it provides of
deterministic continuous calibration, and thus of deterministic smooth
calibration (in contrast to the long and complicated existing proof).

While calibration is stated in terms of \textquotedblleft forecasting," our
forecast-hedging makes it clear that this is a misnomer, as there is no
actual prediction of rain or no rain tomorrow (indeed, such a prediction
cannot be accomplished without making some assumptions on the behavior of
the rain-maker). Rather, calibration obtains by what can be referred to as
\textquotedblleft \emph{back}casting" (instead of \emph{fore}casting):
forecast-hedging guarantees that the past track record can essentially only
improve, no matter what the weather will be.

\subsection{The Economic Utility of Calibration\label{sus:value-calib}}

Now, why would one consider calibration at all? Though some forecasts are
created just for fun (say, predicting a sports winner or a presidential
election), other forecasts drive decision making (say, predicting the chance
of rain or the chance of selling a million widgets). We will focus on
forecasts that have decisions attached to them. If the forecaster is the
same person as the decision maker then he can interpret the forecast in any
fashion he likes and still be consistent. But, when the forecaster is
different from the decision maker, it is desirable for them to be speaking
the same language. To make this concrete, consider the rain forecast that a
traveler hears on landing in a new city. Should an umbrella be unpacked and
made ready? Or is the weather nice enough not to need one? Locals may be
perfectly happy with a forecast that implies some set $\mathcal{U}$ such
that if $x\in \mathcal{U}$ then carrying an umbrella makes sense.\footnote{%
That is, the expected benefit of not being wet on a rainy day exceeds the
expected cost of carrying the umbrella---and perhaps losing it
someplace---on a sunny day.} But, pity our poor traveler who has to figure
out the set $\mathcal{U}$ without any history. Contrast this with the world
where the forecast in each city is known to be calibrated. Then our traveler
can figure out a rule, say, $x>70\%$, and dig his umbrella out if the
forecast is higher than $70\%$. Further, this works for both the timid
traveler who has a rule $x>20\%$ and the outdoors person with a rule of $%
x>99\%$. There can be many other wonderful properties of forecasts that we
could hope to have (accuracy or martingality to name two), but by merely
having calibration the forecasts are connected enough to outcomes to be
useful to decision makers.

Calibration thus allows one to separate the problem into two pieces: the
first is providing a forecast of the world, and the second is taking an
action that is rational given that forecast. This model is a good way of
factoring a business since a forecasting team doesn't need to understand the
nuances that go into the decision making, nor does the decision team need to
know the details of the most current statistical methods that go into making
the forecasts. There are details that the forecasting team will be
continuously worrying about, like whether a neural net is more accurate than
a decision tree or a simple regression. Likewise there are details that the
decision making team will be stressing over, like changing costs and
updating constraints. But, as long as they are communicating via calibrated
forecasts, these worries don't need to be exposed to the other team. The
forecasting team generates calibrated forecasts, and the optimization team
treats these forecasts as if they were probabilities and solves their
optimization problem. This factorization localizes information but still
generates a globally optimal outcome.\footnote{%
A real-life story from a large online retailer is that an old-fashioned ARMA
forecasting model was used for years. It was not calibrated and so the
optimization team had learned to buy more than the forecast suggested. When
the ARMA model was replaced by a modern neural net that was much more
accurate and also calibrated, the retailer lost money---until the
optimization team caught up with the change in the forecasting model. If
both forecasts had been calibrated, there would have been much less internal
stress, and the newer model would have been an easy immediate improvement.}
For a concrete example, consider Figure \ref{fig:bankruptcy}, from Foster
and Stine (2004). 
\begin{figure}[htbp] \centering%
\input{epsf} \epsfxsize=5.5in \epsfbox{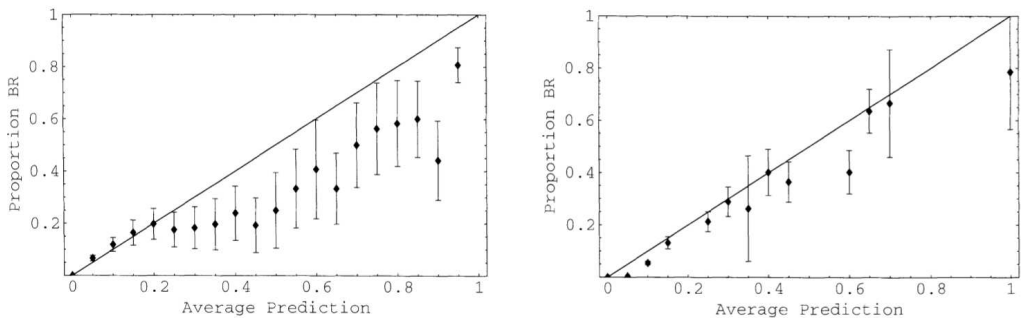}%
\caption{In Foster and Stine
(2004) the business problem was to forecast the chance of a person going
bankrupt in the next month. Both of the above forecasts are based on a large
linear model. The one on the left was obtained by a logistic regression; the one on the right, by a monotone
regression. The left-hand forecast is not calibrated, whereas the right-hand forecast is calibrated and so can be used directly for decision
making.\label{fig:bankruptcy}}%
\end{figure}%
It shows two forecasts of when a customer will go bankrupt. The calibrated
forecast (right side) is easy to use: a customer with a forecasted high
chance of bankruptcy shouldn't be extended further credit. The cutoff point
can be created using the costs and benefits to the firm. By contrast,
constructing a rule based on the uncalibrated forecast (left side) requires
actually doing some statistics to figure out what a forecast of, say,
\textquotedblleft $70\%$\textquotedblright\ means. The optimization team
would have to do some empirical statistics, and thus we have failed at
factoring the problem into two clean pieces.

Figure \ref{fig:bankruptcy} may incorrectly suggest that all we need to do
is map a forecast through an appropriate link function that gives the
corresponding average realization and all will be well. This is true for
cross-sectional data and for time-series data where the link function is
evaluated at a single point in time. But, in general, we would need
different such functions at different points in time. Phrased in terms of
our intrepid traveler, if he arrives for a second time at the same foreign
city, the rule he used on the first visit may no longer apply. But, if the
forecasts were calibrated, the same trivial rule would work for both visits.
Mathematically, this means that a calibrated forecast \emph{must} divide an
arbitrary sequence into a collection of subsequences (one for each forecast
value),\footnote{%
We refer to this as \textquotedblleft binning"; see Section \ref%
{sus:binning-cc}.} \emph{all} of which have a limit. This is the hard part.
The fact that we also require a calibrated forecast to know what this limit
is on each of these subsequences is a small restriction compared to
guaranteeing that there are no fluctuations over time and all these limits
exist.

Let us turn to the decision side of the problem. Sometimes the forecast is
so strong for rain,\footnote{%
While we continue to phrase the discussion in terms of rain for simplicity,
think of more meaningful circumstances, such as contextual bandits in
machine learning and personalized medicine in clinical trials.} that not
carrying an umbrella would entail a huge cost. Likewise, it might be that
the chance of rain is so low that carrying one would be too costly. Both of
these costs are relative to the best possible action one could take. But,
sometimes, the forecast is close to the fence and it doesn't really matter
which action is taken. This indifference (equipoise in bio-statistics)
allows one to consider randomizing between these two actions. This would
cheaply allow estimating the actual costs of each action. It would allow one
to compare what would happen if the counterfactual action were taken to what
happens if the action that is believed to be the correct action is taken.
For these reasons, there are many arguments for randomizing at the boundary.
Mathematically it can be thought of as continuously switching from taking an
umbrella (at the boundary plus epsilon) to never taking an umbrella (at the
boundary minus epsilon). If such a continuous response function is used,
then the classic definition of calibration is stronger than it needs to be.
Indeed, we only care about what the approximate value of the forecast is
since we will behave similarly for all such values. This is where continuous
calibration comes in.

Now what is the advantage of using a weaker notion of calibration
(continuous calibration is implied by classic calibration), which is also
more difficult to obtain (it requires a fixed point rather than a minimax
computation every period; see Section \ref{sus:fp-mm procedures}). The
answer is that weakening the calibration requirement allows one to achieve
the important property of \emph{leakiness} of Foster and Hart (2018);
namely, the forecasts remain calibrated even if the action in each period
depends on the forecast (which is the case when the forecast is revealed,
i.e., \textquotedblleft leaked," before the action is chosen). Indeed, for
deterministic procedures that yield continuous calibration, the fact that at
the start of each period $t$ the forecast at $t$ is already known (as it is
fully determined by the history before $t)$ does \emph{not} matter, as
continuous calibration is guaranteed for \emph{any} action. By contrast, for
stochastic procedures that yield classic calibration, at the start of period 
$t$ only the \emph{distribution }of the random forecast at $t$ is known, and 
\emph{not} its actual\emph{\ realization};\emph{\ }if the actual realization
were known, there would be action choices that would invalidate calibration,
as in footnote \ref{ftn:rain-iff-<1/2}. This distinction is underscored by
forecast-hedging, which holds for sure in the deterministic case, and only
in expectation in the stochastic case. It is just as in a two-person
zero-sum game, where an optimal mixed strategy is no longer optimal if the
opponent knows its pure realization, whereas an optimal pure\emph{\ }%
strategy remains so even if known (the same holds for mixed vs. pure Nash
equilibria). So to answer our question, we can trade off this weaker
requirement of calibration for a guarantee of leakiness. Since the weakening
doesn't decrease the value of the forecast for decision making, we have
gained leakiness at minimal cost.

Leakiness turns out to be the crucial property that is needed for \emph{game
dynamics} in general $n$-person games to give Nash equilibria rather than
correlated equilibria. Specifically,\footnote{%
The statements here should be understood with appropriate \textquotedblleft
approximate" adjectives throughout.} while best replying to calibrated
forecasts yields correlated equilibria as the long-run time average of play
(see Foster and Vohra 1997), we show in Section \ref{sus:cont-learn} that
best replying to deterministic continuously calibrated forecasts yields Nash
equilibria being played in most of the periods (see Kakade and Foster 2004
and Foster and Hart 2018 for earlier, somewhat more complicated, variants of
this result).

To return to forecasting, in numerous situations Bayesian methods are
optimal.\footnote{%
Dawid (1982) discusses the connection of calibration to posterior
probabilities, whereas here we want to connect it to the priors.} But, if
you are using the wrong prior, a lot of the charm of Bayesian methods is
lost and estimators that provide robust minimax protection might be
preferred. If we could estimate the prior, then a Bayesian approach sounds
pretty good. This is one of the motivations for empirical Bayesian methods
(see Berger 1985). Unfortunately, unless we are observing a sequence of
independently and identically distributed problems for which we can truly
believe there is a single prior that is common across a string of problems
(see Robbins 1956), then figuring out the prior to use for the next problem
is not easy. This is where calibration can play a part (see George and
Foster 2000). By guaranteeing the connection between the beliefs (our
forecasts) and the actual parameters, we can use a calibrated forecast to
make stronger claims about priors that are estimated in a sequential
empirical Bayes setting.

For a statistician or econometrician, not being calibrated is one of the
most embarrassing mistakes to make. Suppose we are trying to predict some
variable $Y$ based on a bunch of $X_{i}$'s. If it turns out that we could
get a much better fit by looking at $X_{17}/X_{12}$ than we currently are
getting, that would be considered a great scientific result and no one would
fault the previous work that missed it. But, if $3\widehat{Y},$ or $\widehat{%
Y}^{3},$ were better forecasts than the $\widehat{Y}$ provided by the
statistician, that would be an embarrassing error. Given the numerous ways
of correcting uncalibrated forecasts (see Zadrozny and Elkan 2001), people
would ask, \textquotedblleft Didn't you look at your forecast at all?" Of
course, when dealing with out-of-sample forecasts this can occur since the
world might change. Hence, the value of these calibration methods, which
sequentially adapt to a changing world, is to ensure we can avoid this
embarrassment.

Finally, regarding forecast-hedging: as it is an elementary principle, it
might perhaps help dispel some of the mystery behind the prevalence of
well-calibrated forecasts, such as the \textquotedblleft superforecasters"
of the Good Judgement Project (see Tetlock and Gardner 2015 and Mellers et
al. 2015), FiveThirtyEight (see Figure \ref{fig:538}), 
\begin{figure}[htbp] \centering%
\input{epsf} \epsfxsize=5.5in \epsfbox{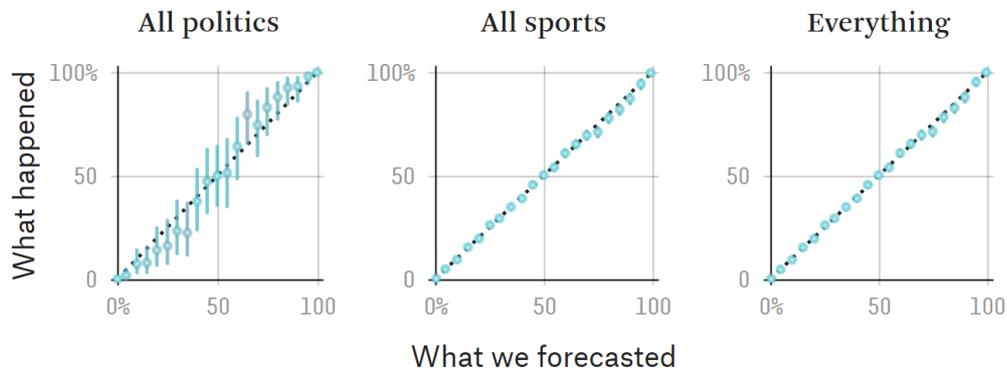}%
\caption{Calibration plots of
FiveThirtyEight (\texttt{projects.fivethirtyeight.com/checking-our-work}, updated on June 26, 2019). For example, 
in the
\textbf{Everything}  plot the $10\%$ data point (which lies slightly below
the diagonal) has the following attached description: \textquotedblleft We
thought the $107962$ observations in this bin had a $10\%$ chance of
happening. They happened $9\%$ of the time."\label{fig:538}}%
\end{figure}
ElectionBettingOdds\footnote{%
In such betting / market models, we see that calibration goes part way
toward the \textquotedblleft weak efficient market hypothesis" (wEMH). For
example, take the sequence of times where a stock price is above its
seven-day average and we are considering whether to buy it
(\textquotedblleft momentum") or sell it (\textquotedblleft mean
reversion"). If we had a forecast of the \textquotedblleft correct
price\textquotedblright\ then these could be expressed as saying
\textquotedblleft buy" when the forecast is above the price and
\textquotedblleft sell" when it is below. The property we would then want
such a forecast to have is merely calibration. Given how simple it is for
forecast-hedging to generate calibration, it is reasonable to expect many
traders to all discover something close to the same calibrated forecast and
hence push the market in that direction until the price is the same as the
forecast (while this would not generate the full wEMH, which requires its
holding for all\emph{\ }price patterns, it does go in that direction).} (see
Figure \ref{fig:betting}), 
\begin{figure}[htbp] \centering%
\input{epsf} \epsfxsize=5.5in \epsfbox{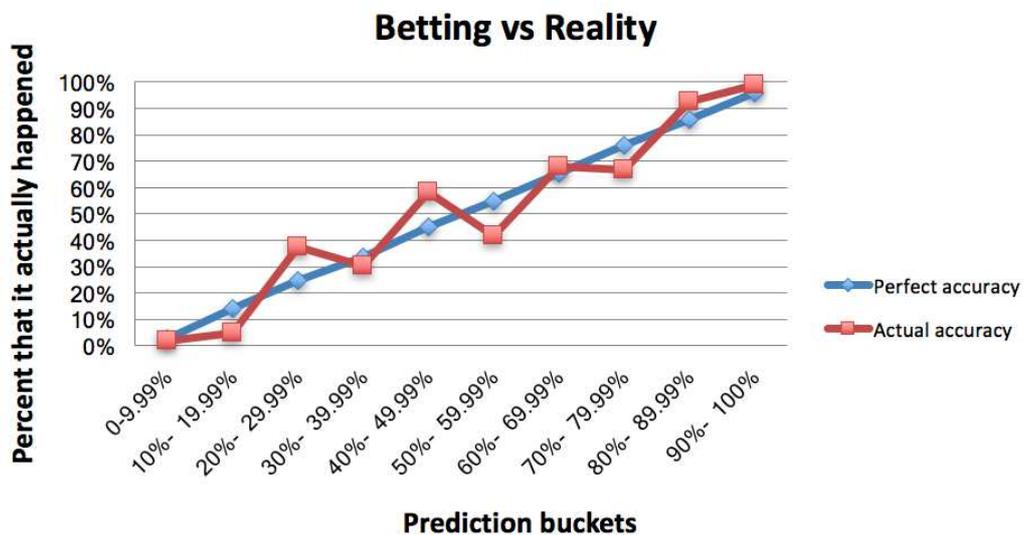}%
\caption{Calibration plot of
ElectionBettingOdds (\texttt{electionbettingodds.com/TrackRecord.html}, updated on November 13, 2018),
which \textquotedblleft  tracked some 462 different candidate chances across dozens of races
 and states in 2016 and 2018."\label{fig:betting}}%
\end{figure}
and others. Indeed, in most of these cases one forecasts binary yes/no
events, where, as we show in Sections \ref{sus:illustration} and \ref%
{s:1-dim}, forecast-hedging is extremely simple and straightforward to
implement.\footnote{%
Of course, we are not implying that forecast-hedging is what these
forecasters consciously do. What we are saying is that since calibration is
very easy to achieve, we should not be surprised by its being often
obtained. At the same time, it might be of interest to check if there is any
balancing of current and past forecasting errors, as in forecast-hedging
(see the discussion above where forecast-hedging is defined, and the
illustration in Section \ref{sus:illustration}). Finally, we note that
forecasters are tested not only by their calibration scores, but by stronger
measures of \textquotedblleft accuracy" or \textquotedblleft skill"
(specifically, their Brier scores).}

\subsection{Forecast-Hedging: A Simple Illustration\label{sus:illustration}}

Consider the basic rain/no rain setup---or, for that matter, any sequence of
arbitrary, possibly unrelated, yes/no events (as in the above-mentioned
projects)---and let the forecasts lie on the equally spaced grid $%
0,~1/N,~2/N,~...,~1$ for some integer $N\geq 1.$ Take period $t.$ For each
forecast $x$ let $n(x)\equiv n_{t-1}(x)$ be the number of days that $x$ has
been used in the past $t-1$ periods, and let $r(x)\equiv r_{t-1}(x)$ be the
number of rainy days out of those $n(x)$ days. If the forecast $x$ is
correct there should have been rain on $x\cdot n(x)$ out of the $n(x)$ days,
and so the excess number of rainy days at $x$ is\footnote{%
Think of $G(x)$ as the total \textquotedblleft gap" at $x;$ it may be
positive, zero, or negative. The vertical distance from the diagonal in the
calibration plot (as in Figures 1 and 2) is the normalized gap $G(x)/n(x).$} 
$G(x)\equiv G_{t-1}(x):=r(x)-x\cdot n(x)$. For simplicity consider the sum
of squares score\footnote{\label{ftn:t^2}We abstract away from technical
details, such as the appropriate normalizations, in this illustration; see
Sections \ref{s:calibrated procedures} and \ref{s:1-dim} for the precise
analysis. For the expert reader we note that the calibration score at time $%
t $ is $K_{t}=\sum_{x}|G_{t}(x)|/t$ (see Section \ref{s:calibration-def}),
which is small when $S_{t}/t^{2}$ is small (by the Cauchy--Schwartz
inequality). Note that a constant forecast of, say, $c=1/2$ yields $%
S_{t}=t^{2}/4$ in the worst case (where all days are rainy, or all days are
sunny), and thus a calibration score that is bounded away from zero.} $%
S\equiv S_{t-1}:=\sum_{x}G(x)^{2}.$

Let $a\equiv a_{t}$ denote the weather at time $t,$ with $a=1$ standing for
rain and $a=0$ for no rain, and let $c\equiv c_{t}$ in the interval $[0,1]$
denote the forecast at time $t.$ The change in the score $S$ from time $t-1$
to time $t$ is $S_{t}-S_{t-1}=(G(c)+a-c)^{2}-G(c)^{2}$ (the only term that
changes in the sum $S$ is the $G(c)$ term for the forecasted $c),$ whose
first-order approximation equals $2\Delta $ for\footnote{%
We ignore the term $(a-c)^{2}$, which is bounded by $1,$ since the total
contribution to $S_{t}$ of all these terms is at most $t,$ and thus
negligible relative to $t^{2}$ (see footnote \ref{ftn:t^2}).}%
\begin{equation}
\Delta 
{\;:=\;}%
G(c)\cdot (a-c).  \label{eq:gradient}
\end{equation}%
We would like to choose the forecast $c$ so that 
\begin{equation}
\Delta \equiv G(c)\cdot (a-c)\leq 0\;\;\;\text{for \emph{any} }a,
\label{eq:det-FH}
\end{equation}%
i.e., no matter what the weather will be. This is easy to do when there is a
point $c$ on the grid with $G(c)=0$: just forecast this $c.$ In general,
however, we can aim only to make the inequality $\Delta \leq 0$ hold \emph{%
on average},\emph{\ }by choosing the forecast at random:\footnote{%
This condition is reminiscent of the Blackwell (1956) approachability
condition in the regret-based approach to calibration of Hart and Mas-Colell
(2000).} 
\begin{equation}
\mathbb{E}\left[ \Delta \right] \equiv \mathbb{E}\left[ G(c)\cdot (a-c)%
\right] \leq 0\;\;\;\text{for \emph{any} }a.  \label{eq:Hedging}
\end{equation}%
This is what we call the \emph{forecast-hedging}\textbf{\ }condition
(condition (\ref{eq:det-FH}) is a special case of (\ref{eq:Hedging})).
Interestingly, this inequality seems to express the idea discussed in the
Introduction that the errors $a-c$ of the current forecast would tend to
have the opposite sign of the errors $G(c)$ of the past forecasts.

How can (\ref{eq:Hedging}) be obtained? Randomizing between two forecasts,
say $c_{1}$ with probability $p_{1}$ and $c_{2}$ with probability $%
p_{2}=1-p_{1},$ yields%
\begin{eqnarray*}
\mathbb{E}\left[ \Delta \right] &=&p_{1}G(c_{1})\cdot
(a-c_{1})+p_{2}G(c_{2})\cdot (a-c_{2}) \\
&=&[p_{1}G(c_{1})+p_{2}G(c_{2})]\cdot (a-c_{2})+p_{1}G(c_{1})\cdot
(c_{2}-c_{1}).
\end{eqnarray*}%
We can thus guarantee $\mathbb{E}\left[ \Delta \right] $ to be small, no
matter what $a$ will be, by choosing the $c_{k}$ and $p_{k}$ so that, first, 
\begin{equation}
p_{1}G(c_{1})+p_{2}G(c_{2})=0,  \label{eq:E[G]=0}
\end{equation}%
and, second, $c_{2}-c_{1}$ is small.\footnote{%
This turns out to suffice because $c_{2}-c_{1}$ is multiplied by $%
p_{1}G(c_{1}),$ which is of the order of $t;$ again, see Sections \ref%
{s:calibrated procedures} and \ref{s:1-dim} for details. The size of the
calibration error is determined by the distance between $c_{1}$\ and $c_{2}$.%
}

Specifically, working on the grid $0,~1/N,~2/N,~...,~1,$ we obtain these
forecasts as follows. If $G(j/N)=0$ for some $j,$ then take $c=j/N,$ which
makes $\Delta =0$. Otherwise $G(i/N)\neq 0$ for all $i,$ and so let $j\geq 1$
be any index with $G(j/N)<0$ (such a $j$ exists because $G(0)>0$ and $%
G(1)<0) $ and take $c_{1}=(j-1)/N$ and $c_{2}=j/N$ (and thus $%
G(c_{1})>0>G(c_{2})),$ with the $p_{k}$ inversely proportional to $%
|G(c_{k})| $ (i.e., as given by (\ref{eq:E[G]=0})). Figure \ref{fig1} 
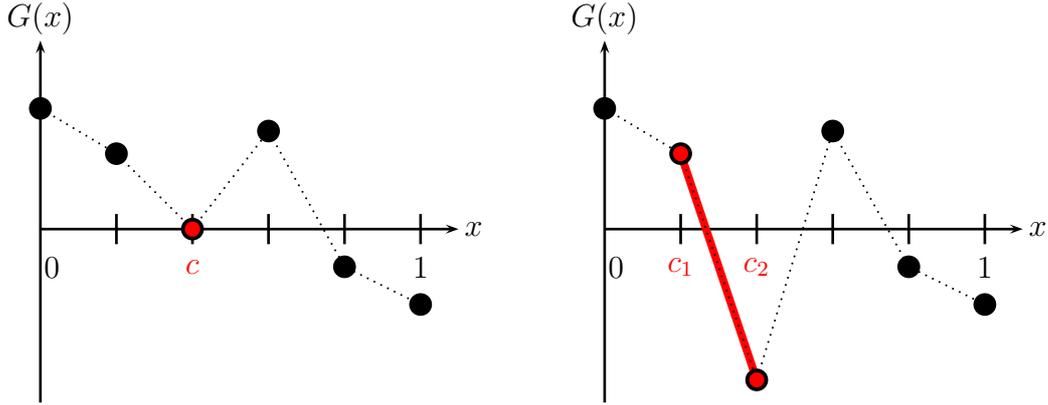
\begin{figure}[tbp] \centering%
\begin{pspicture}(-0.5,1)(12,6.5)
    \psset{xunit=0.5cm}
    \psline[linewidth=1pt]{->}(0,4)(11,4)\rput(11.4,4){$x$}
    \psline[linewidth=1pt]{->}(0,1.7)(0,6.5)\rput(0,6.8){$G(x)$}
    \rput(0.3,3.5){$0$}
    \rput(10,3.5){$1$}
    \psline[linewidth=1pt]{}(10,3.8)(10,4.2)
    \psline[linewidth=1pt]{}(2,3.8)(2,4.2)
    \psline[linewidth=1pt]{}(4,3.8)(4,4.2)
    \psline[linewidth=1pt]{}(6,3.8)(6,4.2)
    \psline[linewidth=1pt]{}(8,3.8)(8,4.2)
    \psline[linestyle=dotted,dotsep=2pt](0,5.6)(2,5)(4,4)(6,5.3)(8,3.5)(10,3)
    \pscircle*[linecolor=black](0,5.6){0.15}
    \pscircle*[linecolor=black](2,5){0.15}
    \pscircle*[linecolor=black](4,4){0.15}
    \pscircle*[linecolor=black](6,5.3){0.15}
    \pscircle*[linecolor=black](8,3.5){0.15}
    \pscircle*[linecolor=black](10,3){0.15}
    \rput(4,3.5){{\red $c$}}
    \pscircle*[linecolor=red](4,4){0.1}
    \end{pspicture}\hspace*{-2in}%
\begin{pspicture}(-0.5,1)(12,6.5)
    \psset{xunit=0.5cm}
    \psline[linewidth=1pt]{->}(0,4)(11,4)\rput(11.4,4){$x$}
    \rput(0.3,3.5){$0$}
    \rput(10,3.5){$1$}
    \psline[linewidth=1pt]{}(10,3.8)(10,4.2)
    \psline[linewidth=1pt]{->}(0,1.7)(0,6.5)
    \rput(0,6.8){$G(x)$}
    \psline[linewidth=1pt]{}(2,3.8)(2,4.2)
    \psline[linewidth=1pt]{}(4,3.8)(4,4.2)
    \psline[linewidth=1pt]{}(6,3.8)(6,4.2)
    \psline[linewidth=1pt]{}(8,3.8)(8,4.2)
\psline[linewidth=3pt,linecolor=red]{c-}(2,5)(4,2)
    \psline[linestyle=dotted,dotsep=2pt](0,5.6)(2,5)(4,2)(6,5.3)(8,3.5)(10,3)
    \pscircle*[linecolor=black](0,5.6){0.15}
    \pscircle*[linecolor=black](2,5){0.15}
    \pscircle*[linecolor=black](4,2){0.15}
    \pscircle*[linecolor=black](6,5.3){0.15}
    \pscircle*[linecolor=black](8,3.5){0.15}
    \pscircle*[linecolor=black](10,3){0.15}
    \rput(2,3.5){{\red $c_1$}}
    \rput(4,3.5){{\red $c_2$}}
    \pscircle*[linecolor=red](4,2){0.1}
    \pscircle*[linecolor=red](2,5){0.1}
    \end{pspicture}%
\caption{Examples of forecast-hedging. The forecasts $x$
are marked on the horizontal axis, and the gaps $G(x)$ on the vertical axis.
On the left, deterministic forecast-hedging is obtained by forecasting $c$
with $G(c)=0$; on the right, stochastic forecast-hedging is obtained by
randomizing among the forecasts $c_1$ and $c_2$ with probabilities $p_1$ and
$p_2$ such that $p_1G(c_1)+p_2G(c_2)=0$. \label{fig1}}%
\end{figure}
provides two examples of graphs of $G$ (for $N=6;$ the dotted lines provide
linear interpolation). On the left we have $c$ on the grid with $G(c)=0,$
which yields the perfect deterministic hedging of (\ref{eq:det-FH}), and on
the right we have adjacent $c_{1},c_{2}$ on the grid with $%
G(c_{1})>0>G(c_{2}),$ which yields the approximate stochastic hedging of (%
\ref{eq:Hedging}).

One of course needs to keep track of all the approximation errors, but,
surprisingly, the procedure described here does work: it guarantees an
average calibration error that goes to $0$ as the grid size $N$ increases;
see Section\footnote{%
We provide there a slight variant that uses the normalized errors $e$
instead of the gaps $G$; it is just as simple, and guarantees the minimal
possible calibration error of $1/(2N)$ (whereas, for the $G$-based procedure
here, the error is of the order of $1/\sqrt{N}).$} \ref{s:1-dim}. It turns
out to be a new addition to the literature, and simpler than any existing
calibrated procedure in this one-dimensional binary setup (i.e., rain/no
rain). Moreover, while it involves randomizations (it must!), the
randomizations are all between two neighboring grid points ($(j-1)/N$ and $%
j/N),$ and so this procedure is an \emph{almost deterministic} procedure
(Foster 1999, Kakade and Foster 2008).

Forecast-hedging is central to all the calibration results in the present
paper, in higher dimensions as well. Specifically:

\begin{itemize}
\item For classic calibration, probabilistic weights $p_{k}$ that ensure
that $\mathbb{E}\left[ \Delta \right] $ is small are obtained using a
minimax result; this is \emph{stochastic} \emph{forecast-hedging}.

\item For continuous calibration, where the corresponding function $G$
becomes continuous, a deterministic point $c$ that ensures $\Delta \leq 0$
(a special case of which is $G(c)=0)$ is obtained by a fixed point result;
this is \emph{deterministic forecast-hedging}.

\item Again for classic calibration, an almost deterministic forecast is
obtained by replacing the fixed point with an appropriate distribution on
nearby grid points (as done above in the one-dimensional case); this is 
\emph{almost deterministic forecast-hedging.}
\end{itemize}

\subsection{The Organization of the Paper}

The paper is organized as follows. Section \ref{s:calibration-def} presents
the general calibration setup, and introduces the new concept of
\textquotedblleft continuous calibration." Section \ref{s:FH tools} is
devoted to what we call \textquotedblleft outgoing" theorems, which provide
the forecast-hedging tools that are used to obtain the calibration results
in Section \ref{s:calibrated procedures}. The simple procedure in the
one-dimensional case is given in Section \ref{s:1-dim}. In Section \ref%
{s:dynamics} we show that the game dynamics of best replying to continuously
calibrated forecasts---\textquotedblleft continuously calibrated
learning"---yield Nash equilibria, and we conclude in Section \ref{sus:fp-mm
universes} with the significant distinction made here between the minimax
and the fixed point universes. The Appendix provides further details,
proofs, and extensions.

\section{The Calibration Setup\label{s:calibration-def}}

Let $A$ be a set of possible outcomes, which we call \emph{actions} (such as 
$A=\{0,1\},$ with $a=1$ standing for rain and $a=0$ for shine), and let $C$
be the set of \emph{forecasts} about these actions (such as $C=[0,1],$ with $%
c$ in $C$ standing for \textquotedblleft the chance of rain is $c$"). We
assume that\footnote{$\mathbb{R}^{m}$ denotes the $m$-dimensional Euclidean
space, with the usual Euclidean ($\ell ^{2}$) norm $\left\Vert \cdot
\right\Vert .$} $C\subset \mathbb{R}^{m}$ is a nonempty compact convex
subset of a Euclidean space, and $A\subseteq C.$ Some special cases of
interest are: (i) $C$ is the set of probability distributions $\Delta (A)$
over a finite set $A,$ which is identified with the set of unit vectors in $%
C $ (and then $C$ is a unit simplex); (ii) $C$ is the convex hull $\mathrm{%
conv}(A)$ of a finite set of points $A\subset \mathbb{R}^{m}$ (and then $C\ $%
is a polytope); and (iii) $C=A$ (and then $A$ is already convex). Let $%
\gamma :=\mathrm{diam}(C)\equiv \max_{c,c^{\prime }\in C}\left\Vert
c-c^{\prime }\right\Vert $ be the diameter of the set $C.$ Let $\delta >0;$
a subset $D$ of $C$ is a $\delta $-\emph{grid} of $C$ if for every $c\in C$
there is $d\in D$ at distance less than $\delta $ from $c,$ i.e., $%
\left\Vert d-c\right\Vert <\delta ;$ a compact set $C$ always has a finite $%
\delta $-grid (obtained from a finite subcover by open $\delta $-balls).

For each period $t=1,2,...,$ let $c_{t}$ in $C$ be the forecast, and let $%
a_{t}$ in $A$ be the action. The forecast at time $t$ may well depend on the
history $h_{t-1}=(c_{1},a_{1};c_{2},a_{2};...;c_{t-1},a_{t-1})\in (C\times
A)^{t-1}$ of past forecasts and actions. A \emph{deterministic }(\emph{%
forecasting}) \emph{procedure }$\sigma $ is thus a mapping $\sigma :\cup
_{t\geq 1}(C\times A)^{t-1}\rightarrow C$ that assigns to every history $%
h_{t-1}$ a forecast $c_{t}=\sigma (h_{t-1})\in C$ at time $t.$ A \emph{%
stochastic }(\emph{forecasting}) \emph{procedure }$\sigma $ is a mapping $%
\sigma :\cup _{t\geq 1}(C\times A)^{t-1}\rightarrow \Delta (C)$ that assigns
to every history $h_{t-1}$ a probability distribution $\sigma (h_{t-1})$ on $%
C$ according to which the forecast $c_{t}$ at time $t$ is chosen. Let $\rho
>0;$ a stochastic procedure $\sigma $ is $\rho $-\emph{almost deterministic}
if for every history $h_{t-1}$ the support of the distribution $\sigma
(h_{t-1})$ of $c_{t}$ is included in a closed ball of radius $\rho $; that
is, the forecast $c_{t}$ is \textquotedblleft deterministic within a
precision $\rho ."$

We refer to $\mathbf{a}=(a_{t})_{t=1}^{\infty },$ where $a_{t}\in A$ for
every $t,$ as an \emph{action sequence}; the sequence may be anything from a
fixed, \textquotedblleft oblivious," sequence, all the way to an adaptive,
\textquotedblleft adversarial," sequence; the latter allows the action $%
a_{t} $ at time $t$ to be determined by the history $h_{t-1},$ as well as by
the forecasting procedure (i.e., the mapping\footnote{\label{ftn:strategies}%
See the remark following the definition of forecast-hedging in Section \ref%
{sus:FH}. In the setup of the \emph{calibration game} (see Foster and Hart
2018), which is a repeated simultaneous game of perfect monitoring and
perfect recall between the action player and the calibrating player (the
forecaster), the statement \textquotedblleft for every action sequence $%
\mathbf{a}$" translates to \textquotedblleft for every (pure) strategy of
the action player."} $\sigma ).$ Let $\mathbf{a}_{t}=(a_{s})_{s=1}^{t}$
denote the first $t$ coordinates of $\mathbf{a}.$

\subsection{Classic Calibration\label{sus:classic calib-def}}

Fix a time $t\geq 1$ and a sequence $(c_{s},a_{s})_{s=1,...,t}\in (C\times
A)^{t}$ of forecasts and actions up to time $t.$ For every $x$ in $C$ let%
\footnote{%
We write $\mathbf{1}_{x}$ for the $x$-indicator function, i.e., $\mathbf{1}%
_{x}(c)=1$ for $c=x$ and $\mathbf{1}_{x}(c)=0$ for $c\neq x.$ The number of
elements of a finite set $Z$ is denoted by $|Z|.$}%
\begin{equation*}
n_{t}(x)%
{\;:=\;}%
|\{1\leq s\leq t:c_{s}=x\}|=\sum_{s=1}^{t}\mathbf{1}_{x}(c_{s})
\end{equation*}%
be the number of times that the forecast $x$ has been used, and, for every $%
x $ with $n_{t}(x)>0,$ let%
\begin{equation*}
\overline{a}_{t}(x)%
{\;:=\;}%
\frac{1}{n_{t}(x)}\sum_{s=1}^{t}\mathbf{1}_{x}(c_{s})\,a_{s}
\end{equation*}%
be the average of the actions in all the periods that the forecast $x$ has
been used. The \emph{calibration error} $e_{t}(x)$ of the forecast $x$ is
then%
\begin{equation*}
e_{t}(x)%
{\;:=\;}%
\overline{a}_{t}(x)-x
\end{equation*}%
(when the forecast $x$ has not been used, i.e., $n_{t}(x)=0,$ we put for
convenience $e_{t}(x)%
{\;:=\;}%
0).$

The classic \emph{calibration score }is the average calibration error,
namely,\footnote{%
The sum is finite as it goes over all $x$ with $n_{t}(x)>0,$ i.e., over $x$
in the set $\{c_{1},...,c_{t}\}$. In line with standard statistics usage,
one may average the \emph{squared} Euclidean norms $||e_{t}(x)||^{2}$
instead (cf. $X_{t}$ in the proof of Theorem \ref{th:FH}(S)); this will not
affect the results.}%
\begin{equation}
K_{t}%
{\;:=\;}%
\sum_{x\in C}\left( \frac{n_{t}(x)}{t}\right) \left\Vert e_{t}(x)\right\Vert
;  \label{eq:K}
\end{equation}%
thus, the error of each $x$ is weighted in proportion to the number of times 
$n_{t}(x)$ that $x$ has been used (the weights add up to $1$ because $%
\sum_{x}n_{t}(x)=t$).

Let $\varepsilon >0;$ a (stochastic) procedure $\sigma $ is $\varepsilon $%
\emph{-calibrated} (Foster and Vohra 1998) if\footnote{%
The calibration score $K_{t}$ depends on the actions and forecasts up to
time $t,$ and is thus a function $K_{t}\equiv K_{t}(\mathbf{a},\sigma )$ of
the action sequence $\mathbf{a}$ and the forecasting procedure $\sigma $ (in
fact, only $\mathbf{a}_{t}$ and $\sigma _{t},$ the restriction of $\sigma $
to histories up to time $t,$ matter for $K_{t}$). The same applies to the
other scores throughout the paper.}%
\begin{equation*}
\varlimsup_{t\rightarrow \infty }\left( \sup_{\mathbf{a}_{t}}\mathbb{E}\left[
K_{t}\right] \right) \leq \varepsilon
\end{equation*}%
(the expectation $\mathbb{E}$ is taken over the random forecasts of $\sigma
).$ In Appendix \ref{sus-a:prob1} we show that one may make $K_{t}$ small
with probability one (i.e., almost surely), not just in expectation.

\subsection{Binning and Continuous Calibration\label{sus:binning-cc}}

The calibration error $e_{t}(x)$ can be rewritten as%
\begin{equation*}
e_{t}(x)=\sum_{s=1}^{t}\left( \frac{\mathbf{1}_{x}(c_{s})}{n_{t}(x)}\right)
(a_{s}-c_{s})
\end{equation*}%
(because $\sum_{s=1}^{t}\mathbf{1}_{x}(c_{s})c_{s}=n_{t}(x)x);$ thus, $%
e_{t}(x)$ is the average of the differences $a_{s}-c_{s}$ between actions
and forecasts, where only the periods $s$ where the forecast was $x$ count.

The calibration score, as defined by (\ref{eq:K}), can then be interpreted
as follows. For each $x$ in $C$ there is a \emph{bin}, call it the
\textquotedblleft $x$-bin," which tracks the errors of the forecast $x$;
namely, if at time $s$ the forecast is $c_{s}$ and the action is $a_{s},$
then the difference $a_{s}-c_{s}$ between the action and the forecast is
assigned to the $c_{s}$-bin. At time $t$ one computes the average error $%
e_{t}(x)$ of each $x$-bin, and then the calibration score $K_{t}$ is the
average norm of these errors, where the weight of each $x$-bin is
proportional to its size, i.e., to the number of elements $n_{t}(x)$ that it
contains.

As discussed in the Introduction, the resulting calibration score is highly
discontinuous: forecasts $c$ and $c^{\prime },$ even when slightly apart,
are tracked separately, in distinct bins. To smooth this out and treat them
similarly, we have to, first, allow for \textquotedblleft fractional"
assignments into bins, and, second, make these assignments depend
continuously on the forecast $c$.

What then is a general binning system? It is given by the fraction $0\leq
w_{i}(c)\leq 1$ of each forecast $c$ that goes into each bin $i,$ where
these fractions add up to $1$ over all bins (for each $c).$ We assume for
convenience that the number of bins is countable, i.e., finite or countably
infinite; there is no loss of generality in this assumption, as we show in
Appendix \ref{sus-a:cont-calib}. A \emph{binning }is thus a collection $\Pi
=(w_{i})_{i=1}^{I},$ with $I$ finite or $I=\infty ,$ of functions $%
w_{i}:C\rightarrow \lbrack 0,1]$ such that 
\begin{equation*}
\sum_{i=1}^{I}w_{i}(c)=1
\end{equation*}%
for every $c\in C;$ the binning is \emph{continuous} if all the functions $%
w_{i}$ are continuous functions of $c$.

A continuous binning is obtained, for instance, by taking points $y_{i}$ in $%
C$ and letting the fraction of forecast $c$ that goes into the $y_{i}$-bin
decrease continuously with the distance between $c$ and $y_{i}.$ For a
specific example in which only small neighborhoods matter, take $%
\{y_{1},...,y_{I}\}$ to be a finite $\delta $-grid of $C$ and put $%
w_{i}(c):=\Lambda (c,y_{i})/\sum_{j=1}^{I}\Lambda (c,y_{j})$ for each $1\leq
i\leq I,$ where\footnote{%
For fixed $y,$ the graph of the so-called \textquotedblleft tent function" $%
\Lambda (c,y)$ looks like the symbol $\Lambda $ (with the peak at $c=y$).} $%
\Lambda (c,y):=[\delta -\left\Vert c-y\right\Vert ]_{+}.$

Next, what is the calibration score $K_{t}^{\Pi }$ with respect to a
(continuous) binning $\Pi =(w_{i})_{i=1}^{I}$ ? As for the classic
calibration score $K_{t},$ one first computes the average error in each bin,
and then takes the average norm of these errors, in proportion to the total
weights accumulated in the bins. The total weight of bin $i$ is 
\begin{equation*}
n_{t}^{i}%
{\;:=\;}%
\sum_{s=1}^{t}w_{i}(c_{s}),
\end{equation*}%
the average error of bin $i$ is 
\begin{equation*}
e_{t}^{i}%
{\;:=\;}%
\sum_{s=1}^{t}\left( \frac{w_{i}(c_{s})}{n_{t}^{i}}\right) (a_{s}-c_{s})
\end{equation*}%
(again, put $e_{t}^{i}%
{\;:=\;}%
0$ when $n_{i}^{t}=0),$ and the $\Pi $-calibration score is\footnote{%
For a continuous binning, the bin errors $e_{t}^{i}$ are continuous averages
of the classic calibration errors $e_{t}(x),$ namely,%
\begin{equation*}
e_{t}^{i}=\sum_{x\in C}\left( \frac{w_{i}(x)n_{t}(x)}{n_{t}^{i}}\right)
e_{t}(x);
\end{equation*}%
thus, continuous binnings do indeed capture the idea of smoothing out the
calibration errors (as in the above example with the $\delta $-grid $%
\{y_{i}\}$ on $C).$}%
\begin{equation}
K_{t}^{\Pi }%
{\;:=\;}%
\sum_{i=1}^{I}\left( \frac{n_{t}^{i}}{t}\right) \left\Vert
e_{t}^{i}\right\Vert  \label{eq:K-Pi}
\end{equation}%
(the weights $n_{t}^{i}/t$ add up to $1,$ because $\sum_{i=1}^{I}n_{t}^{i}=%
\sum_{s=1}^{t}\sum_{i=1}^{I}w_{i}(c_{s})=t).$

A deterministic procedure $\sigma $ is $\Pi $\emph{-calibrated} if%
\begin{equation*}
\lim_{t\rightarrow \infty }\left( \sup_{\mathbf{a}_{t}}K_{t}^{\Pi }\right)
=0,
\end{equation*}%
and it is \emph{continuously calibrated} if it is $\Pi $-calibrated for
every continuous binning $\Pi .$

Compared with classic calibration, continuous calibration requires the
convergence to be to zero (rather than $\leq \varepsilon ),$ simultaneously
for all continuous\footnote{%
One could get uniformity over binnings $\Pi $ by restricting them to a
compact space (for instance, by imposing a uniform Lipschitz condition on
the $w_{i},$ as in weak and smooth calibration).} $\Pi $.

\subsection{Gaps and Preliminary Results\label{sus:gaps}}

Rather than working with the normalized errors, it is convenient to work
with unnormalized \textquotedblleft gaps." For every real function on $C,$
i.e., $w:C\rightarrow \mathbb{R},$ and $t\geq 1,$ let%
\begin{equation*}
g_{t}(w)%
{\;:=\;}%
\frac{1}{t}\sum_{s=1}^{t}w(c_{s})(a_{s}-c_{s})
\end{equation*}%
be the (\emph{per-period})\emph{\ gap} at time $t$ with respect to $w$ (when 
$w=\mathbf{1}_{x}$ this is the total gap $G(x)$ of Section \ref%
{sus:illustration} divided by the number of periods). We extend the
definitions of $n_{t}$ and $e_{t}$ by 
\begin{equation*}
n_{t}(w)%
{\;:=\;}%
\sum_{s=1}^{t}w(c_{s})\text{\ \ \ \ and\ \ \ }e_{t}(w)%
{\;:=\;}%
\frac{1}{n_{t}(w)}\sum_{s=1}^{t}w(c_{s})(a_{s}-c_{s})
\end{equation*}%
for every $w,$ and then the relation $g_{t}(w)=(n_{t}(w)/t)e_{t}(w)$
immediately yields%
\begin{equation*}
K_{t}=\sum_{x\in C}\left\Vert g_{t}(\mathbf{1}_{x})\right\Vert \text{\ \ \ \
and\ \ \ }K_{t}^{\Pi }=\sum_{i=1}^{I}\left\Vert g_{t}(w_{i})\right\Vert
\end{equation*}%
(indeed, for $K_{t}$ we have $n_{t}(x)\equiv n_{t}(\mathbf{1}_{x})$ and $%
e_{t}(x)\equiv e_{t}(\mathbf{1}_{x}),$ and for $K_{t}^{\Pi }$ we have $%
n_{t}^{i}\equiv n_{t}(w_{i})$ and $e_{t}^{i}\equiv e_{t}(w_{i})$).

For every function $w$ the vectors $e_{t}(w)$ and $g_{t}(w)$ are
proportional; they differ in that the denominator is $n_{t}(w)$ in the
former, and $t,$ which is larger, in the latter. The calibration scores are 
\emph{averages} of the norms of $e_{t},$ and \emph{sums} of the norms of $%
g_{t}.$ One advantage of the $g_{t}$ representation is that we do not need
to keep track explicitly of the total weights\footnote{%
In particular, when $n_{t}(w)$ vanishes so does $g_{t}(w).$} $n_{t}.$
Another is that, fixing the sequence of actions and forecasts, the mapping $%
g_{t}$ is a linear bounded operator: $g_{t}(\alpha w+\alpha ^{\prime
}w^{\prime })=\alpha g_{t}(w)+\alpha ^{\prime }g_{t}(w^{\prime })$ for
scalars $\alpha ,\alpha ^{\prime }\in \mathbb{R},$ and, using the supremum
norm $\left\Vert w\right\Vert :=\sup_{c\in C}|w(c)|$ for functions $%
w:C\rightarrow \mathbb{R}$, we have%
\begin{equation*}
\left\Vert g_{t}(w)\right\Vert \leq \gamma \left\Vert w\right\Vert
\end{equation*}%
(because $g_{t}(w)$ is an average of vectors $w(c_{s})(a_{s}-c_{s})$ of norm 
$\leq \left\Vert w\right\Vert \mathrm{diam}(C)=\left\Vert w\right\Vert
\gamma );$ therefore%
\begin{equation}
|~\left\Vert g_{t}(w)\right\Vert -\left\Vert g_{t}(w^{\prime })\right\Vert
~|\leq \left\Vert g_{t}(w)-g_{t}(w^{\prime })\right\Vert =\left\Vert
g_{t}(w-w^{\prime })\right\Vert \leq \gamma \left\Vert w-w^{\prime
}\right\Vert .  \label{eq:norm-g}
\end{equation}

Returning to the binning condition, which we can write as\footnote{%
We write $\mathbf{1}$ for the constant $1$ function; all indicator and $w$
functions are defined on $C$ only.} $\sum_{i=1}^{I}w_{i}=\mathbf{1},$ it
says that $\Pi =(w_{i})_{i=1}^{I}$ is a \textquotedblleft partition of
unity," and so the resulting calibration score $K_{t}^{\Pi }$ may be viewed
as the \textquotedblleft variation" of $g_{t}$ with respect to the partition 
$\Pi $. In particular, the classic calibration score $K_{t}$ is the
variation of $g_{t}$ with respect to the partition $\sum_{x\in C}\mathbf{1}%
_{x}=\mathbf{1}$ into indicator functions. Since the indicator partition is
the finest partition,\footnote{%
Any further split into fractions of indicators does not matter since $%
g_{t}(\alpha \mathbf{1}_{x})=\alpha g_{t}(\mathbf{1}_{x})$.} it stands to
reason that $K_{t}$ would be the maximal possible variation, i.e., the
\textquotedblleft total variation" of $g_{t}.$ This is indeed so: for every
binning $\Pi $ we have%
\begin{equation}
K_{t}^{\Pi }\leq K_{t},  \label{eq:K-Pi<=K}
\end{equation}%
which immediately follows from applying Lemma \ref{l:norm-W} below to $\Pi .$

Thus, any notion based on binning---in particular, continuous
calibration---is a weakening of classic calibration: if $K_{t}$ is small,
then so are all the relevant $K_{t}^{\Pi }.$

\begin{lemma}
\label{l:norm-W}Let $(w_{j})_{j\in J}$ be a countable collection of
nonnegative functions on $C,$ i.e., $w_{j}:C\rightarrow \mathbb{R}_{+}$ for
every $j\in J.$ Then%
\begin{equation*}
\sum_{j\in J}\left\Vert g_{t}(w_{j})\right\Vert \leq \left\Vert \sum_{j\in
J}w_{j}\right\Vert K_{t}.
\end{equation*}
\end{lemma}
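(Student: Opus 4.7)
The plan is to reduce the inequality to the definition of $K_t = \sum_{x\in C}\|g_t(\mathbf{1}_x)\|$ by exploiting the linearity of $g_t$ together with the nonnegativity assumption $w_j\geq 0$.

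First, I would observe that since $g_t(w)=\frac{1}{t}\sum_{s=1}^{t} w(c_s)(a_s-c_s)$ only depends on the finitely many values $w(c_1),\dots,w(c_t)$, any function $w$ can be decomposed, for the purposes of $g_t$, as $w=\sum_{x\in\{c_1,\dots,c_t\}} w(x)\mathbf{1}_x$ (plus a piece that vanishes on all forecasts used, which contributes nothing). By the linearity noted in the paragraph on $g_t$, this gives
\begin{equation*}
g_t(w_j)=\sum_{x\in C} w_j(x)\,g_t(\mathbf{1}_x)
\end{equation*}
where the sum is effectively finite. Applying the triangle inequality and using $w_j(x)\geq 0$ yields
\begin{equation*}
\|g_t(w_j)\|\leq \sum_{x\in C} w_j(x)\,\|g_t(\mathbf{1}_x)\|.
\end{equation*}

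Then I would sum over $j\in J$ and swap the order of summation (Tonelli, since all terms are nonnegative, which handles the case $|J|=\infty$ with no difficulty):
\begin{equation*}
\sum_{j\in J}\|g_t(w_j)\|\leq \sum_{x\in C}\Bigl(\sum_{j\in J} w_j(x)\Bigr)\|g_t(\mathbf{1}_x)\|.
\end{equation*}
Finally, bounding $\sum_{j\in J} w_j(x)\leq \bigl\|\sum_{j\in J}w_j\bigr\|$ pointwise and pulling this supremum norm out of the $x$-sum gives $\sum_j\|g_t(w_j)\|\leq \bigl\|\sum_j w_j\bigr\|\cdot K_t$, as required.

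There is no real obstacle here; the only mild subtleties are making sure that $w_j\geq 0$ is used (so that pulling it through the norm is legitimate and the supremum-norm bound on $\sum_j w_j(x)$ is meaningful) and that the countability of $J$ together with nonnegativity of all quantities permits the interchange of the two sums. Both are automatic from the hypotheses.
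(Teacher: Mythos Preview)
Your proof is correct and follows essentially the same route as the paper's: decompose each $w_j$ via indicators $\mathbf{1}_x$, apply the triangle inequality using $w_j(x)\geq 0$, swap the two (finite over $x$, countable over $j$) sums, and bound $\sum_j w_j(x)$ by the sup norm to recover $K_t$. The only difference is cosmetic---you invoke Tonelli for the interchange, whereas the paper simply notes that the $x$-sum is finite.
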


\begin{proof}
Put $W:=\sum_{j\in J}w_{j};$ using $w_{j}=\sum_{x\in C}w_{j}(x)\mathbf{1}%
_{x} $ and the linearity of $g_{t}$ we have\footnote{%
The sum $\sum_{x\in C}$ in the proof below is a finite sum (over $x\in
\{c_{1},...,c_{t}\}),$ and so it commutes with $\sum_{j\in J}.$}%
\begin{eqnarray*}
\sum_{j\in J}\left\Vert g_{t}(w_{j})\right\Vert &\leq &\sum_{j\in
J}\sum_{x\in C}w_{j}(x)\left\Vert g_{t}(\mathbf{1}_{x})\right\Vert
=\sum_{x\in C}\sum_{j\in J}w_{j}(x)\left\Vert g_{t}(\mathbf{1}%
_{x})\right\Vert \\
&=&\sum_{x\in C}W(x)\left\Vert g_{t}(\mathbf{1}_{x})\right\Vert \leq
\left\Vert W\right\Vert \sum_{x\in C}\left\Vert g_{t}(\mathbf{1}%
_{x})\right\Vert =\left\Vert W\right\Vert K_{t}.
\end{eqnarray*}
\end{proof}

For another use of this lemma, let $\Pi =(w_{i})_{i=1}^{\infty }$ be an
infinite continuous binning. The increasing sequence of continuous functions 
$\sum_{i=1}^{k}w_{i}$ converges pointwise, as $k\rightarrow \infty ,$ to the
continuous function $\mathbf{1},$ on the compact set $C,$ and so, by Dini's
theorem (see, e.g., Rudin 1976, Theorem 7.13), the convergence is uniform:%
\begin{equation}
\lim_{k\rightarrow \infty }\left\Vert \sum_{i=k+1}^{\infty }w_{i}\right\Vert
=0.  \label{eq:dini}
\end{equation}%
Using Lemma \ref{l:norm-W} for every $\mathbf{a}_{t},$ together with $%
K_{t}\leq \gamma $ (by (\ref{eq:K})), yields%
\begin{equation}
\lim_{k\rightarrow \infty }\left( \sup_{\mathbf{a}_{t}}\sum_{i=k+1}^{\infty
}\left\Vert g_{t}(w_{i})\right\Vert \right) =0.  \label{eq:tail-gt}
\end{equation}%
Thus, for continuous binning only finitely many $w_{i}$ matter, which leads
to a simpler characterization of continuous calibration in terms of
\textquotedblleft pointwise-in-$w"$ convergence.

\begin{proposition}
\label{p:cc<=>g}A deterministic forecasting procedure $\sigma $ is
continuously calibrated if and only if%
\begin{equation}
\lim_{t\rightarrow \infty }\left( \sup_{\mathbf{a}_{t}}\left\Vert
g_{t}(w)\right\Vert \right) =0  \label{eq:G/t-pointwise}
\end{equation}%
for every continuous function $w:C\rightarrow \lbrack 0,1].$
\end{proposition}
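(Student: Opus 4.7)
My plan is to prove both directions separately, with the forward direction being an almost immediate consequence of the definition, and the reverse direction using the tail estimate \eqref{eq:tail-gt} established just before the proposition.

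For the direction $(\Rightarrow)$, suppose $\sigma$ is continuously calibrated and let $w:C\to[0,1]$ be any continuous function. I would construct the two-bin continuous binning $\Pi=(w_1,w_2)$ with $w_1:=w$ and $w_2:=\mathbf{1}-w$; both functions are continuous and take values in $[0,1]$, and they sum to $\mathbf{1}$, so $\Pi$ is a legitimate continuous binning. Then
\begin{equation*}
\|g_t(w)\| \;\le\; \|g_t(w_1)\| + \|g_t(w_2)\| \;=\; K_t^{\Pi},
\end{equation*}
and continuous calibration gives $\sup_{\mathbf{a}_t}K_t^{\Pi}\to 0$, which yields \eqref{eq:G/t-pointwise}.

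For the direction $(\Leftarrow)$, assume \eqref{eq:G/t-pointwise} holds for every continuous $w:C\to[0,1]$, and let $\Pi=(w_i)_{i=1}^I$ be an arbitrary continuous binning; I want to show $\sup_{\mathbf{a}_t}K_t^{\Pi}\to 0$. Fix $\varepsilon>0$. If $I$ is finite, then $K_t^{\Pi}=\sum_{i=1}^I\|g_t(w_i)\|$ is a finite sum of terms each of which tends to zero uniformly in $\mathbf{a}_t$ by hypothesis, so $\sup_{\mathbf{a}_t}K_t^{\Pi}<\varepsilon$ for all large $t$. If $I=\infty$, I would split the sum at some truncation level $k$: by \eqref{eq:tail-gt} I can choose $k$ so that $\sup_{\mathbf{a}_t}\sum_{i=k+1}^\infty\|g_t(w_i)\|<\varepsilon/2$ for all $t$; then the remaining head $\sum_{i=1}^k\|g_t(w_i)\|$ is a finite sum of terms each tending uniformly to zero, so it is below $\varepsilon/2$ for all $t$ sufficiently large.

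The step I expect to be (mildly) the main obstacle is the infinite-binning case, because a priori one cannot conclude uniform convergence of an infinite sum from pointwise-in-$w$ convergence of each term. What saves the argument is precisely the Dini-type uniform tail bound \eqref{eq:tail-gt}, which is a consequence of the binning being a partition of unity by \emph{continuous} functions on the \emph{compact} set $C$; this is the only place compactness and continuity of the binning really do work, and it reduces the infinite case to a finite truncation plus a uniformly small remainder. Everything else is linearity of $g_t$ and the triangle inequality.
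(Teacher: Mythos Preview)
Your proof is correct and follows essentially the same approach as the paper: the two-bin binning $(w,\mathbf{1}-w)$ for the forward direction, and the split into a finite head plus the Dini-based tail bound \eqref{eq:tail-gt} for the reverse direction in the infinite case.
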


\begin{proof}
Given a continuous function $w:C\rightarrow \lbrack 0,1],$ let $\Pi $ be the
continuous binning $(w,\mathbf{1}-w)$. Since $\left\Vert g_{t}(w)\right\Vert
\leq K_{t}^{\Pi },$ continuous calibration implies (\ref{eq:G/t-pointwise})$%
. $

Conversely, let $\Pi =(w_{i})_{i=1}^{I}$ be a continuous binning. When $I$
is finite we have $\sup_{\mathbf{a}_{t}}K_{t}^{\Pi }=\sup_{\mathbf{a}%
_{t}}\sum_{i=1}^{I}\left\Vert g_{t}(w_{i})\right\Vert \leq
\sum_{i=1}^{I}\sup_{\mathbf{a}_{t}}\left\Vert g_{t}(w_{i})\right\Vert ,$
which converges to $0$ as $t\rightarrow \infty $ by (\ref{eq:G/t-pointwise}%
). When $I$ is infinite, for every $\varepsilon >0$ there is by (\ref%
{eq:tail-gt}) a finite $k$ such that $\sup_{\mathbf{a}_{t}}K_{t}^{\Pi }\leq
\sum_{i=1}^{k}\sup_{\mathbf{a}_{t}}\left\Vert g_{t}(w_{i})\right\Vert
+\varepsilon ,$ which converges to $\varepsilon $ as $t\rightarrow \infty $
by (\ref{eq:G/t-pointwise}); since $\varepsilon $ is arbitrary, the limit is 
$0.$
\end{proof}

\bigskip

We now construct a continuous binning $\Pi _{0}$ such that $\Pi _{0}$%
-calibration implies $\Pi $-calibration for \emph{all} continuous $\Pi $
(and so $\Pi _{0}$ plays, for continuous calibration, the same role that the
indicator binning plays for classic calibration; see (\ref{eq:K-Pi<=K})).

\begin{proposition}
\label{p:universal partition}There exists a continuous binning $\Pi _{0}$
such that a deterministic forecasting procedure $\sigma $ is continuously
calibrated if and only if it is $\Pi _{0}$-calibrated$.$
\end{proposition}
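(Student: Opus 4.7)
The plan is to exhibit a single continuous binning $\Pi_{0}$ such that $\Pi_{0}$-calibration already forces the pointwise-in-$w$ convergence characterized by Proposition \ref{p:cc<=>g}; the converse direction (continuously calibrated implies $\Pi_{0}$-calibrated) is immediate from the definition of continuous calibration. The key idea is that, because $C$ is compact, every continuous $w:C\to[0,1]$ can be uniformly approximated by tent-function interpolants on a finite grid, so it suffices to package countably many such grids, at arbitrarily fine resolutions, into one binning.

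Concretely, for each $k\geq 1$ I would pick a finite $(1/k)$-grid $\{y_{1}^{k},\ldots,y_{I_{k}}^{k}\}$ of $C$ and form the tent-function binning $\Pi_{k}=(v_{i}^{k})_{i=1}^{I_{k}}$ described right after the definition of binnings in Section \ref{sus:binning-cc}; each $v_{i}^{k}$ is continuous and nonnegative, and $\sum_{i}v_{i}^{k}\equiv \mathbf{1}$. Fixing positive weights $\lambda_{k}$ with $\sum_{k}\lambda_{k}=1$, I set
\[
\Pi_{0} := \bigl(\lambda_{k}\,v_{i}^{k}\bigr)_{k\geq 1,\;1\leq i\leq I_{k}}.
\]
This is a countable continuous binning, since each $\lambda_{k}v_{i}^{k}$ is continuous and $\sum_{k,i}\lambda_{k}v_{i}^{k}(c)=\sum_{k}\lambda_{k}=1$ for every $c\in C$.

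Assuming $\sigma$ is $\Pi_{0}$-calibrated, linearity of $g_{t}$ yields $K_{t}^{\Pi_{0}}=\sum_{k}\lambda_{k}K_{t}^{\Pi_{k}}$, so $\sup_{\mathbf{a}_{t}}\lambda_{k}K_{t}^{\Pi_{k}}\leq \sup_{\mathbf{a}_{t}}K_{t}^{\Pi_{0}}\to 0$ for each fixed $k$. To verify continuous calibration via Proposition \ref{p:cc<=>g}, I take any continuous $w:C\to[0,1]$ and form the interpolant $\tilde{w}^{k}:=\sum_{i}w(y_{i}^{k})v_{i}^{k}$, which converges to $w$ uniformly as $k\to\infty$ by uniform continuity of $w$ on the compact set $C$. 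Using linearity of $g_{t}$ together with (\ref{eq:norm-g}) and $|w(y_{i}^{k})|\leq 1$,
\[
\|g_{t}(w)\|\;\leq\;\|g_{t}(\tilde{w}^{k})\|+\gamma\|w-\tilde{w}^{k}\|\;\leq\;K_{t}^{\Pi_{k}}+\gamma\|w-\tilde{w}^{k}\|;
\]
choosing first $k$ large so that $\gamma\|w-\tilde{w}^{k}\|<\varepsilon/2$, and then $t$ large so that $\sup_{\mathbf{a}_{t}}K_{t}^{\Pi_{k}}<\varepsilon/2$, gives $\sup_{\mathbf{a}_{t}}\|g_{t}(w)\|<\varepsilon$, as required.

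The main obstacle is conceptual rather than technical: unlike the indicator partition in the classic setting, there is no single finite-resolution continuous binning that captures every continuous $w$ (any fixed $\delta$-grid blurs features at scales below $\delta$). The trick is to combine all resolutions into one countable binning via the summable weights $\lambda_{k}$; compactness of $C$ then ensures, through uniform continuity, that some finite resolution already suffices for any prescribed approximation tolerance.
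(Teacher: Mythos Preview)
Your proof is correct. Both you and the paper reduce the hard direction to Proposition~\ref{p:cc<=>g} and use the Lipschitz bound~(\ref{eq:norm-g}) to pass from a countable approximating family to arbitrary continuous $w$, but the constructions of $\Pi_{0}$ differ. The paper argues abstractly: by separability of $C(C;[0,1])$ it picks a dense sequence $(u_{i})$, rescales to $w_{i}=\alpha_{i}u_{i}$ with $\sum_{i}\alpha_{i}\|u_{i}\|\leq 1$, and tacks on the remainder $w_{0}=\mathbf{1}-\sum_{i\geq 1}w_{i}$; $\Pi_{0}$-calibration then gives $g_{t}(u_{i})\to 0$ for each $i$, and density plus~(\ref{eq:norm-g}) finish. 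Your route is more explicit and self-contained: you aggregate the tent-function binnings $\Pi_{k}$ at all resolutions via summable weights $\lambda_{k}$, avoiding any remainder term because each $\Pi_{k}$ is already a partition of unity, and then approximate $w$ by its interpolant $\tilde{w}^{k}$ rather than invoking abstract density. The paper's version is shorter and makes clear that separability is the only structural input; yours has the advantage of giving a concrete $\Pi_{0}$ built from exactly the tent functions already introduced in Section~\ref{sus:binning-cc}, and the inequality $\|g_{t}(\tilde{w}^{k})\|\leq K_{t}^{\Pi_{k}}$ is a nice direct use of the binning score.
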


\begin{proof}
The space of continuous functions from the compact set $C$ to $[0,1]$ is
separable with respect to the supremum norm; let $(u_{i})_{i=1}^{\infty }$
be a dense sequence. Take $\alpha _{i}>0$ such that $\sum_{i=1}^{\infty
}\alpha _{i}\left\Vert u_{i}\right\Vert \leq 1$ (for example, $\alpha
_{i}=1/(2^{i}\left\Vert u_{i}\right\Vert )$), and put $w_{i}:=\alpha
_{i}u_{i}$ for all $i\geq 1$ and $w_{0}:=\mathbf{1-}\sum_{i=1}^{\infty
}w_{i} $ (the function $w_{0}$ is continuous because $\sum_{i=1}^{\infty
}w_{i}(c)\leq \sum_{i=1}^{\infty }\alpha _{i}\left\Vert u_{i}\right\Vert
\leq 1).$ Thus $\Pi _{0}=(w_{i})_{i=0}^{\infty }$ is a continuous binning,
and so continuous calibration implies $\Pi _{0}$-calibration.

Conversely, $\Pi _{0}$-calibration implies (\ref{eq:G/t-pointwise}) for each 
$w_{i}$ in $\Pi _{0}$ (because $\left\Vert g_{t}(w_{i})\right\Vert \leq
K_{t}^{\Pi _{0}}),$ and hence for each $u_{i}$ by the linearity of $g_{t}.$
This extends from the dense sequence $(u_{i})_{i}$ to any continuous $%
w:C\rightarrow \lbrack 0,1]$ by (\ref{eq:norm-g}), and Proposition \ref%
{p:cc<=>g} completes the proof.
\end{proof}

\bigskip

Proposition \ref{p:cc<=>g} also implies that continuous calibration is a
strengthening of existing Lipschitz-based notions of weak calibration
(Kakade and Foster 2004, Foster and Kakade 2006) and smooth calibration
(Foster and Hart 2018). Indeed, a continuously calibrated procedure---a
simple construction of which we provide in Section \ref{s:calibrated
procedures}---is \textquotedblleft universally" weakly and smoothly
calibrated (by contrast, the known constructions depend on the Lipschitz
bound $L$ and the desired calibration error\footnote{\label{ftn:doubling}The
traditional way to obtain universal procedures is by restarting them at
appropriate times with new values of the parameters (as in Section 4.4 of
Kakade and Foster 2004). The procedures that we construct in the present
paper are much simpler.} $\varepsilon $). See Proposition \ref%
{p:cc=>smooth,weak} in Appendix \ref{sus-a:smooth-calib}. Thus, continuous
calibration may well be used instead of weak and smooth calibration.

\section{Forecast-Hedging Tools\label{s:FH tools}}

In this section we provide useful variants of Brouwer's (1912) fixed point
theorem and von Neumann's (1928) minimax theorem; they are used in Section %
\ref{s:calibrated procedures} to obtain forecasts that satisfy the
forecast-hedging conditions. These conditions, of the form (\ref{eq:det-FH})
and (\ref{eq:Hedging}) (see Section \ref{sus:illustration}), are referred to
as \textquotedblleft outgoing" because of their geometric interpretation
(see the paragraph following the statement of Theorem \ref{th:hairy FP}
below). The reader may skip the proofs in this section at first reading;
however, see the important distinction between fixed point and minimax
procedures in Section \ref{sus:fp-mm procedures}.

Throughout this section $f:C\rightarrow \mathbb{R}^{m}$ is a function from
the nonempty compact and convex subset $C$ of $\mathbb{R}^{m}$ into $\mathbb{%
R}^{m}$ (with the same dimension $m),$ which may be interpreted as a vector
field \textquotedblleft flow" (i.e., think of $x$ as moving to $x+f(x),$ or
to $x+\varepsilon f(x)$ for some $\varepsilon >0).$

\subsection{Outgoing Fixed Point\label{sus:outgoing fp}}

When the function $f$ is \emph{continuous} we get:

\begin{theorem}[Outgoing Fixed Point]
\label{th:hairy FP}Let $C\subset \mathbb{R}^{m}$ be a nonempty compact
convex set, and $f:C\rightarrow \mathbb{R}^{m}$ be a continuous function.
Then there exists a point $y$ in $C$ such that 
\begin{equation}
f(y)\cdot (x-y)\leq 0  \label{eq:OFP}
\end{equation}%
for all $x\in C.$
\end{theorem}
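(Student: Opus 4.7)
My plan is to reduce this to Brouwer's fixed point theorem via the Euclidean projection onto $C$. Let $\pi_C : \mathbb{R}^m \to C$ denote the nearest-point projection, which is well defined and continuous because $C$ is nonempty, closed, and convex. I will define the map $g : C \to C$ by
\begin{equation*}
g(y) \;:=\; \pi_C\bigl(y + f(y)\bigr).
\end{equation*}
Since $f$ is continuous by hypothesis and $\pi_C$ is continuous, $g$ is a continuous self-map of the nonempty compact convex set $C$, so Brouwer gives a fixed point $y \in C$ with $y = \pi_C(y + f(y))$.

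The second step is to translate the fixed point equation into the desired inequality using the variational characterization of the projection onto a convex set: for any $w \in \mathbb{R}^m$ and $z \in C$, one has $z = \pi_C(w)$ if and only if $(w - z)\cdot(x - z) \leq 0$ for every $x \in C$. Applying this with $w = y + f(y)$ and $z = y$, the fixed point relation becomes
\begin{equation*}
\bigl((y+f(y)) - y\bigr)\cdot(x-y) \;=\; f(y)\cdot(x-y) \;\leq\; 0 \qquad \text{for all } x \in C,
\end{equation*}
which is exactly the claimed outgoing inequality (\ref{eq:OFP}).

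The main conceptual obstacle is purely bookkeeping: making sure the projection characterization is invoked correctly and that no scaling of $f$ is required. In particular, one might worry that $y + f(y)$ lies far outside $C$, but this is harmless since $\pi_C$ is defined and continuous on all of $\mathbb{R}^m$; no $\varepsilon$-scaling (and hence no separate passage-to-the-limit argument) is needed. Geometrically, the conclusion says precisely that $f(y)$ lies in the (outward) normal cone to $C$ at $y$, which matches the interpretation of $f$ as a flow that wishes to leave $C$ at $y$.
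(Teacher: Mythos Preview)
Your proof is correct and essentially identical to the paper's own argument: the paper also composes the map $x\mapsto x+f(x)$ with the Euclidean projection onto $C$, applies Brouwer to obtain a fixed point $y=\pi_C(y+f(y))$, and then reads off the inequality from the variational characterization of the projection. Only the notation differs.
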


Thus, $f(y)\cdot y=\max_{x\in C}f(y)\cdot x.$ If $y$ is an interior point of 
$C$ then we must have $f(y)=0$ (because $x-y$ can be proportional to any
vector in $\mathbb{R}^{m}),$ and if $y$ is on the boundary of $C$ then $f(y)$
is an outgoing normal to the boundary of\ $C$ at $y$. This result is the
\textquotedblleft variational inequalities" Lemma 8.1 in Border (1985), who
attributes it to Hartman and Stampacchia (1966, Lemma 3.1). We provide a
short direct proof using Brouwer's (1912) fixed point theorem.\footnote{%
Theorem \ref{th:hairy FP} is in fact equivalent to Brower's fixed point
theorem, as the latter is easily proved from the former; see Appendix \ref%
{sus-a:outgoing}, which contains various comments on the \textquotedblleft
outgoing" results.}

\bigskip

\begin{proof}
For every $z\in \mathbb{R}^{m}$ let $\xi (z)\in C$ be the closest point to $%
z $ in the set $C,$ i.e., $\left\Vert \xi (z)-z\right\Vert =\min_{x\in
C}\left\Vert x-z\right\Vert $. As is well known, because $C$ is a convex and
compact set, $\xi (z)$ is well defined (i.e., it exists and is unique), the
function $\xi $ is continuous, and 
\begin{equation}
(z-\xi (z))\cdot (x-\xi (z))\leq 0  \label{eq:normal}
\end{equation}%
for every $x\in C$ (when $z\in C$ it trivially holds because then $\xi
(z)=z, $ and when $z\notin C$ the vector $z-\xi (z)$ is an outward normal to 
$C$ at the boundary point $\xi (z)).$

The function $x\longmapsto \xi (x+f(x))$ is thus a continuous function from $%
C$ to $C$, and so by Brouwer's fixed point theorem there is $y\in C$ such
that $y=\xi (y+f(y)).$ Applying (\ref{eq:normal}) to the point $z=y+f(y),$
for which $\xi (z)=y,$ yields the result.
\end{proof}

\subsection{Outgoing Minimax\label{sus:outgoing mm}}

For functions $f$ that need \emph{not} be continuous we have:

\begin{theorem}[Outgoing Minimax]
\label{th:hairy mM}Let $C\subset \mathbb{R}^{m}$ be a nonempty compact
convex set, let $D\subset C$ be a finite $\delta $-grid of $C$ for some $%
\delta >0,$ and let $f:D\rightarrow \mathbb{R}^{m}.$ Then there exists a
probability distribution $\eta $ on $D$ such that%
\begin{equation}
\mathbb{E}_{y\sim \eta }\left[ f(y)\cdot (x-y)\right] \leq \delta \,\mathbb{E%
}_{y\sim \eta }\left[ \left\Vert f(y)\right\Vert \right]  \label{eq:expfy}
\end{equation}%
for all $x\in C.$ Moreover, the support of $\eta $ can be taken to consist
of at most $m+3$ points of $D.$
\end{theorem}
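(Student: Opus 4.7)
The plan is to reformulate the desired inequality as the value of a two-person zero-sum game and apply von Neumann's minimax theorem, then trim the support via Carathéodory.

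Define the payoff function $\Phi:\Delta(D)\times C\to\mathbb{R}$ by
\[
\Phi(\eta,x)\;:=\;\mathbb{E}_{y\sim\eta}\bigl[f(y)\cdot(x-y)\bigr]\;-\;\delta\,\mathbb{E}_{y\sim\eta}\bigl[\|f(y)\|\bigr].
\]
For fixed $x$, $\Phi$ is linear in $\eta$; for fixed $\eta$, $\Phi$ is affine in $x$. Since $\Delta(D)$ (a finite-dimensional simplex) and $C$ are nonempty, convex, and compact, von Neumann's minimax theorem applies, yielding
\[
\min_{\eta\in\Delta(D)}\,\max_{x\in C}\Phi(\eta,x)\;=\;\max_{x\in C}\,\min_{\eta\in\Delta(D)}\Phi(\eta,x)\;=\;\max_{x\in C}\,\min_{y\in D}\bigl[f(y)\cdot(x-y)-\delta\|f(y)\|\bigr],
\]
where the last equality holds because linearity in $\eta$ makes the inner minimum attained at a vertex of $\Delta(D)$.

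The second step is to bound the right-hand side. Given any $x\in C$, since $D$ is a $\delta$-grid, there exists $y\in D$ with $\|y-x\|<\delta$. By Cauchy--Schwarz,
\[
f(y)\cdot(x-y)\;\leq\;\|f(y)\|\,\|x-y\|\;<\;\delta\|f(y)\|,
\]
so $\min_{y\in D}[f(y)\cdot(x-y)-\delta\|f(y)\|]\leq 0$ for every $x\in C$. Hence the value of the game is at most $0$, which means there exists $\eta^{*}\in\Delta(D)$ satisfying $\Phi(\eta^{*},x)\leq 0$ for all $x\in C$ — precisely inequality \eqref{eq:expfy}.

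The third step is the support bound of $m+3$. Observe that $\Phi(\eta,x)$ depends on $\eta$ only through the $(m+2)$-dimensional vector
\[
V(\eta)\;:=\;\bigl(\mathbb{E}_{\eta}[f(y)],\;\mathbb{E}_{\eta}[f(y)\cdot y],\;\mathbb{E}_{\eta}[\|f(y)\|]\bigr)\in\mathbb{R}^{m+2},
\]
namely $\Phi(\eta,x)=V(\eta)_{1}\cdot x - V(\eta)_{2}-\delta V(\eta)_{3}$, where $V(\eta)_{1}$ is the $\mathbb{R}^{m}$-component. The vector $V(\eta^{*})$ lies in the convex hull of $\{(f(y),f(y)\cdot y,\|f(y)\|):y\in D\}\subset\mathbb{R}^{m+2}$, so by Carathéodory's theorem it is a convex combination of at most $(m+2)+1=m+3$ of these points. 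The corresponding distribution $\eta$ on $D$ has $|\mathrm{supp}(\eta)|\leq m+3$ and $V(\eta)=V(\eta^{*})$, hence satisfies \eqref{eq:expfy}.

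The main conceptual hurdle is simply noticing the right bilinear formulation — once $\Phi$ is defined with the $\delta\|f(y)\|$ slack built in, the minimax side is immediate and the inner-min side reduces to a one-line Cauchy--Schwarz estimate leveraging the grid property. The dimension count $m+3$ then follows from observing that only three aggregate moments of $\eta$ affect the condition.
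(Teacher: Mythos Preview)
Your proof is correct and follows the same two-step strategy as the paper: apply a minimax theorem to the payoff $f(y)\cdot(x-y)-\delta\|f(y)\|$, then reduce the support via Carath\'eodory using the map $y\mapsto(f(y),f(y)\cdot y,\|f(y)\|)\in\mathbb{R}^{m+2}$.

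The one substantive difference is in how the minimax step is set up. You apply the minimax theorem directly on $\Delta(D)\times C$, which is fine mathematically but requires a general convex--compact version (Sion/Kneser), since $C$ is not finite. The paper instead first introduces a second finite grid $B\subset C$ (a $\delta_{1}$-grid with $\delta_{1}=\delta-\delta_{0}$, where $\delta_{0}=\max_{x\in C}\mathrm{dist}(x,D)<\delta$), applies the \emph{finite} von~Neumann minimax theorem to the matrix game on $B\times D$ with payoff $f(y)\cdot(b-y)-\delta_{0}\|f(y)\|$, and only afterward extends the resulting inequality from $b\in B$ to all $x\in C$ at the cost of the extra $\delta_{1}\|f(y)\|$. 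The paper's detour buys something it explicitly highlights: the construction is a finite linear program, so $\eta$ is effectively computable. Your version is cleaner to write but loses that finiteness; if you want to match the paper's emphasis, just note that one may discretize $C$ first.
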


When $f$ is \emph{bounded}, by taking $\delta =\varepsilon /\sup_{x\in
C}\left\Vert f(x)\right\Vert $ we get:

\begin{corollary}
\label{c:outgoing-MM}Let $C\subset \mathbb{R}^{m}$ be a nonempty compact
convex set, and $f:C\rightarrow \mathbb{R}^{m}$ a bounded function. Then for
every $\varepsilon >0$ there exists a probability distribution $\eta $ on $C$
such that%
\begin{equation*}
\mathbb{E}_{y\sim \eta }\left[ f(y)\cdot (x-y)\right] \leq \varepsilon
\end{equation*}%
for all $x\in C.$ Moreover, the support of $\eta $ can be taken to consist
of at most $m+3$ points of $C.$
\end{corollary}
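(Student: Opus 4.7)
The plan is to derive Corollary \ref{c:outgoing-MM} directly from Theorem \ref{th:hairy mM} by discretizing $C$ at a scale calibrated to $\varepsilon$ and the supremum of $\|f\|$, exactly as the sentence preceding the corollary suggests.

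First I would set $M := \sup_{x \in C} \|f(x)\|$, which is finite by the boundedness hypothesis. The degenerate case $M = 0$ is handled trivially: $f \equiv 0$, so any point-mass $\eta$ (with support of size $1 \leq m+3$) gives $\mathbb{E}[f(y)\cdot(x-y)] = 0 \leq \varepsilon$. In the generic case $M > 0$, I would define $\delta := \varepsilon / M > 0$, and then invoke the fact, mentioned in Section \ref{s:calibration-def}, that the compact set $C$ admits a finite $\delta$-grid $D \subset C$.

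Next, I would apply Theorem \ref{th:hairy mM} to the function $f$ restricted to this grid $D$. The theorem produces a probability distribution $\eta$ on $D$, supported on at most $m+3$ points, such that
\begin{equation*}
\mathbb{E}_{y \sim \eta}\bigl[f(y)\cdot(x-y)\bigr] \;\leq\; \delta\,\mathbb{E}_{y \sim \eta}\bigl[\|f(y)\|\bigr]
\end{equation*}
for every $x \in C$. Since $\|f(y)\| \leq M$ pointwise on $C$ (and in particular on $D$), the right-hand side is bounded by $\delta \cdot M = \varepsilon$, yielding the desired inequality. Viewing $\eta$ as a distribution on $C$ (via the inclusion $D \subset C$) gives the required conclusion, and the support bound $m+3$ is inherited directly from Theorem \ref{th:hairy mM}.

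There is essentially no obstacle beyond bookkeeping: boundedness of $f$ is precisely what converts the grid-dependent slack $\delta\,\mathbb{E}[\|f(y)\|]$ of the outgoing minimax theorem into a uniform $\varepsilon$, and the existence of a finite $\delta$-grid in a compact set is the only ingredient besides Theorem \ref{th:hairy mM} itself. The only subtlety worth flagging is the separate treatment of the trivial case $M=0$, so that the definition of $\delta$ is always legitimate.
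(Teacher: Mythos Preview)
Your proposal is correct and follows exactly the paper's own approach: the corollary is stated immediately after the sentence ``When $f$ is bounded, by taking $\delta =\varepsilon /\sup_{x\in C}\left\Vert f(x)\right\Vert$ we get,'' and your argument simply spells this out, with the harmless extra care of isolating the trivial case $M=0$.
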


Unlike in the Outgoing Fixed Point Theorem \ref{th:hairy FP}, in the
Outgoing Minimax Theorem \ref{th:hairy mM} $y$ is a random variable and no
longer a constant, and the \textquotedblleft outgoing" inequality holds in
expectation (within an arbitrarily small error). The proof is a finite
construct that uses the von Neumann's (1928) minimax theorem,\footnote{%
As we will see in Appendix \ref{sus-a:outgoing}, Corollary \ref%
{c:outgoing-MM} is equivalent to the minimax theorem (as Theorem \ref%
{th:hairy FP} is equivalent to Brouwer's fixed point theorem).} and thus
amounts to solving a linear programming problem.

\bigskip

\begin{proof}[Proof of Theorem \protect\ref{th:hairy mM}]
Let $\delta _{0}\equiv \delta _{0}(D):=\max_{x\in C}\mathrm{dist}(x,D)$ be
the farthest away a point in $C$ may be from the $\delta $-grid $D;$ the
maximum is attained on the compact set $C$ and so $\delta _{0}<\delta .$ Put 
$\delta _{1}:=\delta -\delta _{0}>0,$ and take $B\subset C$ to be a finite $%
\delta _{1}$-grid of $C.$ Consider the finite two-person zero-sum game where
the maximizer chooses $b\in B,$ the minimizer chooses $y\in D,$ and the
payoff is\ $f(y)\cdot (b-y)-\delta _{0}\left\Vert f(y)\right\Vert .$ For
every mixed strategy $\nu \in \Delta (B)$ of the maximizer, let $\bar{b}:=%
\mathbb{E}_{b\sim \nu }\left[ b\right] \in C$ be its expectation; the
minimizer can make the payoff $\leq 0$ by choosing a point $y$ on the grid $%
D $ that is within $\delta _{0}$ of $\bar{b}$: 
\begin{equation*}
\mathbb{E}_{x\sim \nu }\left[ f(y)\cdot (b-y)-\delta _{0}\left\Vert
f(y)\right\Vert \right] =f(y)\cdot (\bar{b}-y)-\delta _{0}\left\Vert
f(y)\right\Vert \leq 0
\end{equation*}%
(because $f(y)\cdot (\bar{b}-y)\leq \left\Vert f(y)\right\Vert \cdot
\left\Vert \bar{b}-y\right\Vert \leq \left\Vert f(y)\right\Vert \delta
_{0}). $ Therefore, by the minimax theorem, the minimizer can \emph{guarantee%
} that the payoff is $\leq 0$; i.e., there is a mixed strategy $\eta \in
\Delta (D)$ such that 
\begin{equation}
\mathbb{E}_{y\sim \eta }\left[ f(y)\cdot (b-y)-\delta _{0}\left\Vert
f(y)\right\Vert \right] \leq 0  \label{eq:delta-M}
\end{equation}%
for every $b\in B.$ Since for every $x\in C$ there is $b\in B$ with $%
\left\Vert x-b\right\Vert <\delta _{1},$ and so $f(y)\cdot (x-b)\leq \delta
_{1}\left\Vert f(y)\right\Vert $ for every $y,$ adding this inequality to (%
\ref{eq:delta-M}) yields, by $\delta _{0}+\delta _{1}=\delta ,$ the
inequality (\ref{eq:expfy}) for every $x\in C.$

For the moreover statement, (\ref{eq:expfy}) says that the vector $\mathbb{E}%
_{y\sim \eta }\left[ F(y)\right] $ satisfies $\mathbb{E}_{y\sim \eta }\left[
F(y)\right] \cdot (x,-1,-\delta )\leq 0$ for every $x\in C,$ where 
\begin{equation*}
F(y):=(f(y),f(y)\cdot y,\left\Vert f(y)\right\Vert )\in \mathbb{R}^{m+2}
\end{equation*}%
for each $y\in D.$ By Carath\'{e}odory's theorem, $\mathbb{E}_{y\sim \eta }%
\left[ F(y)\right] $ can be expressed as a convex combination of at most $%
m+3 $ points in $\{F(y):y\in D\},$ and so the support of $\eta $ can be
taken to be of size at most $m+3.$
\end{proof}

\subsection{Almost Deterministic Outgoing Fixed Point\label{sus:outgoing
almost det}}

We can improve the result of the Outgoing Minimax Theorem and obtain a
probability distribution that is \textquotedblleft almost
deterministic"---i.e., the randomization is between nearby points---by using
a fixed point.

A probability distribution $\eta $ is said to be $\rho $-\emph{local} if its
support is included in a closed ball of radius $\rho $; i.e., there exists $%
x $ such that $\eta (\overline{B}(x;\rho ))=1,$ where $B(x;\rho
)=\{z:\left\Vert z-x\right\Vert <\rho \}$ and $\overline{B}(x;\rho
)=\{z:\left\Vert z-x\right\Vert \leq \rho \}$ denote, respectively, the open
and closed balls of radius $\rho $ around $x.$

\begin{theorem}[Almost Deterministic Outgoing Fixed Point]
\label{th:dhfp}Let $C\subset \mathbb{R}^{m}$ be a nonempty compact convex
set, let $D\subset C$ be a finite $\delta $-grid of $C$ for some $\delta >0,$
and let $f:D\rightarrow \mathbb{R}^{m}.$ Then there exists a $\delta $-local
probability distribution $\eta $ on $D$ such that%
\begin{equation*}
\mathbb{E}_{y\sim \eta }\left[ f(y)\cdot (x-y)\right] \leq \delta \,\mathbb{E%
}_{y\sim \eta }\left[ \left\Vert f(y)\right\Vert \right]
\end{equation*}%
for all $x\in C.$ Moreover, the support of $\eta $ can be taken to consist
of at most $m+1$ points of $D.$
\end{theorem}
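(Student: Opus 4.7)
The plan is to reduce this to the (continuous) Outgoing Fixed Point Theorem \ref{th:hairy FP} by continuously interpolating $f$ across the grid $D$, and then to use Carath\'{e}odory to shrink the support.

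First I would build a ``tent'' partition of unity on the grid $D$. Set $\Lambda(c,y) := [\delta - \Vert c - y\Vert]_+$ for $y \in D$ and $c \in C$, and
\[
w_y(c) := \frac{\Lambda(c,y)}{\sum_{y' \in D} \Lambda(c,y')},
\]
which is well defined and continuous in $c$ because $D$ is a $\delta$-grid, so the denominator is strictly positive on $C$. Define the continuous extension $\tilde f : C \to \mathbb{R}^m$ by $\tilde f(c) := \sum_{y \in D} w_y(c) f(y)$. By Theorem \ref{th:hairy FP} applied to $\tilde f$, there is $c^* \in C$ with $\tilde f(c^*) \cdot (x - c^*) \leq 0$ for all $x \in C$. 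Let $\eta$ be the distribution on $D$ with $\eta(y) := w_y(c^*)$.

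Next I would verify the two claimed properties. The support of $\eta$ is $\{y \in D : \Vert c^* - y\Vert < \delta\} \subseteq \overline{B}(c^*, \delta)$, so $\eta$ is $\delta$-local. For the outgoing inequality, decompose
\[
\mathbb{E}_{y \sim \eta}\bigl[f(y)\cdot (x - y)\bigr] \;=\; \mathbb{E}_{y \sim \eta}\bigl[f(y)\bigr] \cdot (x - c^*) \;+\; \mathbb{E}_{y \sim \eta}\bigl[f(y) \cdot (c^* - y)\bigr].
\]
The first term equals $\tilde f(c^*) \cdot (x - c^*) \leq 0$ by the Outgoing Fixed Point conclusion. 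For the second term, on the support of $\eta$ the Cauchy--Schwarz inequality gives $f(y)\cdot(c^* - y) \leq \Vert f(y)\Vert \cdot \Vert c^* - y\Vert \leq \delta \Vert f(y)\Vert$, so its expectation is at most $\delta \mathbb{E}_{y \sim \eta}[\Vert f(y)\Vert]$, which yields the desired bound.

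Finally, for the ``moreover'' statement I would invoke Carath\'{e}odory's theorem, but in a much lower dimension than in the proof of Theorem \ref{th:hairy mM}: here the only condition on $\eta$ that enters the outgoing inequality through $x$ is the single vector $\mathbb{E}_{y \sim \eta}[f(y)] = \tilde f(c^*) \in \mathbb{R}^m$, since the localization already controls the ``$\delta \Vert f(y)\Vert$'' slack pointwise. Hence $\tilde f(c^*)$ can be written as a convex combination of at most $m+1$ vectors in $\{f(y) : y \in \mathrm{supp}(\eta)\}$; replacing $\eta$ by the corresponding distribution $\eta'$ on these $\leq m+1$ points preserves both $\mathbb{E}[f(y)] \cdot (x-c^*) \leq 0$ and, because $\mathrm{supp}(\eta') \subseteq \mathrm{supp}(\eta) \subseteq \overline{B}(c^*,\delta)$, the pointwise estimate $f(y)\cdot(c^*-y) \leq \delta\Vert f(y)\Vert$ on the new support. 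I expect no real obstacle beyond keeping the bookkeeping straight; the only subtle point is noting that Carath\'{e}odory can be applied in $\mathbb{R}^m$ (not $\mathbb{R}^{m+2}$ as for Theorem \ref{th:hairy mM}) precisely because localization turns two ``global'' moment conditions into pointwise estimates.
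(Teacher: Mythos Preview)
Your proposal is correct and essentially identical to the paper's proof: both build the tent-weight interpolant $\tilde f$, apply Theorem \ref{th:hairy FP} to obtain a point $c^{*}$ (the paper's $z$), take $\eta$ to be the tent weights at $c^{*}$, decompose $\mathbb{E}[f(y)\cdot(x-y)]$ through $c^{*}$, and then invoke Carath\'{e}odory in $\mathbb{R}^{m}$ to trim the support to $m+1$ points. Your closing remark explaining \emph{why} Carath\'{e}odory applies in $\mathbb{R}^{m}$ rather than $\mathbb{R}^{m+2}$---that localization converts the two extra moment constraints into pointwise estimates---is a nice touch that the paper leaves implicit.
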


When $f$ is bounded, by taking $\delta =\min \{\varepsilon /\sup_{x\in
C}\left\Vert f(x)\right\Vert ,\rho \}$ we get:

\begin{corollary}
\label{c:outgoing-ad}Let $C\subset \mathbb{R}^{m}$ be a nonempty compact
convex set, and $f:C\rightarrow \mathbb{R}^{m}$ a bounded function. Then for
every $\varepsilon >0$ and $\rho >0$ there exists a $\rho $-local
probability distribution $\eta $ on $C$ such that%
\begin{equation*}
\mathbb{E}_{y\sim \eta }\left[ f(y)\cdot (x-y)\right] \leq \varepsilon
\end{equation*}%
for all $x\in C.$ Moreover, the support of $\eta $ can be taken to consist
of at most $m+1$ points of $C.$
\end{corollary}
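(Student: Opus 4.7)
The plan is to combine a continuous extension of $f$ with Theorem \ref{th:hairy FP}, and then compress the resulting distribution using Carath\'{e}odory's theorem in $\mathbb{R}^{m}$. The key gain over Theorem \ref{th:hairy mM} is that the Outgoing Fixed Point delivers a single anchor point $y^{\ast}$, so Carath\'{e}odory needs to be applied in $\mathbb{R}^{m}$ (yielding an $m+1$ bound) rather than in $\mathbb{R}^{m+2}$ (which gave $m+3$).

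First I would extend $f$ continuously from $D$ to all of $C$ via a tent-based partition of unity, of the same shape that is used in Section \ref{sus:binning-cc}. Set $\Lambda_{d}(c):=[\delta-\|c-d\|]_{+}$ for each $d\in D$; since $D$ is a $\delta$-grid, $\sum_{d\in D}\Lambda_{d}(c)>0$ for every $c\in C$, so the normalized weights $\lambda_{d}(c):=\Lambda_{d}(c)/\sum_{d'\in D}\Lambda_{d'}(c)$ form a continuous partition of unity on $C$ with $\lambda_{d}(c)>0$ only when $\|c-d\|<\delta$. Define $\tilde{f}(c):=\sum_{d\in D}\lambda_{d}(c)\,f(d)$, which is continuous on $C$.

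Applying Theorem \ref{th:hairy FP} to $\tilde{f}$ produces $y^{\ast}\in C$ with $\tilde{f}(y^{\ast})\cdot(x-y^{\ast})\leq 0$ for every $x\in C$. By construction, $\tilde{f}(y^{\ast})$ lies in $\operatorname{conv}\{f(d):d\in D,\ \|d-y^{\ast}\|<\delta\}\subset \mathbb{R}^{m}$, so Carath\'{e}odory's theorem writes it as $\tilde{f}(y^{\ast})=\sum_{j=1}^{m+1}\mu_{j}f(d_{j})$ with $\mu_{j}\geq 0$, $\sum_{j}\mu_{j}=1$, and $d_{j}\in D\cap \overline{B}(y^{\ast};\delta)$. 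Let $\eta$ place mass $\mu_{j}$ on $d_{j}$: its support lies in $D\cap\overline{B}(y^{\ast};\delta)$, so $\eta$ is $\delta$-local and consists of at most $m+1$ points.

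For the outgoing inequality, expand
\[
\mathbb{E}_{y\sim \eta}[f(y)\cdot(x-y)]=\tilde{f}(y^{\ast})\cdot x-\sum_{j}\mu_{j}f(d_{j})\cdot d_{j},
\]
and use the fixed-point bound $\tilde{f}(y^{\ast})\cdot x\leq \tilde{f}(y^{\ast})\cdot y^{\ast}=\sum_{j}\mu_{j}f(d_{j})\cdot y^{\ast}$ to reduce the right-hand side to $\sum_{j}\mu_{j}f(d_{j})\cdot(y^{\ast}-d_{j})$. Cauchy--Schwarz together with $\|y^{\ast}-d_{j}\|\leq \delta$ then gives the desired bound $\delta\,\mathbb{E}_{\eta}[\|f(y)\|]$. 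I do not anticipate a serious obstacle; the one delicate point is that the Carath\'{e}odory selection preserves $\delta$-locality, which is automatic because only grid points within $\delta$ of $y^{\ast}$ contribute positively to $\tilde{f}(y^{\ast})$, and so can appear in the convex representation.
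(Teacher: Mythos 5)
Your proposal is correct and follows essentially the same route as the paper: you reconstruct the proof of Theorem~\ref{th:dhfp} (tent-function extension $\tilde{f}$, one application of Theorem~\ref{th:hairy FP} to get $y^{\ast}$, Carath\'eodory in $\mathbb{R}^{m}$ to compress to $m+1$ grid points, and the same two-term decomposition $f(y)\cdot(x-y)=f(y)\cdot(x-y^{\ast})+f(y)\cdot(y^{\ast}-y)$), arriving at the bound $\delta\,\mathbb{E}_{\eta}[\|f(y)\|]$. The only thing left implicit is the final trivial step of choosing $\delta=\min\{\varepsilon/\sup_{x\in C}\|f(x)\|,\rho\}$ so that the $\delta$-local distribution is $\rho$-local and the bound is at most $\varepsilon$, which is exactly how the paper derives the corollary from the theorem.
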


\begin{proof}[Proof of Theorem \protect\ref{th:dhfp}]
From the values of $f$ on $D$ one can generate a continuous function $%
\widetilde{f}:C\rightarrow \mathbb{R}^{m}$ such that $\widetilde{f}(x)$ is a
weighted average of the values of $f$ on grid points that are within $\delta 
$ of $x,$ i.e., 
\begin{equation}
\widetilde{f}(x)\in \mathrm{conv}\{f(d):d\in D\cap B(x;\delta )\},
\label{eq:f-hat}
\end{equation}%
for all $x\in C.$ For instance, put%
\begin{equation*}
\widetilde{f}(x)%
{\;:=\;}%
\frac{\sum_{d\in D}\Lambda (x,d)f(d)}{\sum_{d\in D}\Lambda (x,d)},
\end{equation*}%
where $\Lambda (x,d):=[\delta -\left\Vert d-x\right\Vert ]_{+}$ (the
so-called $``$tent"\ function); $\widetilde{f}$ is continuous because $D$ is
finite, $\Lambda (x,d)$ is continuous in $x$, and the denominator is always
positive since $D$ is a $\delta $-grid of $C$; as for (\ref{eq:f-hat}), it
follows since $\left\Vert d-x\right\Vert \geq \delta $ implies $\Lambda
(x,d)=0.$

Theorem \ref{th:hairy FP} applied to $\widetilde{f}(x)$ yields a point $z\in
C$ such that $\widetilde{f}(z)\cdot (x-z)\leq 0$ for all $x\in C,$ and then (%
\ref{eq:f-hat}) yields a probability distribution $\eta $ on $D\cap
B(z;\delta )$ such that $\widetilde{f}(z)=\mathbb{E}_{y\sim \eta }[f(y)].$
The distribution $\eta $ is thus $\delta $-local$,$ and its support can be
taken to be of size at most $m+1$ by Carath\'{e}odory's theorem (because $%
f(y)\in \mathbb{R}^{m}).$ Now%
\begin{equation*}
\mathbb{E}_{y\sim \eta }[f(y)\cdot (x-y)]=\mathbb{E}_{y\sim \eta }[f(y)\cdot
(x-z)]+\mathbb{E}_{y\sim \eta }[f(y)\cdot (z-y)];
\end{equation*}%
the first term is $\mathbb{E}_{y\sim \eta }[f(y)]\cdot (x-z)=\widetilde{f}%
(z)\cdot (x-z)\leq 0$ (by the choice of $z),$ and the second term is $\leq
\delta \,\mathbb{E}_{y\sim \eta }\left[ \left\Vert f(y)\right\Vert \right] $
(because $\left\Vert y-z\right\Vert \leq \delta $ for every $y$ in the
support of $\eta ),$ which completes the proof.
\end{proof}

\subsection{FP-Procedures and MM-Procedures\label{sus:fp-mm procedures}}

The calibration proofs that we provide below construct procedures where the
forecast in each period is given by appealing either to the Outgoing Fixed
Point Theorems \ref{th:hairy FP} and \ref{th:dhfp} or to the Outgoing
Minimax Theorem \ref{th:hairy mM}, in order to satisfy the corresponding
forecast-hedging conditions. We will refer to these two kinds of procedures
as \emph{procedures of type FP }and \emph{procedures of type MM},
respectively.

This distinction is not just a matter of proof technique. It goes the other
way around as well (see Hazan and Kakade 2012 for details and relevant
literature): calibration that is obtained by FP-procedures, such as
continuous calibration, may be used to get approximate Nash equilibria in
non-zero-sum games.\footnote{%
This should come as no surprise since game dynamics where players best reply
to continuously calibrated forecasts yield in the long run approximate Nash
equilibria for general $n$-person games; see Section \ref{s:dynamics}.}
Therefore this kind of calibration falls essentially in the PPAD complexity
class, which is believed to go beyond the class of polynomially solvable
problems, such as minimax problems. The distinction between FP-obtainable
calibration and MM-obtainable calibration is a significant distinction, of
the non-polynomial vs. polynomial variety; see also Section \ref{sus:fp-mm
universes}.

\section{Calibrated Procedures\label{s:calibrated procedures}}

In this section we prove the three main calibration results: deterministic
continuous calibration, stochastic classic calibration, and almost
deterministic classic calibration. The proofs all run along the same lines:
first, we show that appropriate forecast-hedging conditions yield
calibration (Theorem \ref{th:FH}); and second, we construct, using the
outgoing results of Section \ref{s:FH tools}, procedures that satisfy the
forecast-hedging conditions (Theorem \ref{th:exist}).

We illustrate the idea of the proof (see also Section \ref{sus:illustration}%
) by showing how to construct a deterministic procedure that guarantees that 
$g_{t}(w)\rightarrow 0$ as $t\rightarrow \infty $ (see (\ref%
{eq:G/t-pointwise})) for a single continuous function $w:C\rightarrow
\lbrack 0,1].$ By the definition of $g_{t}$ we have $%
tg_{t}(w)=(t-1)g_{t-1}(w)+w(c_{t})(a_{t}-c_{t}),$ and so%
\begin{equation}
\left\Vert tg_{t}(w)\right\Vert ^{2}=\left\Vert (t-1)g_{t-1}(w)\right\Vert
^{2}+2(t-1)g_{t-1}(w)\cdot w(c_{t})(a_{t}-c_{t})+w(c_{t})^{2}\left\Vert
a_{t}-c_{t}\right\Vert ^{2}.  \label{eq:tg2}
\end{equation}%
The last term is $\leq \gamma ^{2}$ (since $w(c_{t})\in \lbrack 0,1]$ and $%
a_{t},c_{t}$ belong to $C,$ whose diameter is $\gamma ).$ The middle term is 
$2(t-1)\varphi (c_{t})\cdot (a_{t}-c_{t}),$ where $\varphi
(c):=w(c)g_{t-1}(w)$ is a continuous function of $c$ that takes values in $%
\mathbb{R}^{m}$ (because $g_{t-1}(w)\in \mathbb{R}^{m}).$ The Outgoing Fixed
Point Theorem \ref{th:hairy FP} then yields a point in\footnote{%
In this simple case of a single $w$ a fixed point is not really needed: take 
$c_{t}$ in $C$ that is maximal in the direction $g_{t-1}(w)$, i.e., $%
c_{t}\in \arg \max_{x\in C}x\cdot g_{t-1}(w).$ The fixed point \emph{is}
however needed once we consider multiple $w$'s.} $C$---which will be our
forecast $c_{t}$---that guarantees that $\varphi (c_{t})\cdot
(a_{t}-c_{t})\leq 0,$ for any action $a_{t}\in A\subseteq C.$ Therefore (\ref%
{eq:tg2}) yields the inequality $\left\Vert tg_{t}(w)\right\Vert ^{2}\leq
\left\Vert (t-1)g_{t-1}(w)\right\Vert ^{2}+\gamma ^{2},$ which applied
recursively gives $\left\Vert tg_{t}(w)\right\Vert ^{2}\leq t\gamma ^{2},$
and thus $\left\Vert g_{t}(w)\right\Vert \leq \gamma /\sqrt{t}\rightarrow 0$
as $t\rightarrow \infty .$ The proof is easily extended to handle continuous
binnings $(w_{i})_{i}$, such as $\Pi _{0}$ of Proposition \ref{p:universal
partition}, which yields continuous calibration. For classic calibration,
where the function $\varphi $ above is in general not continuous, we use the
Outgoing Minimax Theorem \ref{th:hairy mM} (for a variant $\psi $ of $%
\varphi );$ finally, using the Outgoing Almost Deterministic Fixed Point
Theorem \ref{th:dhfp} instead yields an almost deterministic procedure for
classic calibration.

\subsection{Forecast-Hedging\label{sus:FH}}

Let $\Pi =(w_{i})_{i=1}^{I}$ be a binning. For every period $t\geq 2$ and
history $h_{t-1}$ we define two functions, $\varphi _{t-1}$ and $\psi
_{t-1}, $ from $C$ to $\mathbb{R}^{m},$ by%
\begin{eqnarray*}
&&\varphi _{t-1}(c)%
{\;:=\;}%
\sum_{i=1}^{I}w_{i}(c)g_{t-1}(w_{i}) \\
&&\psi _{t-1}(c)%
{\;:=\;}%
\sum_{i=1}^{I}w_{i}(c)e_{t-1}(w_{i})
\end{eqnarray*}%
for every $c\in C.$ Thus $\varphi _{t-1}$ and $\psi _{t-1}$ are averages of
the vectors $g_{t-1}(w_{i})$ and $e_{t-1}(w_{i}),$ respectively, with
weights that vary with $c$ and are given by the binning $\Pi .$ We define:

\begin{description}
\item \textbf{(D) }A deterministic forecasting procedure $\sigma $ satisfies
the\emph{\ }$\Pi $\emph{-deterministic forecast-hedging} condition if, for
every $t\geq 2$ and history $h_{t-1},$%
\begin{equation}
\varphi _{t-1}(c_{t})\cdot (a-c_{t})\leq 0\ \ \text{for every }a\in A, 
\tag{D-FH}  \label{eq:DFH}
\end{equation}%
where $c_{t}=\sigma (h_{t-1})$ is the forecast at time $t$.

\item \textbf{(S) }A stochastic procedure $\sigma $ satisfies the $(\Pi
,\varepsilon )$-\emph{stochastic forecast}-\emph{hedging} condition for $%
\varepsilon >0$ if, for every $t\geq 2$ and history $h_{t-1},$%
\begin{equation}
\mathbb{E}_{t-1}\left[ \psi _{t-1}(c_{t})\cdot (a-c_{t})\right] \leq
\varepsilon \,\mathbb{E}_{t-1}\left[ \left\Vert \psi
_{t-1}(c_{t})\right\Vert \right] \text{\ \ for every }a\in A,  \tag{S-FH}
\label{eq:SFH}
\end{equation}%
where $\mathbb{E}_{t-1}$ denotes expectation with respect to the
distribution $\sigma (h_{t-1})$ of the forecast $c_{t}$ at time $t$.
\end{description}

\bigskip

\noindent \textbf{Remark. }The forecast-hedging conditions (\ref{eq:DFH})
and (\ref{eq:SFH}) require, for each history $h_{t-1},$ that the
corresponding inequality hold \textquotedblleft for \emph{every} $a\in A."$
This allows the action $a_{t}$ that follows the history $h_{t-1}$ to depend
on $h_{t-1},$ and thus also on $\sigma (h_{t-1}),$ which is determined by $%
h_{t-1}$. Therefore, when $\sigma $ is a deterministic procedure, $a_{t}$
may depend on $c_{t}$ as well; this is the \textquotedblleft leaky" setup of
Foster and Hart (2018) (when $\sigma $ is stochastic it may depend on the
distribution $\sigma (c_{t})$ of $c_{t},$ but \emph{not} on the actual
realization of $c_{t}$). See footnote \ref{ftn:strategies} and Section \ref%
{s:dynamics}.

\begin{theorem}
\label{th:FH} ~

\emph{\textbf{(D)}} If a deterministic procedure $\sigma $ satisfies the $%
\Pi $-deterministic forecast-hedging condition for a \emph{continuous}
binning $\Pi =(w_{i})_{i=1}^{I},$ then%
\begin{equation}
\lim_{t\rightarrow \infty }\left( \sup_{\mathbf{a}_{t}}K_{t}^{\Pi }\right)
=0.  \label{eq:K-Pi->0}
\end{equation}

\emph{\textbf{(S)} If }a stochastic procedure $\sigma $ satisfies the $(\Pi
,\varepsilon )$-stochastic forecast-hedging condition for a \emph{finite}
binning $\Pi =(w_{i})_{i=1}^{I}$ and $\varepsilon >0,$ then%
\begin{equation}
\varlimsup_{t\rightarrow \infty }\left( \sup_{\mathbf{a}_{t}}\mathbb{E}\left[
K_{t}^{\Pi }\right] \right) \leq \varepsilon .  \label{eq:sum-V}
\end{equation}
\end{theorem}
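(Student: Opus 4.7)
The plan for both parts is the same template: introduce a nonnegative quadratic potential that dominates a suitable measure of calibration error, show that the forecast-hedging condition controls its one-step increment, telescope, and convert the potential bound to a bound on $K_t^{\Pi}$ via Cauchy--Schwarz.

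For part (D), I would take
\begin{equation*}
X_t \;:=\; \sum_{i=1}^{I} \|t g_t(w_i)\|^2.
\end{equation*}
Because $t g_t(w_i) = (t-1) g_{t-1}(w_i) + w_i(c_t)(a_t - c_t)$, expanding the square gives
\begin{equation*}
X_t - X_{t-1} \;=\; 2(t-1)\,\varphi_{t-1}(c_t)\cdot (a_t - c_t) \,+\, \sum_{i} w_i(c_t)^2 \|a_t - c_t\|^2.
\end{equation*}
The (D-FH) condition forces the cross term to be $\leq 0$ for every realized $a_t \in A$, and the remaining sum is at most $\gamma^2$, since $w_i(c_t) \in [0,1]$ and $\sum_i w_i(c_t) = 1$ imply $\sum_i w_i(c_t)^2 \leq 1$. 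Iterating yields $X_t \leq \gamma^2 t$, hence $\sum_i \|g_t(w_i)\|^2 \leq \gamma^2/t$. When $I$ is finite, Cauchy--Schwarz gives $K_t^{\Pi} = \sum_i \|g_t(w_i)\| \leq \gamma \sqrt{I/t} \to 0$ uniformly in $\mathbf{a}_t$. When $I$ is infinite, I would split $K_t^{\Pi}$ into its first $k$ terms (bounded by $\sqrt{k}\gamma/\sqrt{t}$ via the same Cauchy--Schwarz) and its tail (made uniformly small in $\mathbf{a}_t$ by the Dini-based estimate (\ref{eq:tail-gt})); letting $t \to \infty$ and then $k \to \infty$ yields (\ref{eq:K-Pi->0}).

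For part (S), I would mirror this argument in expectation but tailor the potential so that the cross term that appears is exactly the one that (S-FH) bounds. Working with the error-side recursion $n_t(w_i) e_t(w_i) = n_{t-1}(w_i) e_{t-1}(w_i) + w_i(c_t)(a_t - c_t)$ and expanding a squared-norm potential of that quantity, the sum of cross terms over $i$ reduces (up to lower-order remainders) to $\psi_{t-1}(c_t)\cdot (a_t - c_t)$. Taking the conditional expectation $\mathbb{E}_{t-1}$ and applying (S-FH) together with the uniform bound $\|\psi_{t-1}(c)\| \leq \gamma$ (which holds because $\psi_{t-1}(c)$ is a convex combination of error vectors of norm $\leq \gamma$) bounds the per-period expected increment by a quantity of order $\varepsilon\gamma$. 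Summing over $t$, dividing by $t^2$, and applying Cauchy--Schwarz (valid because $I$ is finite) converts this into $\mathbb{E}[K_t^{\Pi}] \leq \varepsilon + o(1)$, which gives (\ref{eq:sum-V}).

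The main obstacle sits in part (S): the natural squared-gap expansion used in (D) produces a cross term involving $\varphi_{t-1}$, whereas (S-FH) is phrased in terms of $\psi_{t-1}$. The fix is to change the potential to one built from the error-side recursion so that the coefficients of $e_{t-1}(w_i)$ in the cross term are exactly the weights $w_i(c_t)$ that define $\psi_{t-1}(c_t)$. A secondary care point is tracking constants: the $\gamma$-factor enters both via $\|\psi\| \leq \gamma$ and via the quadratic remainder, and must be balanced so that the limiting bound is $\varepsilon$ rather than a larger multiple.
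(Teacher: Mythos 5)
Your part (D) is correct and is essentially the paper's argument (same potential $S_t=\sum_i\|tg_t(w_i)\|^2$, same telescoping, same Dini-based handling of the infinite-$I$ tail).

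Part (S) has the right high-level instinct---tailor the potential so the cross term involves $\psi_{t-1}$ rather than $\varphi_{t-1}$---but the execution as described does not work, and the paper itself explains why (Appendix~\ref{sus-a:FH}). Three concrete problems. First, ``a squared-norm potential of that quantity'' does not produce $\psi$: expanding $\sum_i\|n_t(w_i)e_t(w_i)\|^2$ yields the cross term $2\sum_i w_i(c_t)\,n_{t-1}(w_i)\,e_{t-1}(w_i)\cdot(a_t-c_t)=2(t-1)\varphi_{t-1}(c_t)\cdot(a_t-c_t)$, because $n_{t-1}(w_i)e_{t-1}(w_i)=(t-1)g_{t-1}(w_i)$; the $n_{t-1}(w_i)$ factors do not drop out as ``lower-order remainders.'' The paper's potential $X_t=\sum_i n_t(w_i)\|e_t(w_i)\|^2$ has $n_t$ to the \emph{first} power, and its one-step expansion (via the specific identity $(\alpha+\beta)\|\frac{\alpha u+\beta v}{\alpha+\beta}\|^2-\alpha\|u\|^2 = 2\beta\,u\cdot v-\beta\|u\|^2+\frac{\beta^2}{\alpha+\beta}\|u-v\|^2$) is what produces exactly $2\psi_{t-1}(c_t)\cdot(a_t-c_t)$, together with a \emph{negative} remainder $-\sum_i w_i(c_t)\|e_{t-1}(w_i)\|^2$. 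Second, you cannot get $\varepsilon$ by bounding $\|\psi_{t-1}\|\le\gamma$ and summing; the paper instead absorbs $2\varepsilon\|\psi_{t-1}(c_t)\|\le\sum_i w_i(c_t)(\varepsilon^2+\|e_{t-1}(w_i)\|^2)$ into that negative remainder, leaving the clean increment bound $\mathbb{E}_{t-1}[Y_t]\le\varepsilon^2$. Third, and most decisively, the Cauchy--Schwarz step you invoke (``valid because $I$ is finite'') produces a $\sqrt{I}$ factor: from $\sum_i\|g_t(w_i)\|^2\lesssim\varepsilon^2$ one only gets $K_t^\Pi\lesssim\varepsilon\sqrt{I}$, which is useless for classic calibration where $I\sim\varepsilon^{-m}$. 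The paper avoids this by normalizing $X_t$ by $t$ (not $t^2$) and applying Jensen to the weighted average $K_t^\Pi=\sum_i(n_t^i/t)\|e_t^i\|$, giving $(K_t^\Pi)^2\le X_t/t$ with no $I$-dependence. Your proposal would need to be rewritten around the linear-in-$n_t$ potential and the Jensen argument to actually close the gap.
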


\begin{proof}
\textbf{(D)} Put\footnote{%
The score $S_{t}$ is precisely $S$ of Section \ref{sus:illustration}.} $%
S_{t}:=\sum_{i=1}^{I}\left\Vert tg_{t}(w_{i})\right\Vert ^{2};$ we will show
that $\lim_{t\rightarrow \infty }(1/t^{2})S_{t}=0.$

Using (\ref{eq:tg2}) for each $w_{i},$ summing over $i,$ and recalling the
definition of $\varphi _{t-1}$ gives%
\begin{equation*}
S_{t}\leq S_{t-1}+2(t-1)\varphi _{t-1}(c_{t})\cdot (a_{t}-c_{t})+\gamma ^{2}
\end{equation*}%
(the last term is $\sum_{i}w_{i}(c_{t})^{2}\left\Vert a_{t}-c_{t}\right\Vert
^{2}\leq \gamma ^{2}\sum_{i}w_{i}(c_{t})=\gamma ^{2}$ since $w_{i}(c_{t})\in
\lbrack 0,1]).$ This inequality becomes $S_{t}\leq S_{t-1}+\gamma ^{2}$ when 
$\sigma $ satisfies (\ref{eq:DFH}); by recursion (starting with $S_{0}=0)$
we get $S_{t}\leq t\gamma ^{2}.$ All the inequalities hold for every action
sequence $\mathbf{a}_{t}\mathbf{,}$ because for every history $h_{t-1},$
inequality (\ref{eq:DFH}) holds for every $a.$ Thus, dividing by $t^{2},$ we
have%
\begin{equation*}
\sup_{\mathbf{a}_{t}}\sum_{i=1}^{I}\left\Vert g_{t}(w_{i})\right\Vert
^{2}\leq \frac{\gamma ^{2}}{t}\underset{t\rightarrow \infty }{%
\longrightarrow }0.
\end{equation*}%
Therefore $\sup_{\mathbf{a}_{t}}\left\Vert g_{t}(w_{i})\right\Vert
\rightarrow 0$ as $t\rightarrow \infty $ for every $i\in I,$ which yields (%
\ref{eq:K-Pi->0}) (by the same argument as in the second part of the proof
of Proposition \ref{p:cc<=>g}, because the binning $\Pi $ is continuous).

\textbf{(S)} Put $X_{t}:=\sum_{i=1}^{I}n_{t}(w_{i})\left\Vert
e_{t}(w_{i})\right\Vert ^{2}.$ We will show that 
\begin{equation*}
\varlimsup_{t\rightarrow \infty }\left( \sup_{\mathbf{a}_{t}}\mathbb{E}\left[
\frac{1}{t}X_{t}\right] \right) \leq \varepsilon ^{2};
\end{equation*}%
this yields (\ref{eq:sum-V}) since $\left( K_{t}^{\Pi }\right) ^{2}\leq
(1/t)X_{t}$ by Jensen's inequality.\footnote{%
The score $(1/t)X_{t}$ is the \emph{square}-calibration score for $\Pi ,$
namely, the average of the \emph{squared} norms of the errors (i.e., replace 
$\left\Vert e_{t}^{i}\right\Vert $ with $\left\Vert e_{t}^{i}\right\Vert
^{2} $ in formula (\ref{eq:K-Pi}) of $K_{t}^{\Pi })$.}

The proof consists of expressing the one-period increment of $X_{t}$ as a
sum of two terms, a $Y_{t}$-term, which, by forecast-hedging, is at most $%
\varepsilon ^{2}$ in expectation, and a $Z_{t}$-term, which converges to
zero:%
\begin{eqnarray}
X_{t}-X_{t-1} &=&Y_{t}+Z_{t},  \label{eq:X-X} \\
\mathbb{E}_{t-1}\left[ Y_{t}\right] &\leq &\varepsilon ^{2},\text{\ \ and}
\label{eq:Y-V} \\
\sup_{\mathbf{a}_{t}}\sum_{s=1}^{t}Z_{s} &\leq &\mathrm{O}(\log t)
\label{eq:Z}
\end{eqnarray}%
for every $t\geq 1$ (where $X_{0}=0).$ This proves the result, since taking
overall expectation of (\ref{eq:Y-V}) yields $\mathbb{E}\left[ Y_{t}\right]
\leq \varepsilon ^{2},$ and thus%
\begin{eqnarray*}
\mathbb{E}\left[ \frac{1}{t}X_{t}\right] &=&\mathbb{E}\left[ \frac{1}{t}%
\sum_{s=1}^{t}(X_{s}-X_{s-1})\right] =\frac{1}{t}\sum_{s=1}^{t}\mathbb{E}%
\left[ Y_{s}\right] +\frac{1}{t}\sum_{s=1}^{t}\mathbb{E}\left[ Z_{s}\right]
\\
&\leq &\varepsilon ^{2}+\mathrm{O}\left( \frac{\log t}{t}\right) \rightarrow
\varepsilon ^{2}
\end{eqnarray*}%
as $t\rightarrow \infty ,$ uniformly over $\mathbf{a}_{t}.$

$\bullet $ \emph{Proof of (\ref{eq:X-X}).} We start with the following
easy-to-check identity, for scalars $\alpha ,\beta \geq 0$ and vectors $u,v$:%
\begin{equation*}
(\alpha +\beta )\left\Vert \frac{\alpha u+\beta v}{\alpha +\beta }%
\right\Vert ^{2}-\alpha \left\Vert u\right\Vert ^{2}=2\beta u\cdot v-\beta
\left\Vert u\right\Vert ^{2}+\frac{\beta ^{2}}{\alpha +\beta }\left\Vert
u-v\right\Vert ^{2}.
\end{equation*}%
Using this for $\alpha =n_{t-1}(w),$ $\beta =w(c_{t}),$ $u=e_{t-1}(w),$ and $%
v=a_{t}-c_{t}$ yields%
\begin{equation*}
n_{t}(w)\left\Vert e_{t}(w)\right\Vert ^{2}-n_{t-1}(w)\left\Vert
e_{t-1}(w)\right\Vert ^{2}=y_{t}(w)+z_{t}(w),
\end{equation*}%
where%
\begin{eqnarray*}
y_{t}(w) &%
{\;:=\;}%
&2w(c_{t})e_{t-1}(w)\cdot (a_{t}-c_{t})-w(c_{t})\left\Vert
e_{t-1}(w)\right\Vert ^{2}\text{\ \ and} \\
z_{t}(w) &%
{\;:=\;}%
&\frac{w(c_{t})^{2}}{n_{t}(w)}\left\Vert e_{t-1}(w)-(a_{t}-c_{t})\right\Vert
^{2}\leq 4\gamma ^{2}\frac{w(c_{t})^{2}}{n_{t}(w)}
\end{eqnarray*}%
(the last inequality because $\left\Vert e_{t-1}(w)\right\Vert \leq \gamma $
and $\left\Vert a_{t}-c_{t}\right\Vert \leq \gamma ).$ Applying this to each 
$w_{i},$ summing over $i,$ and recalling the definition of $X_{t}$ and $\psi
_{t-1}$ gives $X_{t}-X_{t-1}=Y_{t}+Z_{t},$ where%
\begin{eqnarray*}
&Y_{t}&%
{\;:=\;}%
\sum_{i=1}^{I}y_{t}(w_{i})=2\psi _{t-1}(c_{t})\cdot
(a_{t}-c_{t})-\sum_{i=1}^{I}w_{i}(c_{t})\left\Vert e_{t-1}(w_{i})\right\Vert
^{2},\;\;\text{and} \\
&Z_{t}&%
{\;:=\;}%
\sum_{i=1}^{I}z_{t}(w_{i})\leq 4\gamma ^{2}\sum_{i=1}^{I}\frac{%
w_{i}(c_{t})^{2}}{n_{t}(w_{i})}.
\end{eqnarray*}

$\bullet $ \emph{Proof of (\ref{eq:Y-V}).} By the stochastic
forecast-hedging condition (\ref{eq:SFH}) we have $\mathbb{E}_{t-1}\left[
2\psi _{t-1}(c_{t})\cdot (a_{t}-c_{t})\right] \leq \mathbb{E}_{t-1}\left[
2\varepsilon \left\Vert \psi _{t-1}(c_{t})\right\Vert \right] ;$ now 
\begin{eqnarray*}
2\varepsilon \left\Vert \psi _{t-1}(c_{t})\right\Vert &\leq
&\sum_{i=1}^{I}w_{i}(c_{t})\left( 2\varepsilon \left\Vert
e_{t-1}(w_{i})\right\Vert \right) \mathbb{\leq }\sum_{i=1}^{I}w_{i}(c_{t})%
\left( \varepsilon ^{2}+\left\Vert e_{t-1}(w_{i})\right\Vert ^{2}\right) \\
&=&\varepsilon ^{2}+\sum_{i=1}^{I}w_{i}(c_{t})\left\Vert
e_{t-1}(w_{i})\right\Vert ^{2}.
\end{eqnarray*}

$\bullet $ \emph{Proof of (\ref{eq:Z}). }We claim that\footnote{%
One can easily obtain a bound of $\mathrm{o}(t)$ in (\ref{eq:Z->0}), since $%
w(c_{t})^{2}/n_{t}(w)\leq w(c_{t})/n_{t}\rightarrow 0$ as $t\rightarrow
\infty $ (indeed, if $n_{t}(w)\rightarrow \infty $ then $w(c_{t})/n_{t}(w)%
\leq 1/n_{t}(w)\rightarrow 0,$ and if $n_{t}(w)\rightarrow N<\infty $ then $%
w(c_{t})/n_{t}(w)=1-n_{t-1}(w)/n_{t}(w)\rightarrow 1-N/N=0).$ Inequality (%
\ref{eq:Z->0}) provides a better bound, uniform over all $w$ and sequences $%
c_{t}$.} 
\begin{equation}
\sum_{s=1}^{t}\frac{w(c_{s})^{2}}{n_{s}(w)}<\ln n_{t}(w)+2\leq \ln t+2
\label{eq:Z->0}
\end{equation}%
for every $w:C\rightarrow \lbrack 0,1]$ and $t\geq 1$ with $n_{t}(w)>0.$
Indeed, both $w(c_{s})$ and $w(c_{s})/n_{s}(w)$ are between $0$ and $1,$ and
so for every $1\leq r\leq t$ we have%
\begin{eqnarray*}
\sum_{s=1}^{r}\frac{w(c_{s})^{2}}{n_{s}(w)} &\leq
&\sum_{s=1}^{r}w(c_{s})=n_{r}(w)\;\;\;\;\text{and} \\
\sum_{s=r+1}^{t}\frac{w(c_{s})^{2}}{n_{s}(w)} &\leq &\sum_{s=r+1}^{t}\frac{%
w(c_{s})}{n_{s}(w)}=\sum_{s=r+1}^{t}\left( 1-\frac{n_{s-1}(w)}{n_{s}(w)}%
\right)  \\
&\leq &\sum_{s=r+1}^{t}\ln \left( \frac{n_{s}(w)}{n_{s-1}(w)}\right) =\ln
\left( \frac{n_{t}(w)}{n_{r}(w)}\right) 
\end{eqnarray*}%
(we used $1-1/x\leq \ln x$ for $x\geq 1).$ Taking $r\leq t$ such that $1\leq
n_{r}(w)<2$ yields $<2$ in the first inequality and $\leq \ln n_{t}(x)\leq
\ln t$ in the second, and thus (\ref{eq:Z->0}); if there is no such $r$ then 
$n_{t}(w)<1,$ and the first inequality with $r=t$ gives $<1,$ and thus (\ref%
{eq:Z->0}). Applying (\ref{eq:Z->0}) to each $w_{i}$ and summing over $i$
yields $\sum_{s=1}^{t}Z_{s}\leq 4\gamma ^{2}I(\ln t+2)$, and thus (\ref{eq:Z}%
).

This completes the proof of (S).
\end{proof}

\bigskip

\noindent \textbf{Remark.} In (S), using (\ref{eq:Y-V}) one gets the
stronger almost sure convergence; see Appendix \ref{sus-a:prob1}.

\bigskip

The reason that the two proofs are slightly different---we use $S_{t},$ and
thus $\varphi _{t-1},$ in (D), and $X_{t},$ and thus $\psi _{t-1},$ in
(S)---has to do with the limit being $0$ in the former, and $\varepsilon $
in the latter. Roughly speaking, for vectors $u$ in $I$-dimensional space, $%
\left\Vert u\right\Vert =\left( \sum_{i}u_{i}^{2}\right) ^{1/2}\rightarrow 0$
implies $\left\Vert u\right\Vert _{1}=\sum_{i}|u_{i}|\rightarrow 0$
regardless of the size of $I,$ whereas $\left\Vert u\right\Vert \leq
\varepsilon $ yields $\left\Vert u\right\Vert _{1}\leq \sqrt{I}\varepsilon ,$
which may not be small when $I$ increases with $\varepsilon ;$ see Appendix %
\ref{sus-a:FH} for further details.

\bigskip

We now show that the outgoing results of Section \ref{s:FH tools} yield the
existence of forecast-hedging procedures.

\begin{theorem}
\label{th:exist}~

\emph{\textbf{(D)}} For every continuous binning $\Pi $ there exists a
deterministic procedure of type FP that satisfies the $\Pi $-deterministic
forecast-hedging condition.

\emph{\textbf{(S)} }For every finite binning $\Pi $, every $\varepsilon >0,$
and every finite $\varepsilon $-grid $D$ of $C,$ there exists a stochastic
procedure of type MM with forecasts in $D$ that satisfies the $(\Pi
,\varepsilon )$-stochastic forecast-hedging condition.

\emph{\textbf{(AD)} }For every finite binning $\Pi $, every $\varepsilon >0,$
and every finite $\varepsilon $-grid $D$ of $C,$ there exists an $%
\varepsilon $-almost deterministic procedure of type FP with forecasts in $D$
that satisfies the $(\Pi ,\varepsilon )$-stochastic forecast-hedging
condition.
\end{theorem}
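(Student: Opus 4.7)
The plan is to apply the three outgoing results of Section \ref{s:FH tools} directly, one per part: the Outgoing Fixed Point Theorem \ref{th:hairy FP} for (D), the Outgoing Minimax Theorem \ref{th:hairy mM} for (S), and the Almost Deterministic Outgoing Fixed Point Theorem \ref{th:dhfp} for (AD). In each case the forecast (or forecast distribution) at time $t$ is produced from the history $h_{t-1}$ by feeding the relevant vector field ($\varphi_{t-1}$ for (D), $\psi_{t-1}$ for (S) and (AD)) into the corresponding outgoing tool. Because each of those theorems supplies the outgoing inequality for \emph{all} $x\in C$, and since $A\subseteq C$, the forecast-hedging conditions---which only require the inequality for $a\in A$---are immediate.

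For (D), I would first verify that $\varphi_{t-1}:C\to\mathbb{R}^m$ is continuous. For a finite continuous binning this is clear, since $\varphi_{t-1}$ is a finite sum of products of the continuous functions $w_i$ with the fixed vectors $g_{t-1}(w_i)$. For a countably infinite continuous binning $\Pi=(w_i)_{i=1}^{\infty}$, I would write $\varphi_{t-1}$ as the limit of the continuous partial sums $\sum_{i=1}^{k}w_i(\cdot)\,g_{t-1}(w_i)$; the bounds $\|g_{t-1}(w_i)\|\le\gamma$ together with (\ref{eq:dini}) (or equivalently (\ref{eq:tail-gt})) show that this convergence is uniform in $c$, so the limit is continuous. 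Theorem \ref{th:hairy FP} applied to $\varphi_{t-1}$ then delivers a point $y_{t-1}\in C$ with $\varphi_{t-1}(y_{t-1})\cdot(x-y_{t-1})\le 0$ for every $x\in C$; setting $\sigma(h_{t-1}):=y_{t-1}$ yields a deterministic FP procedure satisfying (\ref{eq:DFH}).

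For (S), given a finite binning $\Pi$, a finite $\varepsilon$-grid $D$ of $C$, and $\varepsilon>0$, I would take $f:D\to\mathbb{R}^m$ defined by $f(y):=\psi_{t-1}(y)$ and apply Theorem \ref{th:hairy mM} with $\delta=\varepsilon$. This produces a distribution $\eta$ supported on $D$ such that $\mathbb{E}_{y\sim\eta}[\psi_{t-1}(y)\cdot(x-y)]\le\varepsilon\,\mathbb{E}_{y\sim\eta}[\|\psi_{t-1}(y)\|]$ for every $x\in C$; defining $\sigma(h_{t-1}):=\eta$ gives an MM procedure with forecasts in $D$ satisfying (\ref{eq:SFH}). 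Part (AD) is handled identically, only replacing Theorem \ref{th:hairy mM} by Theorem \ref{th:dhfp}, which additionally guarantees that $\eta$ is $\varepsilon$-local; this makes the resulting procedure $\varepsilon$-almost deterministic of type FP, while the outgoing inequality, and hence (\ref{eq:SFH}), is the same as in (S).

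The only genuine subtlety I foresee is the continuity check for $\varphi_{t-1}$ in the infinite-binning case of (D); parts (S) and (AD) need no regularity of $\psi_{t-1}$ whatsoever, because the outgoing minimax and almost-deterministic tools only use the values of $f$ on the finite grid $D$. Once the continuity of $\varphi_{t-1}$ is established, each part reduces to a one-line invocation of the corresponding outgoing theorem applied separately at every history $h_{t-1}$.
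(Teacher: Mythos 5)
Your proposal is correct and follows essentially the same route as the paper: verify continuity of $\varphi_{t-1}$ (with the same uniform-convergence argument via Dini's theorem for the countably infinite case) and apply Theorem \ref{th:hairy FP} for (D), then invoke Theorem \ref{th:hairy mM} and Theorem \ref{th:dhfp} on $\psi_{t-1}$ with $\delta=\varepsilon$ for (S) and (AD) respectively. The only difference is that you spell out the tail-bound estimate $\|g_{t-1}(w_i)\|\le\gamma$ combined with (\ref{eq:dini}) explicitly, whereas the paper just references the second part of the proof of Proposition \ref{p:cc<=>g}.
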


\begin{proof}
\textbf{(D)} When $\Pi $ is a continuous binning, each function $\varphi
_{t-1}$ is continuous (since each $w_{i}$ is continuous and $\left\Vert
g_{t-1}(w_{i})\right\Vert \leq \gamma ;$ when $I$ is infinite use the
uniform convergence of the corresponding finite sums, as in the second part
of the proof of Proposition \ref{p:cc<=>g}). Apply the Outgoing Fixed Point
Theorem \ref{th:hairy FP} to $\varphi _{t-1}$ for each history $h_{t-1}.$

\textbf{(S)} Apply the Outgoing Minimax Theorem \ref{th:hairy mM} to $\psi
_{t-1}$ and $\delta =\varepsilon $ for each history $h_{t-1}.$

\textbf{(AD)} Apply the Outgoing Almost Deterministic Fixed Point Theorem %
\ref{th:dhfp} to $\psi _{t-1}$ and $\delta =\varepsilon $ for each history $%
h_{t-1}.$
\end{proof}

\subsection{Calibration}

We now immediately obtain the existence of appropriate calibrated procedures.

\begin{theorem}
\label{th:all-calib}~

\emph{\textbf{(D)}} There exists a deterministic procedure of type FP that
is continuously calibrated.

\emph{\textbf{(S)} }For every $\varepsilon >0$ there exists a stochastic
procedure of type MM that is $\varepsilon $-calibrated; moreover, all its
forecasts are in $D$ for any given finite $\varepsilon $-grid $D$ of $C.$

\emph{\textbf{(AD)} }For every $\varepsilon >0$ there exists an $\varepsilon 
$-almost deterministic procedure of type FP that is $\varepsilon $%
-calibrated; moreover, all its forecasts are in $D$ for any given finite $%
\varepsilon $-grid $D$ of $C.$
\end{theorem}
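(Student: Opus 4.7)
The plan is to combine the two workhorse results of this section: Theorem \ref{th:exist} constructs procedures satisfying a suitable forecast-hedging condition, and Theorem \ref{th:FH} converts such a condition into the corresponding calibration guarantee. The only real decision in each of the three parts is which binning $\Pi$ to feed in.

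For part (D), I would take $\Pi$ to be the universal continuous binning $\Pi_0$ supplied by Proposition \ref{p:universal partition}. Theorem \ref{th:exist}(D) then produces a deterministic type-FP procedure $\sigma$ satisfying $\Pi_0$-deterministic forecast-hedging; Theorem \ref{th:FH}(D) upgrades this to $\Pi_0$-calibration, i.e., $\sup_{\mathbf{a}_t} K_t^{\Pi_0} \to 0$; and Proposition \ref{p:universal partition} then promotes $\Pi_0$-calibration to full continuous calibration.

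For parts (S) and (AD), fix $\varepsilon>0$ and a finite $\varepsilon$-grid $D$ of $C$. I would take the finite binning $\Pi$ consisting of the indicators $\mathbf{1}_x$ for $x\in D$, padded with the single leftover function $w_0 := \mathbf{1} - \sum_{x\in D}\mathbf{1}_x$ so that the $w_i$ genuinely partition unity on all of $C$. Theorem \ref{th:exist}(S) (respectively Theorem \ref{th:exist}(AD)) then yields a stochastic type-MM procedure (respectively an $\varepsilon$-almost deterministic type-FP procedure) $\sigma$ whose forecasts lie in $D$ and that satisfies $(\Pi,\varepsilon)$-stochastic forecast-hedging; Theorem \ref{th:FH}(S) gives $\varlimsup_t \sup_{\mathbf{a}_t} \mathbb{E}[K_t^\Pi] \leq \varepsilon$. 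Because every realized forecast lies in $D$, the leftover bin collects no mass, so $g_t(w_0)=0$ and consequently $K_t^\Pi = \sum_{x\in D}\|g_t(\mathbf{1}_x)\| = K_t$, the classic calibration score. Thus $\sigma$ is $\varepsilon$-calibrated, and the moreover clauses are immediate since $\sigma$'s forecasts are already constrained to $D$.

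There is no genuine obstacle left to overcome here: the fixed-point and minimax work has been done in Section \ref{s:FH tools}, and the forecast-hedging-implies-calibration work has been done in Theorem \ref{th:FH}. The only bookkeeping subtlety is the identification $K_t^\Pi = K_t$ in parts (S) and (AD), which is why introducing the inert leftover bin $w_0$ is convenient; this lets us treat the classic calibration score as a special case of a $\Pi$-calibration score without leaving the framework of partitions of unity on $C$.
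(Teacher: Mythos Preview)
Your proposal is correct and matches the paper's own proof essentially line for line: part (D) uses the universal continuous binning $\Pi_0$ of Proposition \ref{p:universal partition} together with Theorems \ref{th:exist}(D) and \ref{th:FH}(D), while parts (S) and (AD) use the indicator binning on $D$ padded by the inert leftover bin $w_0=\mathbf{1}_{C\setminus D}$, then invoke Theorems \ref{th:exist}(S)/(AD) and \ref{th:FH}(S) and observe $K_t^{\Pi}=K_t$ because $g_t(w_0)=0$. There is nothing to add.
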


Part (D) implies, by Proposition \ref{p:cc=>smooth,weak} in Appendix \ref%
{sus-a:smooth-calib}, the results of Foster and Hart (2018) for smooth
calibration and of Kakade and Foster (2004) and Foster and Kakade (2006) for
weak calibration. Part (S) yields the classic calibration result of Foster
and Vohra (1998), and part (AD) the result of Kakade and Foster (2004) for
almost deterministic classic calibration.

\bigskip

\begin{proof}
\textbf{(D)} Apply Theorem \ref{th:exist}(D) and Theorem \ref{th:FH}(D) with
the continuous binning $\Pi _{0}$ given by Proposition \ref{p:universal
partition}.

\textbf{(S)} Let $D=\{d_{1},...,d_{I}\}$ be a given finite $\varepsilon $%
-grid of $C.$ Put $w_{i}:=\mathbf{1}_{d_{i}}$ for $i=1,...,I,$ and $w_{0}:=%
\mathbf{1}_{C\backslash D}$ and let $\Pi $ be the finite binning $%
(w_{i})_{i=0}^{I}.$ When all forecasts are in $D$ we have $K_{t}^{\Pi
}=\sum_{i=1}^{I}\left\Vert g_{t}(\mathbf{1}_{d_{i}})\right\Vert =K_{t}$
(since $g_{t}(w_{0})=0).$ Apply Theorem \ref{th:exist}(S) and Theorem \ref%
{th:FH}(S).

\textbf{(AD)} Same as (S), applying Theorem \ref{th:exist} (AD).
\end{proof}

\section{A Simple Calibrated Procedure for Binary Events\label{s:1-dim}}

This section shows how to obtain classic calibration in the one-dimensional
case, where the actions are binary yes/no outcomes (such as win/lose in
politics and sport events, or rain/shine, and so on), by a procedure that is
as simple as can be; it is simpler than any existing procedure, including
the one in Foster (1999). The procedure is moreover almost deterministic,
with all randomizations being between two neighboring points on a fixed
grid. It is essentially the procedure described in Section \ref%
{sus:illustration} in the Introduction, except that we work with the
normalized errors $e$ instead of the gaps $G.$

We are thus in the one-dimensional case ($m=1$), with $A=\{0,1\}$ (with,
say, $1$ for \textquotedblleft rain" and $0$ for \textquotedblleft no rain")
and $C=[0,1].$ Fix an integer $N\geq 1,$ and let $D:=\{0,~1/N,~2/N,~...,~1\}$
be the grid on which the forecasts lie. Consider a history $h_{t-1}$. For
every $i=0,1,...,N,$ the error of the forecast $i/N$ is $%
e^{i}:=e_{t-1}(i/N)=r^{i}/n^{i}-i/N,$ where $n^{i}$ is the number of times
that the forecast $i/N$ has been used in the first $t-1$ periods, and $r^{i}$
is the number of rainy periods among these $n^{i}$ periods (with $e^{i}=0$
when $n^{i}=0).$ The procedure $\sigma $ chooses the forecast $c_{t}$ as
follows (as in Figure \ref{fig1}, with $e$ instead of $G$):

\begin{itemize}
\item \emph{Case 1: There is }$j$\emph{\ such that }$e^{j}=0.$ Put $y:=j/N$
and let the (deterministic) forecast be\footnote{%
Since $e^{j}=0$ for unused forecasts $j/N,$ in the first periods we try each
point on the grid once; alternatively, assume that there is some initial
data for each possible forecast (all this does not matter, of course, in the
long run).} $c_{t}=y.$

\item \emph{Case 2: }$e^{i}\neq 0$\emph{\ for all }$i.$ In this case $%
e^{0}>0 $ (because $r^{0}\geq 0)$ and $e^{N}<0$ (because $r^{N}\leq n^{N}),$
and so let $j\geq 1$ be, for concreteness, the smallest index with $e^{j}<0;$
thus\footnote{%
Any $j$ for which $e^{j-1}$ and $e^{j}$ have opposite signs will work here.
In fact, a $j$ for which the signs are reversed, i.e., $e_{{}}^{j-1}<0<e^{j}$
(however, such a $j$ need not exist in general), will work even better, as
it yields $0$ on the right-hand side of the forecast-hedging condition (\ref%
{eq:SFH}).} $e^{j-1}>0>e^{j}.$ Put $y_{1}:=(j-1)/N$ and $y_{2}:=j/N,$ and
let the forecast be $c_{t}=y_{1}$ with probability $%
p_{1}:=|e^{j}|/(|e^{j-1}|+|e^{j}|)$ and $c_{t}=y_{2}$ with the remaining
probability $p_{2}:=|e^{j-1}|/(|e^{j-1}|+|e^{j}|);$ thus, $%
p_{1}e_{t-1}(y_{1})+p_{2}e_{t-1}(y_{2})=0$ (cf. (\ref{eq:E[G]=0})), and $%
y_{2}-y_{1}=1/N.$
\end{itemize}

The above construction amounts to linearly interpolating the function $%
e_{t-1}$ from the finite grid $D$ to the whole interval $[0,1],$ and then
taking a point where this function vanishes ($y$ in Case 1, and $%
p_{1}y_{1}+p_{2}y_{2}$ in Case 2) and using it for the forecast ($y$ itself
in Case 1, and the $p_{1},p_{2}$ probabilistic mixture of $y_{1}$ and $y_{2}$
in Case 2). We thus have $\mathbb{E}_{t-1}\left[ e_{t-1}(c_{t})\right] =0$
in both cases, where $\mathbb{E}_{t-1}$ stands for $\mathbb{E}\left[ \cdot
|h_{t-1}\right] .$

\begin{theorem}
\label{th:1-dim-e}The above procedure $\sigma $ is $1/(2N)$-almost
deterministic and $1/(2N)$-calibrated.
\end{theorem}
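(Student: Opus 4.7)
The plan is to verify that this procedure satisfies the stochastic forecast-hedging condition of Theorem \ref{th:FH}(S) with $\varepsilon = 1/(2N)$, and then invoke that theorem with the indicator binning on the grid $D$. The almost-determinism part is immediate: in Case~1 the forecast is a single point, and in Case~2 it is supported on $\{(j-1)/N,\, j/N\}$, a set of diameter $1/N$, hence contained in a closed ball of radius $1/(2N)$ around the midpoint.

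For calibration, I would use the binning $\Pi$ consisting of $w_i := \mathbf{1}_{i/N}$ for $i=0,1,\ldots,N$ together with $\mathbf{1}_{C\setminus D}$ (so that $\sum w = \mathbf{1}$). Since every forecast $c_t$ lies in $D$, the off-grid indicator contributes nothing and $K_t^{\Pi} = \sum_{i=0}^{N}\|g_t(\mathbf{1}_{i/N})\| = K_t$; moreover $\psi_{t-1}(c) = e_{t-1}(c)$ for $c\in D$. The key step is the following calculation for Case~2. Setting $V := p_1 e^{j-1}$, the construction forces $-p_2 e^j = V$, so that $p_1 e^{j-1} + p_2 e^j = 0$ (this is exactly the point of choosing $p_1, p_2$ inversely proportional to $|e^{j-1}|,|e^j|$), and moreover $V>0$ because $e^{j-1}>0$ and $p_1>0$. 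Then
\[
\mathbb{E}_{t-1}\!\left[\psi_{t-1}(c_t)(a-c_t)\right] = p_1 e^{j-1}(a-y_1) + p_2 e^j(a-y_2) = V(y_2-y_1) = \frac{V}{N},
\]
the $a$-dependence cancelling exactly, while
\[
\mathbb{E}_{t-1}\!\left[|\psi_{t-1}(c_t)|\right] = p_1 e^{j-1} + p_2|e^j| = 2V.
\]
Hence $V/N = (1/(2N)) \cdot 2V$, so the $(\Pi, 1/(2N))$-stochastic forecast-hedging condition (\ref{eq:SFH}) holds (with equality) for every $a\in A$. Case~1 is even easier: $\psi_{t-1}(c_t) = e^j = 0$ makes both sides of (\ref{eq:SFH}) vanish. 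Applying Theorem \ref{th:FH}(S) with this finite binning gives $\varlimsup_{t\to\infty} \sup_{\mathbf{a}_t} \mathbb{E}[K_t] = \varlimsup_{t\to\infty} \sup_{\mathbf{a}_t}\mathbb{E}[K_t^\Pi] \leq 1/(2N)$, which is exactly $(1/(2N))$-calibration.

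There is no real obstacle beyond recognizing that the mixture in Case~2 is engineered so that the $a$-dependence cancels in the inner product, leaving only the grid-spacing residual $V/N$, while each support point contributes $V$ to the normalizer on the right, producing the sharp ratio $1/(2N)$. The rest is bookkeeping plus one invocation of the general forecast-hedging theorem.
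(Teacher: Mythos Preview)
Your proof is correct and follows essentially the same route as the paper: verify the $(\Pi,1/(2N))$-stochastic forecast-hedging condition via the key identity $\mathbb{E}_{t-1}[e_{t-1}(c_t)]=0$ (which kills the $a$-dependence), then invoke Theorem~\ref{th:FH}(S) with the indicator binning on $D$. The paper packages the Case~2 computation slightly differently---it introduces the midpoint $\bar y=(y_1+y_2)/2$ and splits $a-c_t=(a-\bar y)+(\bar y-c_t)$, bounding the second piece by $|e_{t-1}(c_t)|/(2N)$---whereas you compute both sides of (\ref{eq:SFH}) directly and observe they are \emph{equal}; but this is the same argument, and your explicit computation has the small bonus of making the tightness of the $1/(2N)$ constant visible.
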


\begin{proof}
Put $\bar{y}:=y$ in Case 1 and $\bar{y}:=(y_{1}+y_{2})/2$ in Case 2. Then $|%
\bar{y}-c_{t}|\leq 1/(2N)$ in both cases, which implies that $\mathbb{E}%
_{t-1}\left[ e_{t-1}(c_{t})\cdot (\bar{y}-c_{t})\right] \leq (1/2N)\mathbb{E}%
_{t-1}\left[ |e_{t-1}(c_{t})|\right] .$ Now $\mathbb{E}_{t-1}\left[
e_{t-1}(c_{t})\cdot (a-\bar{y})\right] =0$ for every $a$ (because $a-\bar{y}$
is constant given $h_{t-1},$ and $\mathbb{E}_{t-1}\left[ e_{t-1}(c_{t})%
\right] =0$ by the construction of $\sigma )$; adding to the previous
inequality gives the $(\Pi ,\varepsilon )$-stochastic forecast-hedging
condition (\ref{eq:SFH}), where $\Pi $ is the same as in the proof of
Theorem \ref{th:all-calib}(S), and $\varepsilon =1/(2N).$ Therefore $\sigma $
is $1/(2N)$-calibrated by Theorem \ref{th:FH}(S); in addition, $\sigma $ is $%
1/(2N)$-almost deterministic because we always have $|c_{t}-\bar{y}|\leq
1/(2N).$
\end{proof}

\bigskip

The calibration bound of $1/(2N)$ is the best that one can achieve with
forecasts on the grid $D$: consider for instance the action sequence where $%
a_{t}$ equals $1$ with probability $1/(2N),$ independently over $t.$

\section{Calibration and Game Dynamics\label{s:dynamics}}

Forecasts are a useful tool for dynamic multi-player interactions. Consider
a game that is played repeatedly. A natural type of game dynamic is one
where in each period the players make forecasts on what will happen next and
then choose their actions in response to these forecasts. Interesting
long-run behavior obtains when the forecasts are \textquotedblleft
good"---i.e., calibrated---and the responses to the forecasts are
\textquotedblleft good"---i.e., best responses.

The \textquotedblleft calibrated learning" of Foster and Vohra (1997), on
the one hand, and the \textquotedblleft publicly calibrated learning" of
Kakade and Foster (2004) and the \textquotedblleft smooth calibrated
learning" of Foster and Hart (2018), on the other hand, are two such types
of game dynamics. The main difference between the two types is that in the
former each player uses a stochastic classically calibrated forecasting
procedure, whereas in the latter all players use the same deterministic
weakly, or smoothly, calibrated forecasting procedure. In the long run, the
former yields correlated equilibria as the time average of play, whereas the
latter yields Nash equilibria as the period-by-period behavior (of course,
everything should be understood with appropriate \textquotedblleft
approximate" adjectives); see Foster and Hart (2018) for a more extensive
discussion. If we replace the deterministic weakly and smoothly calibrated
procedures with the stronger, but easier to obtain, deterministic
continuously calibrated procedures (see Proposition \ref{p:cc=>smooth,weak}
in Appendix \ref{sus-a:cont-calib}), we obtain the same long-run result:
period-by-period behavior that is close to Nash equilibria. The simplicity
of continuous calibration allows for a simple result and proof; see Theorem %
\ref{th:cont learn} below.

The game dynamics results underscore the importance of \emph{deterministic}
procedures, which are \textquotedblleft leaky" (see Foster and Hart 2018)
and thus remain calibrated even if in each period the forecast is revealed
before the action is chosen. By contrast, stochastic procedures are no
longer calibrated if the actual \emph{realization} of the random forecast is
revealed before the action is chosen.

\subsection{Continuously Calibrated Learning\label{sus:cont-learn}}

A finite \emph{game} is given by a finite set of players $N$, and, for each
player $i\in N,$ a finite set of pure strategies $A^{i}$ and a payoff
function $u^{i}:A\rightarrow \mathbb{R},$ where $A:=\prod_{i\in N}A^{i}$
denotes the set of strategy combinations of all players. Let $n:=|N|$ be the
number of players, $m^{i}:=|A^{i}|$ the number of pure strategies of player $%
i,$ and $m:=\sum_{i\in N}m^{i}.$ The set of mixed strategies of player $i$
is $X^{i}:=\Delta (A^{i}),$ the unit simplex (i.e., the set of probability
distributions) on $A^{i}$; we identify the pure strategies in $A^{i}$ with
the unit vectors of $X^{i},$ and so $A^{i}\subseteq X^{i}.$ Put $C\equiv
X:=\prod_{i\in N}X^{i}$ for the set of mixed-strategy combinations (i.e., $N$%
-tuples of mixed strategies). The payoff functions $u^{i}$ are multilinearly
extended to $X,$ and thus $u^{i}:X\rightarrow \mathbb{R}.$

For each player $i$ and combination of mixed strategies of the other players 
$x^{-i}=(x^{j})_{j\neq i}\in \prod_{j\neq i}X^{j}=:X^{-i},$ let $\bar{u}%
^{i}(x^{-i}):=\max_{y^{i}\in X^{i}}u^{i}(y^{i},x^{-i})=\max_{a^{i}\in
A^{i}}u^{i}(a^{i},x^{-i})$ be the maximal payoff that $i$ can obtain against 
$x^{-i};$ for every $\varepsilon \geq 0,$ let $\mathrm{BR}_{\varepsilon
}^{i}(x^{-i}):=\{x^{i}\in X^{i}:u^{i}(x^{i},x^{-i})\geq \bar{u}%
^{i}(x^{-i})-\varepsilon \}$ denote the set of $\varepsilon $\emph{-best
replies} of $i$ to $x^{-i}.$ A (mixed) strategy combination $x\in X$ is a 
\emph{Nash }$\varepsilon $\emph{-equilibrium} if $x^{i}\in \mathrm{BR}%
_{\varepsilon }^{i}(x^{-i})$ for every $i\in N;$ let \textrm{NE}$%
(\varepsilon )\subseteq X$ denote the set of Nash $\varepsilon $-equilibria
of the game.

A (discrete-time) \emph{dynamic }consists of each player $i\in N$ playing a
pure strategy $a_{t}^{i}\in A^{i}$ at each time period $t=1,2,...;$ put $%
a_{t}=(a_{t}^{i})_{i\in N}\in A.$ There is perfect monitoring: at the end of
period $t$ all players observe $a_{t}$. The dynamic is \emph{uncoupled }%
(Hart and Mas-Colell 2003, 2006, 2013) if the play of every player $i$ may
depend \emph{only} on player $i$'s payoff function $u^{i}$ (and not on the
other players' payoff functions). Formally, such a dynamic is given by a
mapping for each player $i$ from the history $h_{t-1}=(a_{1},...,a_{t-1})$
and his own payoff function $u^{i}$ into $X^{i}=\Delta (A^{i})$ (player $i$%
's choice may be random); we will call such mappings \emph{uncoupled.} Let $%
x_{t}^{i}\in X^{i}$ denote the mixed action that player $i$ plays at time $%
t, $ and put $x_{t}=(x_{t}^{i})_{i\in N}\in X.$

The dynamics we consider are continuous variants of the \textquotedblleft
calibrated learning" introduced by Foster and Vohra (1997). \emph{Calibrated
learning} consists of each player best replying to calibrated forecasts on
the other players' strategies; it results in the joint distribution of play
(i.e., the time average of the $N$-tuples of strategies $a_{t}$) converging
in the long run to the set of correlated equilibria of the game. We consider 
\emph{continuously calibrated learning}, where stochastic classic
calibration is replaced with deterministic continuous calibration, and best
replying is replaced with continuous approximate best replying. Moreover,
the forecasts are now $N$-tuples of mixed strategies (in $\prod_{i}\Delta
(A^{i})$), rather than correlated mixtures (in $\Delta (\prod_{i}A^{i})$).

Formally, given $\varepsilon >0$ a \emph{continuously calibrated }$%
\varepsilon $-\emph{learning} dynamic is given by:

\begin{enumerate}
\item[(I)] A deterministic continuously calibrated procedure on $X$, which
yields at each time $t$ a forecast $c_{t}=(c_{t}^{i})_{i\in N}\in X$ on the
distribution of strategies of each player$.$

\item[(II)] For each player $i\in N$ a continuous $\varepsilon $-best-reply
function $\beta ^{i}:X\rightarrow X^{i};$ i.e., $\beta ^{i}(x)\in \mathrm{BR}%
_{\varepsilon }^{i}(x^{-i})$ for every $x^{-i}\in X^{-i}.$
\end{enumerate}

The dynamic consists of each player running the procedure in (I), generating
at time $t$ a forecast $c_{t}\in X;$ then each player $i$ plays at period $t$
the mixed strategy\footnote{%
Thus $\mathbb{P}\left[ a_{t}=a~|~h_{t-1}\right] =\prod_{i\in
N}x_{t}^{i}(a^{i})$ for every $a=(a^{i})_{i\in N}\in A,$ where $h_{t-1}$ is
the history and $x_{t}^{i}(a^{i})$ is the probability that $x_{t}^{i}\in
\Delta (A^{i})$ assigns to the pure strategy $a^{i}\in A^{i}.$} $%
x_{t}^{i}:=\beta ^{i}(c_{t})\in X^{i},$ where $\beta ^{i}$ is given by (II).
All players observe the strategy combination $a_{t}=(a_{t}^{i})_{i\in N}\in
A $ that has actually been played, and remember it. Let $\beta (x)=(\beta
^{i}(x))_{i\in N};$ thus, $\beta :X\rightarrow X$ is a continuous function.
We refer to $c_{t}\in X$ as the \emph{forecast}, $x_{t}=\beta (c_{t})\in X$
the \emph{behavior} (i.e., the mixed strategies played), and $a_{t}\in A$
the \emph{actions} (i.e., the realized pure strategies played ($c_{t},x_{t},$
and $a_{t}$ depend on the history).

Since for each player $i$ the approximate best reply condition in (II) makes
use \emph{only} of player $i$'s payoff function $u^{i},$ we can without loss
of generality choose $\beta ^{i}$ so as to depend only on $u^{i},$ which
makes the dynamic \emph{uncoupled} (see above).

The existence of a deterministic continuously calibrated procedure in 1 is
given by Theorem \ref{th:all-calib}(D)$;$ the existence of $\varepsilon $%
-approximate continuous best-reply mappings in (II) is well known$.$

Our result is:

\begin{theorem}
\label{th:cont learn}Let $\Gamma =(N,(A^{i})_{i\in N},(u^{i})_{i\in N})$ be
a finite game. For every $\varepsilon >0,$ a continuously calibrated $%
\varepsilon $-learning dynamic is an uncoupled dynamic and satisfies almost
surely 
\begin{equation}
\lim_{t\rightarrow \infty }\frac{1}{t}\left\vert \{s\leq t:x_{s}\in \mathrm{%
NE}(\varepsilon ^{\prime })\}\right\vert =1  \label{eq:NE(eps')}
\end{equation}%
for every\footnote{%
It does \emph{not} follow that we can take $\varepsilon ^{\prime
}=\varepsilon ;$ for instance, consider the case where at time $t$ we have
an $(\varepsilon +1/t)$-equilibrium. \textquotedblleft Almost surely"
applies to all $\varepsilon ^{\prime }>\varepsilon $ simultaneously (take a
sequence $\varepsilon _{n}^{\prime }$ decreasing to $\varepsilon ).$} $%
\varepsilon ^{\prime }>\varepsilon .$
\end{theorem}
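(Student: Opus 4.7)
The plan is to show that, almost surely, the mixed strategy combination actually played, $x_s = \beta(c_s)$, lies in $\mathrm{NE}(\varepsilon')$ for a fraction of periods tending to one. Since each $\beta^i(c_s) \in \mathrm{BR}^i_\varepsilon(c_s^{-i})$ by definition, and each $u^i$ is uniformly continuous (in fact multilinear) on the compact set $X$, it suffices to show that $\|c_s - x_s\| = \|\beta(c_s) - c_s\|$ is typically small. Once this is in hand, a short argument using the Lipschitz constants of the $u^i$ converts an $\varepsilon$-best reply to $c_s^{-i}$ into an $\varepsilon'$-best reply to $x_s^{-i}$ whenever $\|c_s - x_s\|$ is small enough.

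First, I would combine two convergences. Deterministic continuous calibration (Theorem \ref{th:all-calib}(D)) gives
\begin{equation*}
\frac{1}{t}\sum_{s=1}^{t} w(c_s)(a_s - c_s) \longrightarrow 0
\end{equation*}
for every continuous $w : X \to [0,1]$, surely, since the bound from Theorem \ref{th:all-calib}(D) is uniform over all action sequences and in particular applies to the realized one. Separately, because the procedure is deterministic, the forecast $c_s$ (and hence $x_s = \beta(c_s)$) is $h_{s-1}$-measurable, and $\mathbb{E}[a_s \mid h_{s-1}] = x_s$ by the product structure of the mixed play; so $w(c_s)(a_s - x_s)$ is a bounded martingale-difference sequence in $\mathbb{R}^m$, and Azuma--Hoeffding plus Borel--Cantelli yields $\frac{1}{t}\sum_s w(c_s)(a_s - x_s) \to 0$ almost surely. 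Subtracting, $\frac{1}{t}\sum_s w(c_s) F(c_s) \to 0$ a.s., where $F(c) := \beta(c) - c \in \mathbb{R}^m$ is continuous.

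Next I would pass from these test-function averages to $\|F\|$ itself. Applied to $w := [\pm F_j]_+ / M$ for each coordinate $F_j$ (with $M$ bounding $\max_j \|F_j\|_\infty$; these are continuous maps $X \to [0,1]$), the preceding display gives $\frac{1}{t}\sum_s F_j(c_s)^2 \to 0$ a.s.\ coordinate-wise, so summing and using Cauchy--Schwartz yields $\frac{1}{t}\sum_s \|F(c_s)\| \to 0$ a.s. A Markov step then bounds the fraction of $s$ with $\|F(c_s)\| \geq \eta$ by $(1/\eta)\cdot (1/t)\sum_s \|F(c_s)\|$, which vanishes. Choosing $\eta$ small enough that $\|c_s - x_s\| < \eta$ forces $x_s \in \mathrm{NE}(\varepsilon')$ yields \eqref{eq:NE(eps')}. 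Uncoupledness is automatic: the calibration procedure in (I) is common and uses only the jointly observed history, while in (II) each player applies $\beta^i$, which depends only on $u^i$. Finally, taking a sequence $\varepsilon'_n \downarrow \varepsilon$ on a countable intersection of full-probability events gives the simultaneous statement for all $\varepsilon' > \varepsilon$.

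The main obstacle I would expect is the interface between the sure calibration bound and the almost-sure martingale control, carried through simultaneously for the countable family of continuous $w$ needed to probe $F$. The clean way out is to work with a dense countable subfamily (the functions from the universal binning $\Pi_0$ of Proposition \ref{p:universal partition} suffice, as do the finitely many coordinate cutoffs above) and intersect the corresponding full-probability events; the martingale concentration is legitimate because $c_s$, being a function of $h_{s-1}$, is predictable, so there is no hidden dependence to worry about.
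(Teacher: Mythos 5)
Your proof is correct, and the route you take is genuinely different from the paper's in its middle step. Both proofs agree on the outer shell: first, combine the (sure) continuous-calibration bound for the realized sequence with a bounded-martingale SLLN argument (you use Azuma--Hoeffding + Borel--Cantelli; the paper cites Lo\`eve's strong law for dependent variables, same substance) to get $\frac{1}{t}\sum_s w(c_s)(x_s - c_s) \to 0$ a.s.\ for continuous test functions $w$, and at the end, use uniform continuity of the $u^i$ and the definition of $\beta^i$ as an $\varepsilon$-best reply to pass from ``$\|\beta(c_s)-c_s\|$ small'' to ``$x_s \in \mathrm{NE}(\varepsilon')$'' (the paper's Claim (iii)). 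Where you diverge is in how you extract smallness of $\|F(c_s)\| = \|\beta(c_s) - c_s\|$ on most periods. The paper's Claim (ii) runs a covering argument: it takes a countable family of tent functions $w_{d,\ell}$ centered at rational $d$, shows that if $\|\alpha(d)\|$ is not small then $n_t(w_{d,\ell})/t \to 0$, and then covers $\{x : \|\alpha(x)\| \geq 3\delta\}$ by finitely many such tents. You instead probe $F$ directly with the $2m$ test functions $[\pm F_j]_+/M$, which turns the $j$-th coordinate of the test-function average into $\pm\frac{1}{tM}\sum_s [F_j(c_s)]_\pm^2 \to 0$, hence $\frac{1}{t}\sum_s \|F(c_s)\|^2 \to 0$ in one line, followed by Cauchy--Schwarz and Markov. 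This is a cleaner decomposition: it uses only finitely many test functions (so the a.s.\ intersection is trivial), avoids the uniform-continuity modulus of $\alpha$ and the subcover, and goes directly to an $\ell^2$ bound on $F$ rather than through a density-of-visits argument. The paper's tent-function approach is more generic in that it does not require the test functions to be built from $\beta$, but here $\beta$ is fixed and given, so your shortcut is legitimate and arguably sharper.

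One small point worth making explicit when writing this up: the sure bound $\sup_{\mathbf{a}_t}\|g_t(w)\|\to 0$ from Theorem \ref{th:all-calib}(D) (via Proposition \ref{p:cc<=>g}) is what licenses applying calibration to the \emph{realized} action path, even though that path is generated adaptively (indeed, the actions depend on the revealed forecast); this is exactly the ``leakiness'' the paper emphasizes, and it is the reason the dynamic must use the deterministic FP procedure rather than a stochastic MM one.
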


\bigskip

The proof goes by the following three claims. (i) If the forecasts $c_{t}$
are continuously calibrated for the sequence of pure strategies $a_{t},$
they are continuously calibrated also for the sequence of mixed strategies $%
x_{t}$ (because, by the law of large numbers, the long-run averages of the $%
a_{t}$'s and of the $x_{t}$'s are close, as $x_{t}$ is the expectation of $%
a_{t}$ conditional on the history). (ii) For every $c$, in every period
where the forecast is $c$ the mixed play is the same, namely, $x=\beta (c),$
and so if the sequence $c_{t}$ is continuously calibrated for the sequence $%
x_{t}$ then $c_{t}\approx x_{t}=\beta (c_{t}).$ (iii) From $c_{t}\approx
x_{t}$ we immediately get $x_{t}=\beta (c_{t})\approx \beta (x_{t})$ (apply
the continuous map $\beta $ to both sides), which says that the approximate
best reply to $x_{t}$ is $x_{t}$ itself, and thus $x_{t}$ is an approximate
Nash equilibrium.

The crucial feature of our dynamic is that continuous calibration is
preserved despite the fact that the actions depend on the forecasts (this
leakiness property does not hold for classic, probabilistic, calibration);
in addition, in each period all players have the same (deterministic)
forecast.

In Appendix \ref{sus-a:cont-learn} we provide a number of comments and
extensions.

\bigskip

\begin{proof}
For every $w:X\rightarrow \lbrack 0,1]$ let $\tilde{g}_{t}(w)$ be the
per-period gap for the mixed $x_{t}$ instead of the pure $a_{t},$ i.e.,%
\begin{equation}
\tilde{g}_{t}(w):=\frac{1}{t}\sum_{s=1}^{t}w(c_{s})(x_{s}-c_{s})=g_{t}(w)+%
\frac{1}{t}\sum_{s=1}^{t}w(c_{s})(a_{s}-x_{s}).  \label{eq:g-tilde}
\end{equation}

$\bullet $ \emph{Claim (i).} Let $W_{0}$ be a countable collection of
continuous functions $w:X\rightarrow \lbrack 0,1].$ Then for almost all
infinite histories $h_{\infty }=(c_{t},a_{t})_{t=1}^{\infty }$ we have%
\footnote{%
One can show, as in Section \ref{sus:binning-cc}, that $\lim_{t}\tilde{g}%
_{t}(w)=0$ for all continuous $w:X\rightarrow \lbrack 0,1]$ holds for almost
all infinite histories (however, there is no uniformity over the action
sequences).}%
\begin{equation*}
\lim_{t\rightarrow \infty }\tilde{g}_{t}(w)=0\;\;\text{for all }w\in W_{0}.
\end{equation*}

\emph{Proof}. First, for every $h_{\infty }$ we have $\lim_{t\rightarrow
\infty }g_{t}(w)=0$\ for all $w\in W_{0}$ by continuous calibration (see
Proposition \ref{p:cc<=>g}).

Second, for each $w$ we have $\mathbb{E}\left[ w(c_{s})a_{s}~|~h_{s-1}\right]
=w(c_{s})\mathbb{E}\left[ a_{s}~|~h_{s-1}\right] =w(c_{s})x_{s}$ (given $%
h_{s-1}$ the forecast $c_{s},$ and thus $w(c_{s}),$ is determined, and so
only $a_{s}$ is random; its conditional expectation is\footnote{%
Recall that we identify the pure actions $a^{i}\in A^{i}$ with the unit
vectors in the simplex $X^{i}.$} $\mathbb{E}\left[ a_{s}~|~h_{s-1}\right]
=\beta (c_{s})=x_{s}$). The Strong Law of Large Numbers for Dependent Random
Variables (Theorem 32.1.E in Lo\`{e}ve, 1978) says that 
\begin{equation}
\lim_{t\rightarrow \infty }\frac{1}{t}\sum_{s=1}^{t}\left( Y_{s}-\mathbb{E}%
\left[ Y_{s}|h_{s-1}\right] \right) =0\;\;\text{(a.s.)}  \label{eq:slln}
\end{equation}%
for bounded random variables $Y_{t};$ since the $w(c_{s})a_{s}$ are all
bounded by $\gamma ,$ and there are countably many $w$ in $W_{0}$, we obtain%
\begin{equation*}
\lim_{t\rightarrow \infty }\frac{1}{t}%
\sum_{s=1}^{t}(w(c_{s})a_{s}-w(c_{s})x_{s})=0\text{\ \ for all }w\in W_{0}\;%
\text{(a.s.).}
\end{equation*}%
Using (\ref{eq:g-tilde}) yields the claim. $\square $

$\bullet $ \emph{Claim (ii).} For every $\delta >0$ we have%
\begin{equation*}
\lim_{t\rightarrow \infty }\frac{1}{t}\left\vert \{s\leq t:\left\Vert \beta
(c_{s})-c_{s}\right\Vert \geq \delta \}\right\vert =0
\end{equation*}%
for almost every $h_{\infty }.$

\emph{Proof.} For every $d\in X$ and $\ell >0$ let $w_{d,\ell }(x):=[1-\ell
\left\Vert x-d\right\Vert ]_{+}$ (a \textquotedblleft tent" function on $X);$
thus, $w_{d,\ell }(x)>0$ if and only if $x\in B(d;1/\ell ).$ Let $D$ be the
set of points in $X$ with rational coordinates; put $W_{0}:=\{w_{d,\ell
}:d\in D,\ell \geq 1\};$ then $W_{0}$ is a countable collection of
continuous functions from $X$ to $[0,1],$ and so Claim (i) applies to it.

Take $\delta >0;$ the function $\alpha (x):=\beta (x)-x$ is uniformly
continuous on the compact set $X,$ and so there is an integer $\ell >0$ such
that $\left\Vert x-y\right\Vert \leq 1/\ell $ implies $\left\Vert \alpha
(x)-\alpha (y)\right\Vert \leq \delta .$ If $d\in D$ satisfies $\left\Vert
\alpha (d)\right\Vert \geq 2\delta ,$ then for every $x$ with $w_{d,\ell
}(x)>0,$ i.e., $x\in B(d;1/\ell ),$ we have $\left\Vert \alpha (x)-\alpha
(d)\right\Vert \leq \delta ,$ which yields%
\begin{equation*}
\left\Vert \sum_{s=1}^{t}w_{d,\ell }(c_{s})\alpha
(c_{s})-\sum_{s=1}^{t}w_{d,\ell }(c_{s})\alpha (d)\right\Vert \leq \delta
\sum_{s=1}^{t}w_{d,\ell }(c_{s}),
\end{equation*}%
that is, $\left\Vert t\tilde{g}_{t}(w_{d,\ell })-\alpha (d)n_{t}(w_{d,\ell
})\right\Vert \leq \delta n_{t}(w_{d,\ell }).$ Therefore 
\begin{equation*}
\left\Vert t\tilde{g}_{t}(w_{d,\ell })\right\Vert \geq \left( \left\Vert
\alpha (d)\right\Vert -\delta \right) n_{t}(w_{d,\ell })\geq \delta
n_{t}(w_{d,\ell }).
\end{equation*}
By Claim (i), this implies that 
\begin{equation}
\frac{1}{t}n_{t}(w_{d,\ell })\rightarrow 0  \label{eq:n->0}
\end{equation}%
almost surely as $t\rightarrow \infty .$

Take a finite set $D_{0}\subset D$ such that $\cup _{d\in D_{0}}B(d;1/\ell
)\supset X,$ and put $D_{1}:=\{d\in D_{0}:\left\Vert \alpha (d)\right\Vert
\geq 2\delta \}.$ The compact set $Y:=\{x\in X:\left\Vert \alpha
(x)\right\Vert \geq 3\delta \}$ is covered by $\cup _{d\in D_{1}}B(\delta
;1/\ell )$ (because $\left\Vert \alpha (x)\right\Vert \geq 3\delta $ implies
that there is $d\in D_{0}$ such that $y\in B(d;1/\ell ),$ and then $%
\left\Vert \alpha (d)\right\Vert \geq \left\Vert \alpha (x)\right\Vert
-\delta \geq 2\delta ),$ and the continuous function $\sum_{d\in
D_{1}}w_{d,\ell }(x)$ is positive on $Y,$ and thus it is $\geq \eta $ for
some $\eta >0,$ yielding%
\begin{equation*}
\sum_{d\in D_{1}}n_{t}(w_{d,\ell })=\sum_{s=1}^{t}\sum_{d\in D_{1}}w_{d,\ell
}(c_{s})\geq \eta \cdot \left\vert \{s\leq t:\left\Vert \alpha
(c_{s})\right\Vert \geq 3\delta \}\right\vert .
\end{equation*}%
Using (\ref{eq:n->0}) and replacing $\delta $ with $\delta /3$ completes the
proof$.$ $\square $

$\bullet $ \emph{Claim (iii).} For every $\varepsilon ^{\prime }>\varepsilon 
$ there is $\delta >0$ such that $\left\Vert \beta (c)-c\right\Vert \leq
\delta $ implies that $\beta (c)$ is a Nash $\varepsilon ^{\prime }$%
-equilibrium.

\emph{Proof.} By the uniform continuity of the functions $\beta ^{i}$ and $%
u^{i},$ let $\delta >0$ be such that $\left\Vert x-y\right\Vert \leq \delta $
implies $\left\vert u^{i}(\beta ^{i}(x),x^{-i})-u^{i}(\beta
^{i}(y),x^{-i})\right\vert \leq \varepsilon ^{\prime }-\varepsilon $ for
every $i.$ Taking $x=\beta (c)$ and $y=c$ yields $|u^{i}(\beta
^{i}(x),x^{-i})-u^{i}(x)|\leq \varepsilon ^{\prime }-\varepsilon $, which
together with $u^{i}(\beta ^{i}(x),x^{-i})\geq
\max_{y^{i}}u^{i}(y^{i},x^{-i})-\varepsilon $ by the choice of $\beta ^{i}$
as an $\varepsilon $-best reply proves the claim. $\square $

The theorem follows from Claims (ii) and (iii).
\end{proof}

\section{The Minimax Universe vs. the Fixed Point Universe\label{sus:fp-mm
universes}}

The forecast-hedging integration of the various calibration approaches that
we have carried out has pointed to a clear distinction between two separate,
parallel, universes: the \textsc{minimax} universe and the \textsc{fixed
point} universe.\footnote{%
This applies to dimension $m\geq 2$ (there is no distinction for dimension $%
m=1,$ where both minimax and fixed point reduce to the intermediate value
theorem).} Table 1 summarizes the differences exhibited in the present paper.%
\renewcommand{\arraystretch}{1.5}%

\begin{table}[h] \centering%
\begin{tabular}{c||c|c|}
& \textsc{minimax} & \textsc{fixed point} \\ \hline\hline
\multicolumn{1}{r||}{forecast-hedging} & stochastic & deterministic \\ \hline
\multicolumn{1}{r||}{procedure type} & MM & FP \\ \hline
\multicolumn{1}{r||}{calibration} & classic & continuous \\ \hline
\multicolumn{1}{r||}{equilibrium} & correlated & Nash \\ \hline
\multicolumn{1}{r||}{dynamic result} & time average & period-by-period \\ 
\hline
\end{tabular}%
\caption{The minimax and the fixed point universes}\label{TableKey}%
\end{table}%

\appendix{}

\section{Appendix\label{s:appendix}}

\subsection{General Binnings\label{sus-a:cont-calib}}

In this appendix we show that the limitation to countable binnings is
without loss of generality.

Sums over arbitrary sets are defined, as usual, as the supremum over all
finite sums, i.e., $\sum_{i\in I}z_{i}:=\sup \{\sum_{i\in J}z_{i}:J\subseteq
I,$ $|J|<\infty \}$ (for real $z_{i}).$

Define a \emph{general binning} as $\Pi =(w_{i})_{i\in I},$ where $I$ is an
arbitrary set of bins and $w_{i}:C\rightarrow \lbrack 0,1]$ for every $i\in
I,$ such that $\sum_{i\in I}w_{i}(c)=1$ for every $c\in C.$ The general
binning $\Pi $ is \emph{continuous} if all $w_{i}$ are continuous functions.
The $\Pi $-calibration score is $K_{t}^{\Pi }:=\sum_{i\in I}\left\Vert
g_{t}(w_{i})\right\Vert .$

For classic calibration, $K_{t}$ is the maximal score, i.e.,%
\begin{equation*}
K_{t}=\max_{\Pi }K_{t}^{\Pi },
\end{equation*}%
where $\Pi $ ranges over all general binnings. Indeed, Lemma \ref{l:norm-W}
holds for arbitrary collections $(w_{j})_{j\in J}$ (apply it to finite sets
and then take the supremum), and so $K_{t}^{\Pi }\leq K_{t}$ for every
general binning $\Pi .$

For continuous calibration, which is defined as $\Pi $-calibration for every 
\emph{countable} continuous binning $\Pi ,$ we show that it implies $\Pi $%
-calibration for every continuous \emph{general} binning $\Pi $ as well.

\begin{proposition}
If the deterministic procedure $\sigma $ is continuously calibrated then it
is $\Pi $-calibrated for every continuous general binning $\Pi .$
\end{proposition}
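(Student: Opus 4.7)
The goal is to reduce a general continuous binning to a countable one, so that the hypothesis (continuous calibration, defined over countable binnings) applies. The plan is to show that a continuous general binning $\Pi=(w_i)_{i\in I}$ has at most countably many non-identically-zero components, and that the zero components can be harmlessly discarded.

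First I would exploit the two constraints together: $\sum_{i\in I}w_i(c)=1$ for each fixed $c$ forces only countably many $w_i(c)$ to be nonzero (a nonnegative uncountable sum cannot converge), and continuity of each $w_i$ forces its positivity set to be open. Pick a countable dense subset $\{c_k\}_{k\geq 1}$ of the compact (hence separable) set $C$, and let $I_0:=\{i\in I:w_i(c_k)>0\text{ for some }k\}$. As the countable union of countable sets, $I_0$ is countable. If $w_i\not\equiv 0$, then $\{c\in C:w_i(c)>0\}$ is a nonempty open set and thus contains some $c_k$, so $i\in I_0$. Hence $w_i\equiv 0$ for every $i\in I\setminus I_0$.

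Next, let $\Pi_0:=(w_i)_{i\in I_0}$. Since the discarded functions are identically zero, we still have $\sum_{i\in I_0}w_i(c)=\sum_{i\in I}w_i(c)=1$ for every $c\in C$, so $\Pi_0$ is a countable continuous binning. Moreover, by linearity of $g_t$ (and $g_t(0)=0$),
\begin{equation*}
K_t^{\Pi}=\sum_{i\in I}\left\Vert g_t(w_i)\right\Vert =\sum_{i\in I_0}\left\Vert g_t(w_i)\right\Vert =K_t^{\Pi_0}.
\end{equation*}

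Finally, since $\sigma$ is continuously calibrated, it is $\Pi_0$-calibrated, so $\sup_{\mathbf{a}_t}K_t^{\Pi_0}\to 0$, and by the above identity the same holds for $K_t^{\Pi}$. The only delicate step is the reduction to countably many nonzero components, and this is handled by combining the pointwise $\ell^1$-summability of $(w_i(c))_i$ with the continuity of each $w_i$ on a separable space; the rest is routine.
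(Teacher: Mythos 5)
Your proof is correct, and it takes a genuinely different and arguably cleaner route than the paper's. The paper proves a Dini-type statement for nets: for every $\varepsilon>0$ there is a \emph{finite} $J^*\subseteq I$ with $\bigl\Vert\sum_{i\in I\setminus J^*}w_i\bigr\Vert\leq\varepsilon$, then splits $K_t^{\Pi}$ into a finite head (controlled by Proposition \ref{p:cc<=>g}) and a small tail (controlled by Lemma \ref{l:norm-W}). You instead show that the general case is not actually more general: since the pointwise $\ell^1$ constraint forces $\{i:w_i(c)>0\}$ to be countable for each $c$, and continuity makes each positivity set open, running over a countable dense subset of the separable compact set $C$ captures every non-identically-zero $w_i$ in a countable subfamily $I_0$; the discarded indices carry $w_i\equiv 0$, so $\Pi_0=(w_i)_{i\in I_0}$ is a countable continuous binning with $K_t^{\Pi}=K_t^{\Pi_0}$ identically. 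This explains structurally why the paper's \textquotedblleft for convenience'' restriction to countable binnings loses nothing, rather than merely verifying that the scores still converge; the paper's argument, on the other hand, is more parallel to the $\varepsilon$-truncation already used in (\ref{eq:tail-gt}) for countably infinite binnings. Both are valid; yours is shorter and, I think, more illuminating.
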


\begin{proof}
Let $\Pi =(w_{i})_{i\in I}$ be a continuous general binning.

We claim that for every $\varepsilon >0$ there is a finite set $J^{\ast
}\subseteq I$ such that%
\begin{equation}
\left\Vert \sum_{i\in I\backslash J^{\ast }}w_{i}\right\Vert \leq
\varepsilon .  \label{eq:J*}
\end{equation}%
This follows from Dini's theorem for nets (instead of sequences); the proof
is the same, and as it is short we provide it here for completeness. Let $%
\mathcal{J}$ denote the collection of finite subsets of $I.$ For every $J\in 
\mathcal{J}$ let $D_{J}:=\{c\in C:\sum_{i\in J}w_{i}(c)>1-\varepsilon \};$
then $D_{J}$ is an open set (because $J$ is finite and so $\sum_{i\in
J}w_{i} $ is continuous), and $\cup _{J\in \mathcal{J}}D_{J}=C$ (because for
every $c $ we have $\sup_{J\in \mathcal{J}}\sum_{i\in J}w_{i}(c)=1,$ and so
there is $J\in \mathcal{J}$ for which the sum is $>1-\varepsilon ).$ The set 
$C$ is compact, and so there is a finite subcover $\cup
_{k=1}^{r}D_{J_{k}}=C.$ Put $J^{\ast }:=\cup _{k=1}^{r}J_{k};$ then $J^{\ast
}$ is a finite set, and $D_{J^{\ast }}=C$ (because $D_{J^{\ast }}\supseteq
D_{J_{k}}$ follows from $J^{\ast }\supseteq J_{k}).$ Thus for every $c\in C$
we have $\sum_{i\in J^{\ast }}w_{i}(c)>1-\varepsilon ,$ and so $\sum_{i\in
I\backslash J^{\ast }}w_{i}<\varepsilon ,$ which yields (\ref{eq:J*}).

Therefore, by Lemma \ref{l:norm-W}, 
\begin{equation*}
\sum_{i\in I\backslash J^{\ast }}\left\Vert g_{t}(w_{i})\right\Vert \leq
\varepsilon K_{t}\leq \gamma \varepsilon .
\end{equation*}%
For any $J\in \mathcal{J}$ we then have 
\begin{eqnarray*}
\sup_{\mathbf{a}_{t}}\sum_{i\in J}\left\Vert g_{t}(w_{i})\right\Vert &\leq
&\sum_{i\in J\cap J^{\ast }}\sup_{\mathbf{a}_{t}}\left\Vert
g_{t}(w_{i})\right\Vert +\sup_{\mathbf{a}_{t}}\sum_{i\in J\backslash J^{\ast
}}\left\Vert g_{t}(w_{i})\right\Vert \\
&\leq &\sum_{i\in J^{\ast }}\sup_{\mathbf{a}_{t}}\left\Vert
g_{t}(w_{i})\right\Vert +\gamma \varepsilon .
\end{eqnarray*}%
Taking the supremum over $J\in \mathcal{J}$ yields%
\begin{equation*}
\sup_{\mathbf{a}_{t}}\sum_{i\in I}\left\Vert g_{t}(w_{i})\right\Vert \leq
\sum_{i\in J^{\ast }}\sup_{\mathbf{a}_{t}}\left\Vert g_{t}(w_{i})\right\Vert
+\gamma \varepsilon ;
\end{equation*}%
the right-hand side converges to $\gamma \varepsilon $ as $t\rightarrow
\infty $ by (\ref{eq:G/t-pointwise}) of Proposition \ref{p:cc<=>g} (as $%
J^{\ast }$ is finite). Since $\varepsilon >0$ is arbitrary, the limit of the
left-hand side is $0.$
\end{proof}

\subsection{Continuous Calibration Implies Smooth and Weak Calibration\label%
{sus-a:smooth-calib}}

This appendix recalls the definitions of the existing concepts of smooth and
weak calibration, and proves that they are both implied by the stronger
concept of continuous calibration (see Section \ref{s:calibration-def}).

Let $\varepsilon \geq 0$ and $L<\infty .$ For a collection $\Lambda
=(\Lambda _{x})_{x\in C}$ of $L$-Lipschitz functions\footnote{%
A function $f$ is $L$-Lipschitz if $|f(z)-f(z^{\prime })|\leq L\left\Vert
z-z^{\prime }\right\Vert $ for all $z,z^{\prime }$ in the domain of $f.$} $%
\Lambda _{x}:C\rightarrow \lbrack 0,1],$ let%
\begin{equation}
\tilde{K}_{t}^{\Lambda }%
{\;:=\;}%
\frac{1}{t}\sum_{x\in C}n_{t}(x)\left\Vert e_{t}(\Lambda _{x})\right\Vert .
\label{eq:K-Lambda}
\end{equation}%
A deterministic procedure is $(\varepsilon ,L)$-\emph{smoothly calibrated}
(Foster and Hart 2018) if%
\begin{equation*}
\varlimsup_{t\rightarrow \infty }\left( \sup_{\mathbf{a}_{t},\Lambda }\tilde{%
K}_{t}^{\Lambda }\right) \leq \varepsilon ,
\end{equation*}%
where the supremum is over all action sequences $\mathbf{a}$ and all
collections of $L$-Lipschitz functions $\Lambda =(\Lambda _{x})_{x\in C}$ as
above; it is $(\varepsilon ,L)$-\emph{weakly calibrated} (Kakade and Foster
2004, Foster and Kakade 2006) if%
\begin{equation*}
\varlimsup_{t\rightarrow \infty }\left( \sup_{\mathbf{a}_{t},w}\left\Vert
g_{t}(w)\right\Vert \right) \leq \varepsilon ,
\end{equation*}%
where the supremum is over all action sequences $\mathbf{a}$ and all $L$%
-Lipschitz functions $w:C\rightarrow \lbrack 0,1].$

While formula (\ref{eq:K-Lambda}) for $\tilde{K}_{t}^{\Lambda }$ resembles
formula (\ref{eq:K-Pi}) for $K_{t}^{\Pi },$ there are two differences. The
first is that the weight of $\left\Vert e_{t}(\Lambda _{x})\right\Vert $ in $%
\tilde{K}_{t}^{\Lambda }$ is \emph{not }the total weight $n_{t}(\Lambda
_{x}) $ of $\Lambda _{x}$ (which is the denominator of $e_{t}(\Lambda
_{x})), $ but rather the number of times $n_{t}(x)$ that $x$ has been used
as a forecast up to time $t$ (the sum in $\tilde{K}_{t}^{\Lambda }$ is thus
the finite sum over $x\in \{c_{1},...,c_{t}\}).$ The second is that the
functions $\Lambda _{x}$ do not form a binning; i.e., they do not add up to $%
\mathbf{1}.$ The second difference does not really matter (it can be
addressed, for instance, by rescaling the $\Lambda _{x}$ functions, which
does not affect the $e_{t}(\Lambda _{x}),$ because $e_{t}(w)$ is homogeneous
of degree $0$ in $w). $ The first difference is more significant; it
necessitates the use of certain approximations, such as the small cubes in
Lemma 11 in Foster and Hart (2018) and the resulting Proposition 13 there.%
\footnote{%
The bound on $\tilde{K}_{t}^{\Lambda }$ that is obtained in the proof of
Proposition 13 in Foster and Hart (2018) plays the same role as Proposition %
\ref{p:cc<=>g} here.}

By contrast, continuous calibration uses the more appropriate weights $%
n_{t}(\Lambda _{x});$ this streamlines the analysis and simplifies the
proofs. Moreover, continuous calibration yields a \textquotedblleft
universal" smoothly and weakly calibrated procedure for \emph{all} parameter
values $(\varepsilon ,L)$ at once (recall footnote \ref{ftn:doubling}).

\begin{proposition}
\label{p:cc=>smooth,weak}A deterministic procedure $\sigma $ that is
continuously calibrated is $(0,L)$-smoothly calibrated and $(0,L)$-weakly
calibrated for every $0<L<\infty .$
\end{proposition}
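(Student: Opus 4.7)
The plan is to attack weak and smooth calibration separately, with the former being immediate from an Arzel\`a--Ascoli compactness argument and the latter requiring a cube-based reduction to continuous calibration.

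For weak calibration, the family $\mathcal{L}$ of $L$-Lipschitz functions $w:C\to[0,1]$ is uniformly bounded and uniformly equicontinuous on the compact set $C$, hence totally bounded in the supremum norm by Arzel\`a--Ascoli. Given $\varepsilon>0$, I would pick a finite $\varepsilon/(2\gamma)$-net $\{w_1,\dots,w_K\}\subset\mathcal{L}$; by Proposition \ref{p:cc<=>g} applied to each $w_k$, there is $T$ with $\sup_{\mathbf{a}_t}\|g_t(w_k)\|<\varepsilon/2$ for all $k$ and all $t\geq T$. The Lipschitz bound (\ref{eq:norm-g}) then yields $\|g_t(w)\|\leq\|g_t(w_k)\|+\gamma\|w-w_k\|_\infty<\varepsilon$ for every $w\in\mathcal{L}$ and every $t\geq T$, uniformly over $\mathbf{a}_t$. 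Sending $\varepsilon\to 0$ gives $(0,L)$-weak calibration.

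For smooth calibration, the plan is to fix $\delta>0$, take a cube decomposition $\{B_j\}_{j=1}^J$ of $C$ with diameter $\leq\delta$ and representatives $y_j\in B_j$, and build an associated Lipschitz partition of unity $(w_j^\delta)$ of the type described in Section \ref{sus:binning-cc}. Continuous calibration applied to this binning gives that $\sum_j\|g_t(w_j^\delta)\|\to 0$ uniformly over $\mathbf{a}_t$, so the cell-aggregated errors vanish. Exploiting the $L$-Lipschitz property of each $\Lambda_x$ to write $\Lambda_x(c)=\sum_j \Lambda_x(y_j)w_j^\delta(c)+\mathrm{O}(L\delta)$, the numerator $\sum_s\Lambda_x(c_s)(a_s-c_s)$ of $n_t(\Lambda_x)\,e_t(\Lambda_x)$ decomposes, up to an $L\delta\gamma t$-sized residual, as $\sum_j\Lambda_x(y_j)\cdot t g_t(w_j^\delta)$. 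Substituting into $\tilde K_t^\Lambda=(1/t)\sum_x n_t(x)\|e_t(\Lambda_x)\|$ and summing, the principal term is controlled by $\sum_j\|g_t(w_j^\delta)\|$ and hence vanishes, while the residual is $\mathrm{O}(L\delta\gamma)$; letting $t\to\infty$ and then $\delta\to 0$ proves $(0,L)$-smooth calibration.

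The main obstacle is precisely the smooth-calibration step: the score $\tilde K_t^\Lambda$ weights $\|e_t(\Lambda_x)\|$ by $n_t(x)/t$ rather than by the natural $n_t(\Lambda_x)/t$ that would make $(\Lambda_x)$ behave like a binning. The $1/n_t(\Lambda_x)$ normalization inside $e_t(\Lambda_x)$ can therefore magnify the $L\delta$-level cube approximation whenever $\Lambda_x$ puts little mass on the bulk of the forecasts. The resolution---following the small-cubes argument of Lemma 11 in Foster and Hart (2018)---is to split the sum over $x$ into a ``diagonal'' part, where $\Lambda_x$ charges the cell containing $x$ non-trivially and the error is directly tied to the continuous-calibration score on the cells, and a ``tail'' part, whose aggregate weight $\sum n_t(x)/t$ is small enough to be absorbed by the crude bound $\|e_t(\Lambda_x)\|\leq\gamma$.
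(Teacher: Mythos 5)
Your first paragraph (weak calibration) is correct and is essentially the paper's argument: Arzel\`a--Ascoli compactness of the $L$-Lipschitz unit ball, the uniform convergence over finite nets from Proposition \ref{p:cc<=>g}, and the stability estimate (\ref{eq:norm-g}) give exactly $\varlimsup_t\sup_{\mathbf{a}_t,w}\left\Vert g_t(w)\right\Vert =0$.

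For smooth calibration, however, you and the paper part ways. The paper, having established $(0,L)$-weak calibration, simply invokes Proposition 13 of Foster and Hart (2018), which is precisely the statement that weak calibration implies smooth calibration; there is no further work. You instead attempt to reconstruct the small-cubes argument directly from continuous calibration, and the reconstruction has a real gap. After writing $\left\Vert e_t(\Lambda_x)\right\Vert \leq \frac{1}{n_t(\Lambda_x)}\bigl[\sum_j\Lambda_x(y_j)\,t\left\Vert g_t(w_j^\delta)\right\Vert + L\delta\gamma t\bigr]$ and plugging into $\tilde K_t^\Lambda=(1/t)\sum_x n_t(x)\left\Vert e_t(\Lambda_x)\right\Vert$, the ratio $n_t(x)/n_t(\Lambda_x)$ does not disappear; the resulting expression is a double weighted average and is \emph{not} controlled by $\sum_j\left\Vert g_t(w_j^\delta)\right\Vert$ as you assert (take, e.g., $\Lambda_x\equiv\mathbf 1$ for all $x$, where $\sum_x\Lambda_x(y_j)$ is as large as the number of distinct forecasts). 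Likewise, your ``tail'' set is defined by a condition on $\Lambda_x$, which is a \emph{different} function for each $x$, so nothing forces its aggregate $n_t(x)/t$-weight to be small; that is precisely the bookkeeping that Lemma 11 and Proposition 13 of Foster and Hart (2018) carry out, and it is not the crude dichotomy you describe. Since you have already proved $(0,L)$-weak calibration, the cleanest fix is the paper's: cite Proposition 13 of Foster and Hart (2018) for the implication weak $\Rightarrow$ smooth rather than re-deriving it.
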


\begin{proof}
The convergence to zero in (\ref{eq:G/t-pointwise}) is uniform over any
finite set of continuous $w$'s, and thus, by (\ref{eq:norm-g}), over any
compact set of $w$'s---in particular, the set of $L$-Lipschitz functions $%
w:C\rightarrow \lbrack 0,1],$ which is compact by the Arzel\`{a}--Ascoli
theorem. This is precisely $(0,L)$-weak calibration; by Proposition 13 in
Foster and Hart (2018), it implies $(0,L)$-smooth calibration.
\end{proof}

\subsection{Outgoing Results\label{sus-a:outgoing}}

We provide here a number of comments and extensions to the results of
Section \ref{s:FH tools}.

\bigskip

\noindent \textbf{Remarks on Theorem \ref{th:hairy FP}.}

\emph{(a) }Theorem \ref{th:hairy FP} was proved using Brouwer's fixed point
theorem; conversely, Brouwer's theorem can be proved using Theorem \ref%
{th:hairy FP}. Indeed, let $g:C\rightarrow C$ be a continuous function.
Theorem \ref{th:hairy FP} applied to $f(x)=g(x)-x$ yields $y\in C$ such
that, in particular, $f(y)\cdot (g(y)-y)\leq 0$ (because $g(y)\in C);$ this
is $f(y)\cdot f(y)\leq 0,$ and so $f(y)=0,$ i.e., $g(y)=y.$

\emph{(b)} Brouwer's fixed point theorem is widely used to prove results in
many areas. Most such proofs use ingenious constructions, which are needed
to make the values of the continuous function lie in its domain, i.e., have
the function map $C$ \emph{into }$C.$ By contrast, Theorem \ref{th:hairy FP}
puts no restriction on the range of the function (beyond it being in the
Euclidean space of the same dimension); one only needs to ensure that a
point $y$ that satisfies (\ref{eq:OFP}) has the desired properties.

To demonstrate how Theorem \ref{th:hairy FP} may yield simpler proofs,
consider the famous result on the existence of Nash equilibria in finite
games (Nash 1950). Let $(N,(S^{i})_{i\in N},(u^{i})_{i\in N})$ be a finite
game in strategic form. Let $C:=\Pi _{i\in N}\Delta (S^{i})\subset \mathbb{R}%
^{m}$ where $m:=\sum_{i\in N}|S^{i}|,$ and, for every $x=(x^{i})_{i\in N}\in
C,$ put $f^{i}(x):=(u^{i}(s^{i},x^{-i}))_{s^{i}\in S^{i}}$ (this is the
vector of $i$'s payoffs for all his pure strategies against $x^{-i}),$ and $%
f(x):=(f^{i}(x))_{i\in N}.$ The function $f:C\rightarrow \mathbb{R}^{m}$ is
a polynomial and thus continuous, and so Theorem \ref{th:hairy FP} gives $%
y\in C$ such that $f(y)\cdot (c-y)\leq 0$ for every $c\in C.$ Taking in
particular $c=(x^{i},y^{-i})$ for any $i\in N$ and $x^{i}\in \Delta (S^{i}),$
we get $0\geq f(y)\cdot (c-y)=f^{i}(y)\cdot
(x^{i}-y^{i})=u^{i}(x^{i},y^{-i})-u^{i}(y^{i},y^{-i}),$ which shows that $y$
is a Nash equilibrium. Moreover, when the game is symmetric, putting $%
C:=\Delta (S^{1})$ and $f(x):=(u^{1}(s,x,...,x))_{s\in S}$ for every $x\in C$
yields the existence of a symmetric Nash equilibrium. Compare this short
proof to the usual proofs that are based directly on Brouwer's fixed point
theorem, which are much more intricate.

\bigskip

\noindent \textbf{Remarks on Theorem \ref{th:hairy mM}.}

\emph{(a)} The factor $\delta $ on the right-hand side of (\ref{eq:expfy})
can be lowered to $\delta _{0}\equiv \delta _{0}(D)<\delta $ (see the proof
of Theorem \ref{th:hairy mM}) by a limit argument, which is however no
longer a finite minimax construct. Indeed, take a sequence $B_{n}$ of finite 
$\delta _{n}$-grids of $C$ with $\delta _{n}$ decreasing to $0;$ we then get
a sequence of probability distributions $\eta _{n}\in \Delta (D)$ such that 
\begin{equation}
\mathbb{E}_{y\sim \eta _{n}}\left[ f(y)\cdot (x-y)\right] \leq (\delta
_{0}+\delta _{n})\,\mathbb{\mathbb{E}}_{y\sim \eta _{n}}\left[ \left\Vert
f(y)\right\Vert \right]  \label{eq:delta_n}
\end{equation}%
for every $n\geq 1$ and every $x\in C.$ Since $D$ is a finite set the
sequence $\eta _{n}$ has a limit point $\eta \in \Delta (D),$ say $\eta
_{n^{\prime }}\rightarrow \eta $ for a subsequence $n^{\prime }\rightarrow
\infty ;$ for each $x\in C$ taking the limit of (\ref{eq:delta_n}) as $%
n^{\prime }\rightarrow \infty $ then yields\footnote{%
The subsequence $n^{\prime }$ is such that $\eta _{n^{\prime }}(y)$ is a
convergent subsequence, with limit $\eta (y),$ for each one of the finitely
many elements $y$ of $D;$ then $\mathbb{E}_{y\sim \eta _{n^{\prime }}}\left[
g(y)\right] =\sum_{y\in D}\eta _{n^{\prime }}(y)g(y)\rightarrow \sum_{y\in
D}\eta (y)g(y)=\mathbb{E}_{y\sim \eta }\left[ g(y)\right] $ as $n^{\prime
}\rightarrow \infty $ for every real function $g$ on $D.$}%
\begin{equation}
\mathbb{E}_{y\sim \eta }\left[ f(y)\cdot (x-y)\right] \leq \delta _{0}\,%
\mathbb{\mathbb{E}}_{y\sim \eta }\left[ \left\Vert f(y)\right\Vert \right] .
\label{eq:delta0}
\end{equation}

\emph{(b)} The bound in (\ref{eq:delta0}) is tight: $\delta _{0}$ cannot be
lowered. Indeed, take a point $x_{0}\in C$ for which $\mathrm{dist}%
(x_{0},D)=\delta _{0},$ and consider the function $f:D\rightarrow \mathbb{R}%
^{m}$ defined by $f(y)=(x_{0}-y)/\left\Vert x_{0}-y\right\Vert $ for every $%
y\in D$; we have $\left\Vert f(y)\right\Vert =1$ and $f(y)\cdot
(x_{0}-y)=\left\Vert x_{0}-y\right\Vert \geq \delta _{0}$ for every $y\in D.$
\bigskip

\noindent \textbf{Remarks on Corollary \ref{c:outgoing-MM}.}

\emph{(a) }In Corollary \ref{c:outgoing-MM} one can get $\eta \in \Delta (C)$
with support of size at most $m+2$ (rather than $m+3),$ because when using
Carath\'{e}odory's theorem the last coordinate of $F(y),$ namely, $%
\left\Vert f(y)\right\Vert ,$ is no longer needed as it is replaced by the
constant $\sup_{x\in C}\left\Vert f(x)\right\Vert .$

\emph{(b) }If $f$ is a \emph{continuous} function then the result of
Corollary \ref{c:outgoing-MM} holds also for\footnote{%
Of course, Theorem \ref{th:hairy FP} yields in this case a stronger result,
i.e., a point $y$ rather than a distribution $\eta .$ However, the result
for $\varepsilon =0$ is obtained here by a minimax, rather than a fixed
point, theorem.} $\varepsilon =0.$ Indeed, take a sequence $\varepsilon
_{n}\rightarrow 0^{+}.$ For each $n,$ Corollary \ref{c:outgoing-MM} yields a
distribution $\eta _{n}$ on $C$ such that $\mathbb{E}_{y\sim \eta _{n}}\left[
f(y)\cdot (c-y)\right] \leq \varepsilon _{n}$ for every $c\in C.$ All the
distributions $\eta _{n}$ can be taken to have support of size at most $m+2$
(see Remark (a) above), and so the sequence $\eta _{n}$ has a limit point%
\footnote{%
Take a subsequence $n^{\prime }$ where all the $m+2$ values and all the $m+2$
probabilities converge (thus we do \emph{not} need to appeal to Prokhorov's
theorem); denote by $\eta $ the limit distribution. Then $\mathbb{E}_{y\sim
\eta _{n^{\prime }}}\left[ g(y)\right] \rightarrow _{n^{\prime }}\mathbb{E}%
_{y\sim \eta }\left[ g(y)\right] $ for any \emph{continuous} function $g$
(because then $p_{n^{\prime }}\rightarrow p$ and $y_{n^{\prime }}\rightarrow
y$ implies $p_{n^{\prime }}g(y_{n^{\prime }})\rightarrow pg(y)).$} $\eta ,$
which is also a distribution on $C$ with support of size at most $m+2;$ then 
$\mathbb{E}_{_{y\sim \eta }}\left[ f(y)\cdot (c-y)\right] \leq 0$ for every $%
c\in C$ (because $\eta _{n^{\prime }}\rightarrow \eta $ implies $\mathbb{E}%
_{y\sim \eta _{n^{\prime }}}\left[ f(y)\cdot (c-y)\right] \rightarrow
_{n^{\prime }}\mathbb{E}_{y\sim \eta }\left[ f(y)\cdot (c-y)\right] ,$ since 
$f(y)\cdot (c-y)$ is a continuous function of $y).$

\emph{(c)} If $f$ is \emph{not} continuous the result of Corollary \ref%
{c:outgoing-MM} need not hold for $\varepsilon =0;$ take for example $%
C=[0,2],$ and $f(x)=1$ if $x<1$ and $f(x)=-1$ if $x\geq 1.$ Assume that $%
\eta \in \Delta (C)$ satisfies $\mathbb{E}_{y\sim \eta }\left[ f(y)\cdot
(c-y)\right] \leq 0$ for all $c\in C.$ Taking $c=1$ gives $\mathbb{E}_{y\sim
\eta }\left[ f(y)\cdot (1-y)\right] \leq 0;$ but $f(y)\cdot (1-y)\geq 0$ for
all $y\in \lbrack 0,2],$ with equality only for $y=1,$ and so $\eta $ must
put unit mass on $y=1;$ but then $\mathbb{E}_{y\sim \eta }\left[ f(y)\cdot
(c-y)\right] =$ $f(1)\cdot (c-1)=1-c,$ which is positive for $c<1.$

\emph{(d) }The minimax theorem follows from Corollary \ref{c:outgoing-MM}.
First, consider a \emph{symmetric} finite two-person zero-sum game, given by
an $m\times m$ payoff matrix $B$ that is skew-symmetric (i.e., $B^{\top
}=-B).$ Take $C$ to be the unit simplex in $\mathbb{R}^{m}$ (i.e., the set
of mixed strategies), and let $f:C\rightarrow \mathbb{R}^{m}$ be given by $%
f(x):=Bx.$ Corollary \ref{c:outgoing-MM} together with Remark (b) above
implies that there exists a distribution $\eta $ on $C$ (with finite
support) such that $\mathbb{E}_{_{y\sim \eta }}\left[ y^{\top }B^{\top }(c-y)%
\right] =\mathbb{E}_{_{y\sim \eta }}\left[ By\cdot (c-y)\right] \leq 0$ for
every $c\in C.$ Now $y^{\top }B^{\top }y=0$ for every $y\in C$ by symmetry
(i.e., $B^{\top }=-B),$ and so $\mathbb{E}_{_{y\sim \eta }}\left[ y^{\top
}B^{\top }c\right] \leq 0$ for every $c\in C.$ Thus $z:=\mathbb{E}_{_{y\sim
\eta }}[y]\in C$ satisfies $z^{\top }Bc=-zB^{\top }c\geq 0$ for every $c\in
C,$ and so $z$ is a minimax strategy that guarantees the value $0;$ by
symmetry, $z$ is also a maximin strategy that guarantees the value $0,$ and
we are done. Finally, for a general two-person zero-sum game, use a standard
symmetrization argument (e.g., Luce and Raiffa 1957, A6.8).

\bigskip

\noindent \textbf{Remark on Theorem \ref{th:dhfp}.}

\emph{\ }If $C$ is a convex polytope and the set $D$ consists of the
vertices of a simplicial subdivision of $C,$ then we can define $\widetilde{f%
}$ by linearly interpolating inside each simplex; this implies that we
moreover have $\mathbb{E}_{y\sim \eta }\left[ y\right] =z$ (however, to keep
satisfying this additional property may require $\eta $ to have support of
size $2m+1$ instead of $m+1).$

\subsection{Deterministic and Stochastic Forecast-Hedging\label{sus-a:FH}}

We explain here why the proofs of (D) and (S) of Theorem \ref{th:FH} are
somewhat different: we use $S_{t}$ and the derived $\varphi _{t-1}$ in (D),
and $X_{t}$ and the derived $\psi _{t-1}$ in (S).

One can check that the $S_{t}$ approach in the (S) setup gives $%
\varlimsup_{t}(1/t^{2})S_{t}=\varlimsup_{t}\sum_{i=1}^{I}\left\Vert
g_{t}(w_{i})\right\Vert ^{2}\leq \varepsilon ^{2}.$ What this yields is $%
\varlimsup_{t}\mathbb{E}\left[ K_{t}^{\Pi }\right] =\varlimsup_{t}\mathbb{E}%
\left[ \sum_{i=1}^{I}||g_{t}(w_{i})||\right] \leq \varepsilon \sqrt{I}$
(consider for instance the case where the $||g_{t}(w_{i})||^{2}$ are all
equal to $\varepsilon ^{2}/I),$ which however does not suffice. Indeed, for
classic calibration the binning comes from an $\varepsilon $-grid of $C$
(see the proof of Theorem \ref{th:all-calib}(S)), and so its size $I$ is of
the order of $1/\varepsilon ^{m},$ which makes the bound $\varepsilon \sqrt{I%
}$ not useful beyond dimension $m=1.$ The more delicate approach with $X_{t}$
gets rid of this annoying $\sqrt{I}$ factor. The issue does not arise in
(D), since there we have $\varepsilon =0,$ and so $\varepsilon \sqrt{I}=0$
for every finite binning, which extends to countable continuous binnings by (%
\ref{eq:tail-gt}).

Going in the other direction, while we could use the $X_{t}$ approach for
(D) as well (it will not affect the result), the $S_{t}$ approach is
preferable as it is shorter and simpler.

\subsection{Calibration with Probability One\label{sus-a:prob1}}

In this appendix we show how to strengthen the results on classic
calibration (Theorem \ref{th:all-calib}(S) and (AD) in Section \ref%
{s:calibrated procedures}) from convergence in expectation to convergence
almost surely (\textquotedblleft a.s.").

The definition of classic calibration in Section \ref{sus:classic calib-def}
requires that the calibration score $K_{t}$ be small \emph{in expectation}
(i.e., that $\mathbb{E}\left[ K_{t}\right] $ be less than $\varepsilon $ in
the limit). One may require in addition that $K_{t}$ be small \emph{almost
surely }(i.e., with probability one); that is, for every action sequence $%
\mathbf{a,}$%
\begin{equation}
\varlimsup_{t\rightarrow \infty }K_{t}\leq \varepsilon \;\;\text{(a.s.).}
\label{eq:as}
\end{equation}%
We now show that the procedures constructed in Section \ref{s:calibrated
procedures} do indeed satisfy this additional requirement.

In the proof of Theorem \ref{th:FH}(S), the sequence $Y_{t}$ is uniformly
bounded (by $2\gamma \cdot \gamma +\gamma ^{2}=3\gamma ^{2}$), and so we can
apply the Strong Law of Large Numbers for Dependent Random Variables (see (%
\ref{eq:slln})): 
\begin{equation*}
\frac{1}{t}\sum_{s=1}^{t}\left( Y_{s}-\mathbb{E}\left[ Y_{s}|h_{s-1}\right]
\right) \rightarrow _{t\rightarrow \infty }0\;\;\text{(a.s.).}
\end{equation*}%
Since $\mathbb{E}\left[ Y_{s}|h_{s-1}\right] =\mathbb{E}_{s-1}\left[ Y_{s}%
\right] \leq \varepsilon ^{2}$ by (\ref{eq:Y-V}), it follows that $%
\varlimsup_{t\rightarrow \infty }(1/t)\sum_{s=1}^{t}Y_{s}\leq \varepsilon
^{2}$ (a.s.). Together with $\lim_{t\rightarrow \infty
}(1/t)\sum_{s=1}^{t}Z_{s}=0$ by (\ref{eq:Z}), we get $\varlimsup_{t%
\rightarrow \infty }(1/t)X_{t}\leq \varepsilon ^{2}$ (a.s.), and thus $%
\varlimsup_{t\rightarrow \infty }K_{t}^{\Pi }\leq \varepsilon $ (a.s.)
(because $(K_{t}^{\Pi })^{2}\leq (1/t)X_{t}).$ Applying this to the binning $%
\Pi $ of Theorem \ref{th:exist}(S) yields (\ref{eq:as}), for stochastic
classic calibration (Theorem \ref{th:all-calib}(S)) as well as for almost
deterministic classic calibration (Theorem \ref{th:all-calib}(AD)).

\subsection{Continuously Calibrated Learning\label{sus-a:cont-learn}}

In this appendix we provide a number of comments and extensions on the
result on game dynamics of Section \ref{s:dynamics}.

\textbf{Remarks on Theorem \ref{th:cont learn}. }\emph{(a) }The forecasts
are also approximate Nash equilibria:\footnote{%
Which is not surprising, as $c_{t}$ and $x_{t}$ are close (see Claim (ii)).
Of course, what we care about are not the forecasts, but the behaviors; this
is why the result in Theorem \ref{th:cont learn} is stated for $x_{t}.$}%
\begin{equation*}
\lim_{t\rightarrow \infty }\frac{1}{t}\left\vert \{s\leq t:c_{s}\in \mathrm{%
NE}(\varepsilon ^{\prime })\}\right\vert =1\text{\hspace{0.2in}}\mathrm{%
(a.s.)}
\end{equation*}%
for every $\varepsilon ^{\prime }>\varepsilon .$ This follows by replacing
Claim (iii) with: 

\emph{Claim (iii'). }For every $\varepsilon ^{\prime }>\varepsilon $ there
is $\delta >0$ such that $\left\Vert \beta (c)-c\right\Vert \leq \delta $
implies that $c\in \mathrm{NE}(\varepsilon ^{\prime })$ (for the proof, take 
$\delta >0$ such that $\left\Vert x-y\right\Vert \leq \delta $ implies $%
|u^{i}(x^{i},y^{-i})-u^{i}(y)|\leq \varepsilon ^{\prime }-\varepsilon $ for
every $i).$

\emph{(b) }A statement that is equivalent to (\ref{eq:NE(eps')}) is%
\begin{equation}
\lim_{t\rightarrow \infty }\frac{1}{t}\sum_{s=1}^{t}\mathrm{dist}(x_{s},%
\mathrm{NE}(\varepsilon ))=0,  \label{eq:dist(NE)}
\end{equation}%
which is the way it appears in Kakade and Foster (2004) (and the same
applies to the statement in (a) above). Indeed, for every $\varepsilon
^{\prime }>\varepsilon $ let $\delta (\varepsilon ^{\prime }):=\inf_{x\notin 
\mathrm{NE}(\varepsilon ^{\prime })}\mathrm{dist}(x,\mathrm{NE}(\varepsilon
))$ and $\rho (\varepsilon ^{\prime }):=\sup_{x\in \mathrm{NE}(\varepsilon
^{\prime })}\mathrm{dist}(x,\mathrm{NE}(\varepsilon ));$ then it is
straightforward to see that $\delta (\varepsilon ^{\prime })>0$ and $%
\lim_{\varepsilon ^{\prime }\searrow \varepsilon }\rho (\varepsilon ^{\prime
})=0$ (use the compactness of $X$ and the continuity of the functions $%
u^{i}).$ Therefore $\delta (\varepsilon ^{\prime })\mathbf{1}_{x\notin 
\mathrm{NE}(\varepsilon ^{\prime })}\leq $ $\mathrm{dist}(x,\mathrm{NE}%
(\varepsilon ))\leq \rho (\varepsilon ^{\prime })+\sqrt{m}\mathbf{1}%
_{x\notin \mathrm{NE}(\varepsilon ^{\prime })}$ (because $\sup_{x,y\in
X}\left\Vert x-y\right\Vert \leq \sqrt{m}).$ Using the first inequality for
each $x_{s}$ shows that (\ref{eq:dist(NE)}) implies (\ref{eq:NE(eps')}), and
using the second inequality for each $x_{s}$ shows that (\ref{eq:NE(eps')})
implies (\ref{eq:dist(NE)}) (the limit is $\leq \rho (\varepsilon ^{\prime
}) $ for every $\varepsilon ^{\prime }>\varepsilon ,$ and thus $0,$ because $%
\lim_{\varepsilon ^{\prime }\searrow \varepsilon }\rho (\varepsilon ^{\prime
})=0).$

\emph{(c)} The forecasting procedure in (I) depends \emph{only} on the sizes
of the strategy spaces $(m^{i})_{i\in N}.$

\emph{(d)} The play in each period $t$ need \emph{not} be independent across
the players, so long as the marginals are $(\beta ^{i}(c_{t}))_{i\in N}$.

\end{document}